\newtheorem{definition}{Definition}
\newtheorem{theorem}{Theorem}
\newtheorem{lemma}{Lemma}
\newtheorem{corollary}{Corollary}
\newtheorem{observation}{Observation}
\newtheorem{proposition}{Proposition}
\DeclarePairedDelimiter{\ceil}{\lceil}{\rceil}
\definecolor{darkgreen}{rgb}{0,0.6,0}
\newcommand{\kibitz}[2]{\ifnum\Comments=1{\color{#1}{#2}}\fi}
\title{The Complexity of Growing a Graph}
\author[1]{George Mertzios}
\author[2]{Othon Michail}
\author[3]{George Skretas}
\author[2]{Paul G. Spirakis}
\author[2]{Michail Theofilatos}
\affil[1]{Department of Computer Science, Durham University, Durham, UK}
\affil[2]{Department of Computer Science, University of Liverpool, Liverpool, UK}
\affil[3]{Hasso Plattner Institute, University of Potsdam, Potsdam, Germany}
\begin{document}


\maketitle

\begin{abstract}

We study a new algorithmic process of graph growth which starts from a single initial vertex and operates in discrete time-steps, called \emph{slots}. In every slot, the graph grows via two operations (i) vertex generation and (ii) edge activation. The process completes at the last slot where a (possibly empty) subset of the edges of the graph will be removed. Removed edges are called \emph{excess edges}. The main problem investigated in this paper is: Given a target graph $G$, we are asked to design an algorithm that outputs such a process growing $G$, called a \emph{growth schedule}. Additionally, the algorithm should try to minimize the total number of slots $k$ and of excess edges $\ell$ used by the process. We provide both positive and negative results for different values of $k$ and $\ell$, with our main focus being either schedules with sub-linear number of slots or with zero excess edges.
\end{abstract}

\paragraph{Keywords.} Dynamic graph, temporal graph, cop-win graph, graph process,   polynomial-time algorithm,  lower bound,  NP-complete, hardness result

\section{Introduction}
\label{sec:introduction}

\subsection{Motivation}
\label{subsec:motivation}

Growth processes are found in a variety of networked systems. In nature, crystals grow from an initial nucleation or from a ``seed'' crystal and a process known as embryogenesis develops sophisticated multicellular organisms, by having the genetic code control tissue growth \cite{chan2019molecular,Hernandez2019signaling}. In human-made systems, sensor networks are being deployed incrementally to monitor a given geographic area \cite{andrew2002incremental,chatzi2008adaptive}, social-network groups expand by connecting with new individuals \cite{Cordeiro18}, DNA self-assembly automatically grows molecular shapes and patterns starting from a seed assembly \cite{rothemund2006folding,doty2012theory,woods2013active}, and high churn or mobility can cause substantial changes in the size and structure of computer networks \cite{becchetti2021expansion,augustine2012towards}. Graph-growth processes are central in some theories of relativistic physics. For example, in dynamical schemes of causal set theory, causets develop from an initial emptiness via a tree-like birth process, represented by dynamic Hasse diagrams \cite{bombelli1987space,rideout1999classical}.

Though diverse in nature, all these are examples of systems sharing the notion of an underlying graph-growth process. In some, like crystal formation, tissue growth, and sensor deployment, the implicit graph representation is bounded-degree and embedded in Euclidean geometry. In others, like social-networks and causal set theory, the underlying graph might be free from strong geometric constraints but still be subject to other structural properties, as is the special structure of causal relationships between events in casual set theory.

Further classification comes in terms of the source and control of the network dynamics. Sometimes, the dynamics are solely due to the environment in which a system is operating, as is the case in DNA self-assembly, where a pattern grows via random encounters with free molecules in a solution. In other applications, the network dynamics are, instead, governed by the system. Such dynamics might be determined and controlled by a centralized program or schedule, as is typically done in sensor deployment, or be the result of local independent decisions of the individual entities of the system. Even in the latter case, the entities are often running the same global program, as do the cells of an organism by possessing and translating the same genetic code.

Inspired by such systems, we study a high-level, graph-theoretic abstraction of network-growth processes. We do not impose any strong \emph{a priori} constraints, like geometry, on the graph structure and restrict our attention to \emph{centralized} algorithmic control of the graph dynamics. We do include, however, some weak conditions on the permissible dynamics, necessary for non-triviality of the model and in order to capture realistic abstract dynamics. One such condition is ``locality'', according to which a newly introduced vertex $u^\prime$ in the neighborhood of a vertex $u$, can only be connected to vertices within a reasonable distance $d-1$ from $u$. 
At the same time, we are interested in growth processes that are ``efficient'', under meaningful abstract measures of efficiency. We consider two such measures, to be formally defined later, the \emph{time} to grow a given target graph and the number of auxiliary connections, called \emph{excess edges}, employed to assist the growth process. For example, in cellular growth, a useful notion of time is the number of times all existing cells have divided and is usually polylogarithmic in the size of the target tissue or organism. In social networks, it is quite typical that new connections can only be revealed to an individual $u^\prime$ through its connection to another individual $u$ who is already a member of a group. Later,  $u^\prime$ can drop its connection to $u$ but still maintain some of its connections to $u$'s group. The dropped connection $uu^\prime$ can be viewed as an excess edge, whose creation and removal has an associated cost, but was nevertheless necessary for the formation of the eventual neighborhood of $u^\prime$.     

The present study is also motivated by recent work on dynamic graph and network models \cite{michail2018elements,michail2016introduction,casteigts2012time}. 
Research on temporal graphs studies the algorithmic and structural properties of graphs $\mathcal{G}=(\mathcal{V},\mathcal{E})$, in which $\mathcal{V}$ is a set of time-vertices and $\mathcal{E}$ a set of time-edges of the form $(u,t)$ and $(e,t)$, respectively, $t$ indicating the discrete time at which an instance of vertex $u$ or edge $e$ is available. A substantial part of work in this area has focused on the special case of temporal graphs in which $\mathcal{V}$ is \emph{static}, i.e.,~time-invariant \cite{kempe2002connectivity,berman1996vulnerability,mertzios2019temporal, enright2019deleting, zschoche2020complexity,AkridaMSZ20}.
In overlay networks \cite{AACW05,aspnes2007skip,gotte2019faster,Gilbert2020Dconstructor,thorsten2021time} and distributed network reconfiguration \cite{michail2022distributed}, $\mathcal{V}$ is a static set of processors that control in a decentralized way the edge dynamics. Even though, in this paper, we do not study our dynamic process from a distributed perspective, it still shares with those models both the fact that dynamics are \emph{active}, i.e.,~algorithmically controlled, and the locality constraint on the creation of new connections.
Nevertheless, our main motivation is theoretical interest. As will become evident, the algorithmic and structural properties of the considered graph-growth process give rise to some intriguing theoretical questions and computationally hard combinatorial optimization problems. Apart from the aforementioned connections to dynamic graph and network models, we shall reveal interesting similarities to cop-win graphs \cite{LinSS12,Bandelt91,poston71,chepoi98}. It should also be mentioned that there are other well-studied models and processes of graph growth, not obviously related to the ones considered here, such as random graph generators \cite{bollobas2001,KKPR99}. While initiating this study from a centralized and abstract viewpoint, we anticipate that it can inspire work on more applied models, including geometric ones and models in which the growth process is controlled in a decentralized way by the individual network processors.  Note that centralized upper bounds can be translated into (possibly inefficient) first distributed solutions, while lower bounds readily apply to the distributed case. There are other recent studies considering the centralized complexity of problems with natural distributed analogues, as is  the work of Scheideler and Setzer on the centralized complexity of transformations for overlay networks \cite{scheideler2019complexity} and of some of the authors of this paper on geometric transformations for programmable matter \cite{michail2019transformation}.

\subsection{Our Approach}
\label{sec:approach}

We study the following centralized graph-growth process. The process, starting from a single initial vertex $u_0$ and applying vertex-generation and edge-modification operations, grows a given target graph $G$. It operates in discrete time-steps, called slots. In every slot, it generates at most one new vertex $u^\prime$ for each existing vertex $u$ and connects it to $u$. This is an operation abstractly representing entities that can replicate themselves or that can attract new entities in their local neighborhood or group. Then, for each new vertex $u^\prime$, it connects $u^\prime$ to any (possibly empty) subset of the vertices within a ``local'' radius around $u$, described by a distance parameter $d$, essentially representing that radius plus 1, i.e., as measured from $u^\prime$. Finally, it removes any (possibly empty) subset of edges whose removal does not disconnect the graph, before moving on to the next slot. These edge-modification operations are essentially capturing, at a high level, the local dynamics present in most of the applications discussed previously. In these applications, new entities typically join a local neighborhood or a group of other entities, which then allows them to easily connect to any of the local entities. Moreover, in most of these systems, existing connections can be easily dropped by a local decision of the two endpoints of that connection. \footnote{Despite locality of new connections, a more global effect is still possible. One is for the degree of a vertex $u$ to be unbounded (e.g.,~grow with the number of vertices). Then $u^\prime$, upon being generated, can connect to an unbounded number of vertices within the ``local'' radius of $u$. Another would be to allow the creation of connections between vertices generated in the past, which would enable local neighborhoods to gradually grow unbounded through transitivity relations. In this work, we do allow the former but not the latter. That is, for any edge $(u,u')$ generated in slot $t$, it must hold that $u$ was generated in some slot $t_{past}<t$ while $u'$ was generated in slot $t$. Other types of edge dynamics are left for future work.} The rest of this paper exclusively focuses on $d=2$. 
It is not hard to observe that, without additional considerations, any target graph can be grown by the following straightforward process. In every slot $t$, the process generates a new vertex $u_t$ which it connects to $u_0$ and to all neighbors of $u_0$. The graph grown by this process by the end of slot $t$, is the clique $K_{t+1}$, thus, any $K_n$ is grown by it within $n-1$ slots. As a consequence, any target graph $G$ on $n$ vertices can be grown by extending the above process to first grow $K_n$ and then delete all edges in $E(K_n)\setminus E(G)$, at the end of the last slot. Such a clique growth process maximizes both complexity parameters that are to be minimized by the developed processes. One is the time to grow a target graph $G$, to be defined as the number of slots used by the process to grow $G$, and the other is the total number of deleted edges during the process, called excess edges. The above process always uses $n-1$ slots and may delete up to $\Theta(n^2)$ edges for sparse graphs, such as a path  graph or a planar graph. 

There is an improvement of the clique process, which connects every new vertex $u_t$ to $u_0$ and to exactly those neighbors $v$ of $u_0$ for which $vu_t$ is an edge of the target graph $G$. At the end, the process deletes those edges incident to $u_0$ that do not correspond to edges in $G$, in order to obtain $G$. If $u_0$ is chosen to represent the maximum degree, $\Delta(G)$, vertex of $G$, then it is not hard to see that this process uses $n-1-\Delta(G)$ excess edges, while the number of slots remains $n-1$ as in the clique process. However, we shall show that there are (poly)logarithmic-time processes using close to linear excess edges for some of those graphs. In general, processes considered \emph{efficient} in this work will be those using (poly)logarithmic slots and linear (or close to linear) excess edges.

The goal of this paper is to investigate the algorithmic and structural properties of such processes of graph growth, with the main focus being on studying the following combinatorial optimization problem, which we call the \emph{Graph Growth Problem}. In this problem, a centralized algorithm is provided with a target graph $G$, usually from a graph family $F$, and non-negative integers $k$ and $\ell$ as its input. The goal is for the algorithm to compute, in the form of a \emph{growth schedule} for $G$, such a process growing $G$ with at most $k$ slots and using at most $\ell$ excess edges, if one exists. All algorithms we consider are polynomial-time.\footnote{Note that this reference to \emph{time} is about the running time of an algorithm computing a growth schedule. But the length of the growth schedule is another representation of time: the time required by the respective growth process to grow a graph. To distinguish between the two notions of time, we will almost exclusively use the term \emph{number of slots} to refer to the length of the growth schedule and \emph{time} to refer to the running time of an algorithm generating the schedule.}

For an illustration of the discussion so far, consider the graph family $F_{star}=\{G \;|\; G$ is a star on $n=2^\delta$ vertices$\}$ and assume that edges are activated within local distance $d=2$. We describe a simple algorithm returning a time-optimal and linear excess-edges growth process, for any target graph $G\in F_{star}$ given as input. To keep this exposition simple, we do not give $k$ and $\ell$ as input-parameters to the algorithm. The process computed by the algorithm, shall always start from $G_0=(\{u_0\},\emptyset)$. In every slot $t=1,2,\ldots,\delta$ and every vertex $u\in V(G_t)$ the process generates a new vertex $u^\prime$, which it connects to $u$. If $t>1$ and $u\neq u_0$, it then activates the edge $u_0u^\prime$, which is at distance 2, and removes the edge $uu^\prime$. It is easy to see that by the end of slot $t$, the graph grown by this process is a star on $2^t$ vertices centered at $u_0$, see Figure \ref{fig:star}. Thus, the process grows the target star graph $G$ within $\delta=\log n$ slots. By observing that $2^t/2 -1$ edges are removed in every slot $t$, it follows that a total of $\sum_{1\leq t\leq \log n} 2^{t-1} -1 < \sum_{1\leq t\leq \log n} 2^t = O(n)$ excess edges are used by the process. Note that this algorithm can be easily designed to compute and return the above growth schedule for any $G\in F_{star}$ in time polynomial in the size $|\langle G\rangle|$ of any reasonable representation of $G$.

\begin{figure}[!hbtp]
   \centering{
        \begin{subfigure}{0.20\textwidth}
        \includegraphics[width=\textwidth]{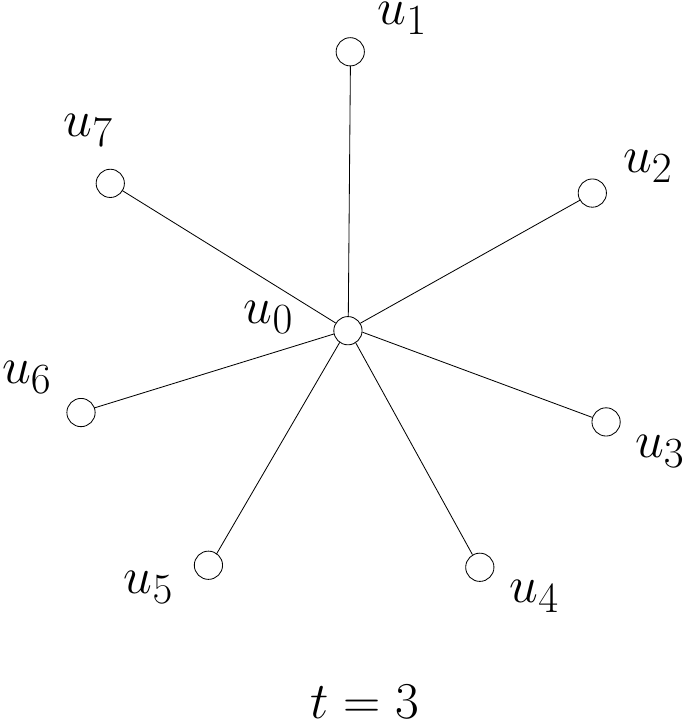}
        \caption{}
        \label{fig:star-1}
        \end{subfigure}
	\hspace{0.6cm}
        \begin{subfigure}{0.20\textwidth}
        \includegraphics[width=\textwidth]{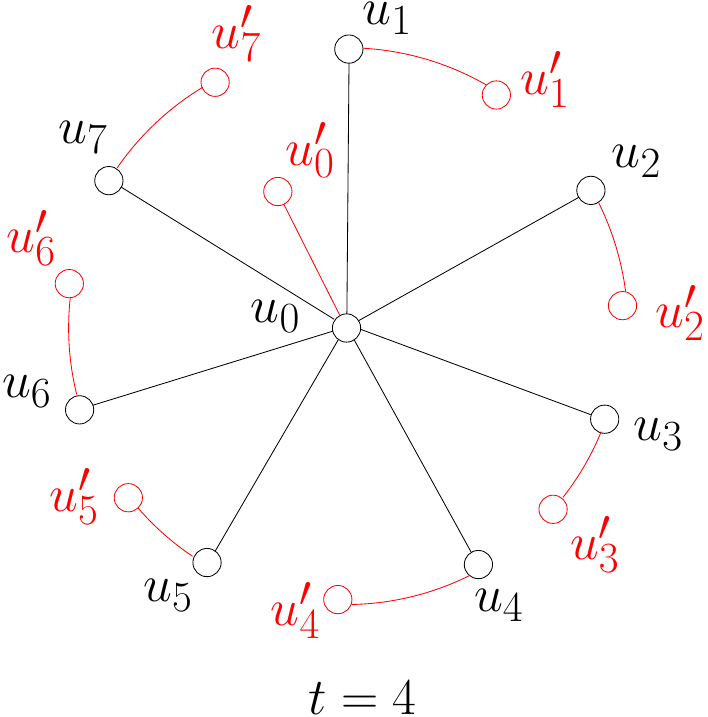}
        \caption{}
        \label{fig:star-2}
        \end{subfigure}
	\hspace{0.6cm}
        \begin{subfigure}{0.20\textwidth}
        \includegraphics[width=\textwidth]{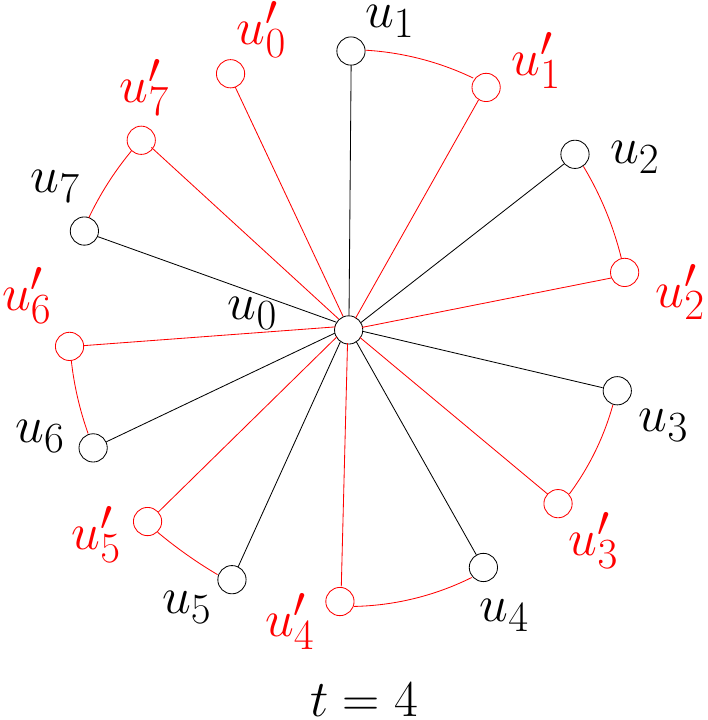}
        \caption{}
        \label{fig:star-3}
        \end{subfigure}
	\hspace{0.6cm}
        \begin{subfigure}{0.20\textwidth}
        \includegraphics[width=\textwidth]{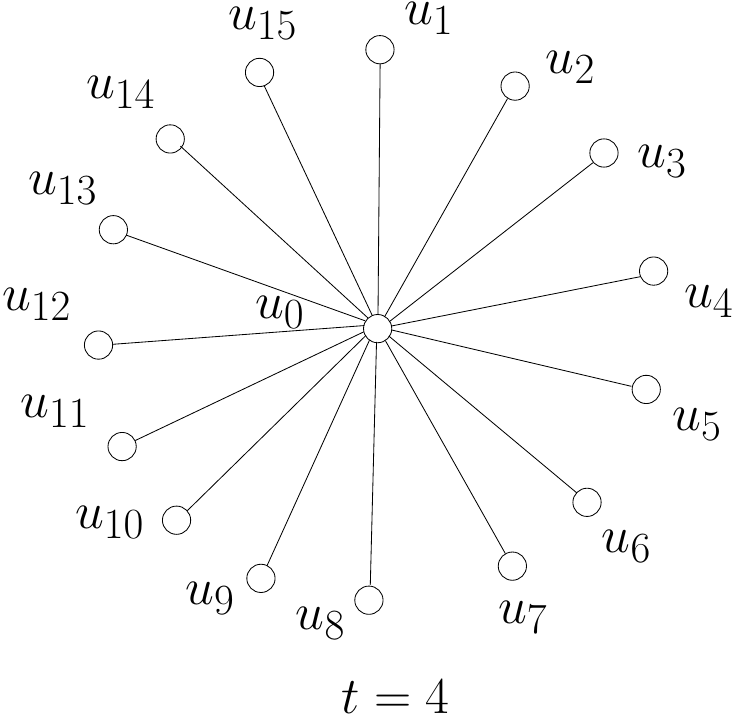}
        \caption{}
        \label{fig:star-4}
        \end{subfigure}
        }
   \caption{The operations of the star graph process in slot $t=4$. (a) A star of size $2^3$ grown by the end of slot 3. (b) For every $u_i$, a vertex $u'_i$ is generated by the process and is connected to $u_i$. (c) New vertices $u'_i$ are connected to $u_0$. (d) Edges between peripheral-vertices are being removed to obtain the star of size $2^4$ grown by the end of slot 4. Here, we also rename the vertices for clarity.} \label{fig:star}
\end{figure}

Note that there is a natural trade-off between the number of slots and the number of excess edges that are required to grow a target graph. That is, if we aim to minimize the number of slots (resp.~of excess edges) then the number of excess edges (resp.~slots) increases. To gain some insight into this trade-off, consider the example of a path graph $G$ on $n$ vertices $u_0,u_1,...,u_{n-1}$, where $n$ is even for simplicity. 
If we are not allowed to activate any excess edges, then the only way to grow $G$ is to always extend the current path from its endpoints, which implies that a schedule that grows $G$ must have at least $\frac{n}{2}$ slots.  
Conversely, if the growth schedule has to finish after $\log n$ slots, then $G$ can only be grown by activating $\Omega(n)$ excess edges.

In this paper, we mainly focus on this trade-off between the number of slots and the number of excess edges that are needed to grow a specific target graph $G$.
In general, given a growth schedule $\sigma$, any excess edge can be removed just after the last time it is used as a ``relay'' for the activation of another edge. In light of this, an algorithm computing a growth schedule can spend linear additional time to optimize the slots at which excess edges are being removed. 
A complexity measure capturing this is the \emph{maximum excess edges lifetime}, defined as the maximum number of slots for which an excess edge remains active. Our algorithms will generally be aiming to minimize this measure. 
When the focus is more on the trade-off between the slots and the number of excess edges, we might be assuming that all excess edges are being removed in the last slot of the schedule, as the exact timing of deletion makes no difference w.r.t. these two measures.

\subsection{Contribution}\label{sec:contribution}

Section 2 begins by presenting the model and problem statement. We continue with Section \ref{sec:subcases} where we investigate cases for edge activation distance $d=1$ and $d\geq3$. We finish the section by listing some basic properties and two basic sub-processes that are recurrent in our growth processes. 
In Section \ref{sec:edge_act_distance_2}, we provide some basic propositions that are crucial to understanding the limitations on the number of slots and the number of excess edges required for a growth schedule of a graph~$G$. We then use these propositions to provide some lower bounds on the number of slots. 

In Section \ref{sec:zero_excess}, we study the \emph{zero-excess growth schedule} problem, where the goal is to decide whether a graph $G$ has a growth schedule of $k$ slots and $\ell=0$ excess edges. We define the {\em candidate elimination ordering} of a graph $G$ as an ordering $v_1,v_2, \ldots,v_n$ of $V(G)$ so that for every vertex $v_i$, there is some $v_j$, where $j<i$ such that $N[v_i]\subseteq N[v_j]$ in the subgraph induced by $v_i, \ldots,v_n$, for $1\leq i\leq n$. We show that a graph has a growth schedule of $k = n - 1$ slots and $\ell = 0$ excess edges if and only if it is has a {\em candidate elimination ordering}. Our main positive result is a polynomial-time algorithm that computes whether a graph has a growth schedule of $k=\log n$ slots and $\ell = 0$ excess edges. If it does, the algorithm also outputs such a growth schedule. On the negative side, we give two strong hardness results. We first show that the decision version of the zero-excess growth schedule problem is NP-complete. Then, we prove that, for every $\varepsilon>0$, there is no polynomial-time algorithm which computes a $n^{\frac{1}{3} - \varepsilon}$-approximate zero-excess growth schedule, unless P~=~NP.

In Section \ref{sec:algorithms_basic_graph_classes}, we study growth schedules of (poly)logarithmic slots. We provide two polynomial-time algorithms. One outputs, for any tree graph, a growth schedule of $O(\log^2n)$ slots and only $O(n)$ excess edges, and the other outputs, for any planar graph, a growth schedule of $O(\log n)$ slots and $O(n\log n)$ excess edges. Finally, we give lower bounds on the number of excess edges required to grow a graph, when the number of slots is fixed to $\log n$.

In Section \ref{sec:Conclusion} we discuss some interesting problems opened by this work.

\section{Preliminaries}
\label{sec:preliminaries}

\subsection{Model and Problem Statement}\label{subsec:model} 

A {\em growing graph} is modeled as an undirected dynamic graph $G_t = (V_t, E_t)$, where $t=1,2,\ldots,k$ is a discrete time-step, called {\em slot}. 
The dynamics of $G_t$ are determined by a centralized \emph{growth process} (also called \emph{growth schedule}) $\sigma$, defined as follows. The process always starts from the initial graph instance $G_0=(\{u_0\},\emptyset)$, containing a single initial vertex $u_0$, called the \emph{initiator}. 
In every slot $t$, the process updates the current graph instance $G_{t-1}$ to generate the next, $G_t$, according to the following vertex and edge update rules. The process first sets $G_t = G_{t-1}$. Then, for every $u\in V_{t-1}$, it adds at most one new vertex $u^\prime$ to $V_t$ (\emph{vertex generation} operation) and adds to $E_t$ the edge $uu^\prime$ alongside any subset of the edges $\{vu^\prime\;|\; v\in V_{t-1}$ is at distance at most $d-1$ from $u$ in  $G_{t-1}\}$, for some integer \emph{edge-activation distance} $d\geq 1$ fixed in advance (\emph{edge activation} operation).  We call $u^\prime$ the vertex generated by the process for vertex $u$ in slot $t$. We also say that $u$ is the \emph{parent} of $u^\prime$ and that $u^\prime$ is the \emph{child} of $u$ at slot $t$ and write $u \overset{t}{\rightarrow} u^\prime$. The process completes slot $t$ after deleting any (possibly empty) subset of edges from $E_t$ (\emph{edge deletion} operation). We also denote by $V^+_t$, $E^+_t$, and $E^-_t$ the set of vertices generated, edges activated, and edges deleted in slot $t$, respectively. Then, $G_{t} = (V_{t}, E_{t})$ is also given by $V_{t} = V_{t-1} \cup V^+_t$ and $E_{t} = (E_{t-1} \cup E^+_t) \setminus E^-_t$. Deleted edges are called \emph{excess edges} and we restrict attention to excess edges whose deletion does not disconnect $G_t$. We call $G_t$ the graph \emph{grown} by process $\sigma$ after $t$ slots and call the final instance, $G_k$, the \emph{target graph} grown by $\sigma$. We also say that $\sigma$ is a \emph{growth schedule for} $G_k$ that grows $G_k$ in $k$ slots using $\ell$ {\em excess edges}, where $\ell$=$\sum_{t=1}^{k}|E^-_t|$, i.e., $\ell$ is equal to the total number of deleted edges. 
This brings us to the main problem studied in this paper:\\

\vspace{-5pt}
\noindent\textbf{Graph Growth Problem:} Given a target graph $G$ and non-negative integers $k$ and $\ell$, compute a growth schedule for $G$ of at most $k$ slots and at most $\ell$ excess edges, if one exists.\\

\vspace{-5pt}
The \emph{target graph} $G$, which is part of the input, will often be drawn from a given graph family $F$, e.g.,~the family of planar graphs. Throughout, $n$ denotes the number of vertices of the target graph $G$. In this paper, computation is always to be performed by a \emph{centralized} polynomial-time algorithm. 

Let $w$ be a vertex generated in a slot $t$, for $1\leq t \leq k$. The \emph{birth path} of vertex $w$ is the unique sequence $B_w=(u_0, u_{i_1}, \dots, u_{i_{p-1}}, u_{i_p}=w)$ of vertices, where $i_p = t$ and $u_{i_{j-1}}\overset{i_j}{\rightarrow}u_{i_j}$, for every $j=1,2,\ldots,p$. That is, $B_w$ is the sequence of vertex generations that led to the generation of vertex~$w$. Furthermore, the \emph{progeny} of a vertex $u$ is the set $P_u$ of descendants of $u$, i.e.,~$P_u$ contains those vertices $v$ for which $u\in B_v$ holds. We also define the sets $N(u)$ and $N[u]$ to be the neighborhood of $u$ and closed neighborhood of $u$, respectively.

In what follows, we give a formal and detailed description of a graph growth schedule. We will be using this description for the pseudocode of our algorithms.

\begin{definition}[growth schedule for $d=2$]\label{construction-schedule-def}
Let $\sigma = (\mathcal{S}_1, \mathcal{S}_2, \ldots, \mathcal{S}_k, \mathcal{E})$ be an ordered sequence of sets, 
where $\mathcal{E}$ is a set of edges, and 
each $\mathcal{S}_i=\{(u_1,v_1,E_1)$, $(u_2,v_2,E_2),\ldots,(u_q,v_q,E_q)\}$ is an unordered set of ordered tuples $\{(u_j,v_j,E_j):1\leq j\leq q\}$ such that, for every $j$, $u_j$ and $v_j$ are vertices (where $u_j$ gives birth to $v_j$) and $E_j$ is a set of edges incident to $v_j$ such that $u_jv_j\in E_j$. Suppose that, for every $2\leq i\leq k$, the following conditions are all satisfied:

\begin{itemize}
    \item each of the sets $\{v_1,v_2,\ldots,v_q\}$ and $\{u_1,u_2,\ldots,u_q\}$ contain $q$ distinct vertices,
    
    \item each vertex $v_j\in\{v_1,v_2,\ldots,v_q\}$ does not appear in any set among $\mathcal{S}_1,\ldots,\mathcal{S}_{i-1}$ 
    (i.e.,~$v_j$ is ``born'' at slot $i$),
    
    \item for each vertex $u_j\in\{u_1,u_2,\ldots,u_q\}$, there exists exactly one set among $\mathcal{S}_1,\ldots,\mathcal{S}_{i-1}$ which contains a tuple $(u',u_j,E')$ 
    (i.e.,~$u_j$ was ``born'' at a slot before slot $i$).
\end{itemize}

Let $i\in\{2,\ldots,k\}$, and let $u$ be a vertex that has been generated at some \emph{slot} 
$i'\leq i$, that is, $u$ appears in at least one tuple of a set among $\mathcal{S}_1,\ldots,\mathcal{S}_{i}$. 
We denote by~$E^{i}$ the union of all edge sets that appear in the tuples of the sets $\mathcal{S}_1,\ldots,\mathcal{S}_{i}$; $E^i$ is the set of all edges activated until slot $i$. 
We denote by $N_i(u)$ the set of neighbors of $u$ in the set~$E^i$.
If, in addition, $\mathcal{E}\subseteq E^k$ and, 
for every $2\leq i\leq k$ and for every tuple $(u_j,v_j,E_j)\in\mathcal{S}_i$, we have that $N_i(v_j)\subseteq N_i(u_j)$, 
then $\sigma$ is a \emph{growth schedule} for the graph $G=(V,E^k\setminus\mathcal{E})$, 
where $V$ is the set of all vertices which appear in at least one tuple in $\sigma$. 
The number $k$ of sets in $\sigma$ is the \emph{length} of $\sigma$. 
Finally, we say that $G$ is \emph{constructed} by $\sigma$ with $k$ slots and with $|\mathcal{E}|$ excess edges.
\end{definition}

\subsection{The Case $d=1$ and $d \geq 3$}\label{sec:subcases}

In this section, we show that for edge-activation distance $d=1$ or $d \geq 3$ there are simple but very efficient algorithms for finding growth schedules. 
As a warm-up, we begin with a simple observation for the special case where the edge-activation distance $d$ is equal to 1.

\begin{observation}
    For $d=1$, every graph $G$ that has a growth schedule is a tree graph.
\end{observation}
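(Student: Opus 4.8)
The plan is to show that when $d=1$, any growth schedule produces a tree, by induction on the number of slots $t$. The key structural observation for $d=1$ is that the only edge a newly generated vertex $u'$ may receive is the edge $uu'$ to its parent $u$, since the set $\{vu' \mid v \in V_{t-1}$ at distance at most $d-1 = 0$ from $u\}$ is just $\{uu'\}$ itself. So in each slot, the edge-activation operation can only add a matching between new vertices and their (distinct) parents — one pendant edge per new vertex.

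First I would set up the induction: $G_0 = (\{u_0\}, \emptyset)$ is trivially a tree (in fact a single vertex, which is connected and acyclic). For the inductive step, assume $G_{t-1}$ is a tree. The process sets $G_t = G_{t-1}$, then adds at most one new vertex $u'$ per existing vertex $u \in V_{t-1}$ together with only the edge $uu'$ (by the $d=1$ analysis above), and finally deletes some subset of edges whose removal does not disconnect $G_t$. I would argue the intermediate graph $G_t' := G_{t-1}$ together with all new pendant vertices and their single edges is still a tree: each new vertex has degree exactly $1$ and is attached to a distinct vertex of the connected acyclic graph $G_{t-1}$, so no cycle is created and the graph stays connected; adding $|V^+_t|$ leaves to a tree on $|V_{t-1}|$ vertices yields a tree on $|V_{t-1}| + |V^+_t|$ vertices (edge count goes up by exactly $|V^+_t|$, connectivity preserved). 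Finally, the edge-deletion step removes edges only subject to the constraint that the graph remain connected; but a connected subgraph of a tree is the tree itself (removing any edge of a tree disconnects it), so in fact no edges can be deleted, and $G_t = G_t'$ is a tree.

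The main thing to be careful about — though it is not really an obstacle — is the ``at distance at most $d-1$ from $u$'' clause: one must note that with $d=1$ this is ``distance at most $0$,'' i.e.\ only $u$ itself, so genuinely no extra edges beyond $uu'$ are available; this is where the $d=1$ hypothesis is used in an essential way. A secondary point worth stating explicitly is that the model restricts edge deletions to those not disconnecting the current graph, which is what forces $E^-_t = \emptyset$ at every slot in the tree case; without that restriction the claim would still hold (deleting edges from a tree only gives forests, still acyclic) but connectivity could be lost, so it is cleanest to invoke the model's restriction and conclude $G_t$ is exactly the tree $G_t'$. This completes the induction, and since the target graph $G = G_k$ for some $k$, $G$ is a tree.
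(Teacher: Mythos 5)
Your proof is correct. The paper states this observation without proof (it is treated as immediate), and your argument is exactly the natural one: for $d=1$ the only edge a new vertex can receive is the pendant edge to its parent, so each slot attaches leaves to a tree, and the model's restriction that deletions must preserve connectivity forces $E^-_t=\emptyset$ on a tree, so by induction every $G_t$ — in particular the target $G_k$ — is a tree. The one subtlety you rightly flag (that "distance at most $d-1=0$ from $u$" means only $u$ itself) is exactly the point that makes the claim true, and your remark that even without the connectivity restriction the graph would stay acyclic is a useful aside, though it is the restriction that forces the stronger conclusion that nothing is ever deleted.
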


\begin{proposition}\label{prop:line}
    For $d=1$, the shortest growth schedule $\sigma$ of a path graph (resp.~a star graph) on $n$ vertices has $\ceil{n/2}$ (resp.~$n-1$) slots.
\end{proposition}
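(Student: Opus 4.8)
The plan is to first extract the consequences of $d=1$ and then reduce both halves of the statement to a simple vertex-counting argument. When a new vertex $u'$ is generated for a vertex $u$ and $d=1$, the only admissible new edges run from $u'$ to vertices at distance at most $d-1=0$ from $u$, that is, to $u$ itself; so $u'$ is attached as a pendant neighbour of $u$. By the preceding observation every instance $G_t$ is a tree, and in a tree every edge is a bridge, so no edge can ever be deleted without disconnecting. Hence the instances are nested, $G_0\subseteq G_1\subseteq\cdots\subseteq G_k=G$ with $G$ the target graph, each $G_t$ is a connected subgraph of $G$, and $|V_t|\le 2|V_{t-1}|$ since each existing vertex spawns at most one (pendant) child.

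For the path, $G=P_n$, and a connected subgraph of $P_n$ is a subpath. If in some slot a pendant were attached to an internal vertex $v$ of the current subpath, then $v$ would have degree at least $3$ in $G_t\subseteq P_n$, which is impossible; so pendants are attached only to the (at most two) endpoints of the current subpath, giving $|V_t|\le|V_{t-1}|+2$ for $t\ge 2$. Since $|V_0|=1$ we have $|V_1|\le 2$, and induction then yields $|V_t|\le 2t$, so $|V_k|=n$ forces $k\ge\ceil{n/2}$. For the matching upper bound I would place $u_0$ at a central vertex of $P_n$ and, in each slot, extend the current subpath by one pendant at each of its endpoints that has not yet reached an endpoint of $P_n$; a short check shows both ends stay available until the subpath spans all of $P_n$, which happens after $\ceil{n/2}$ slots (when $n$ is odd, only one pendant is added in the last slot).

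For the star, $G=K_{1,n-1}$; the cases $n\le 3$ are immediate by inspection, so assume $n\ge 4$, where a connected subgraph of $G$ is a substar $K_{1,j}$. Attaching a pendant to a leaf of the current substar would make that leaf have degree $2$, impossible in $G_t\subseteq K_{1,n-1}$, so only the centre can spawn a child; hence $|V_t|\le|V_{t-1}|+1$, i.e. $|V_t|\le 1+t$, and $|V_k|=n$ forces $k\ge n-1$. Conversely, taking $u_0$ to be the centre and attaching one fresh leaf to it in every slot grows $K_{1,n-1}$ in exactly $n-1$ slots.

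The conceptual heart of the argument is the nestedness observation (no deletions, so each $G_t$ is a connected subgraph of the target); once it is in place, the counting is routine. The point that needs the most care is the path upper bound — fixing the starting vertex and verifying that both ends remain ``open'' long enough — together with the accompanying remark that a schedule which ever attaches a pendant to a non-endpoint (resp. non-centre) can never produce $P_n$ (resp. the star), so such moves may be excluded without loss of generality in the lower-bound proof.
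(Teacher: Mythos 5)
Your proof is correct and follows essentially the same approach as the paper's: bound the growth rate of $|V_t|$ via the observation that with $d=1$ new vertices are pendants and, because the growing instance must stay a connected (hence nested) subgraph of the target, pendants can only be attached at path endpoints (resp.\ the star centre), then exhibit a matching schedule. You do make explicit two points the paper leaves implicit — that the no-deletion property forces $G_t\subseteq G$, and that $u_0$ must be placed near the middle of $P_n$ for the upper bound to come out to $\ceil{n/2}$ — which makes the argument cleaner but does not change the method.
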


\begin{proof}
    Let $G$ be the path graph on $n$ vertices. By definition of the model for $d=1$, edges can only be activated during vertex generation, between the generated vertex and its father. Thus, increasing the size of the path can only be achieved by generating one new vertex at each of the endpoints of the path. The size of a path can only be increased by at most $2$ in each slot, where for each endpoint of the path a new vertex that becomes the new endpoint of the path is generated. Therefore, in order to create any path graph of size $n$ would require at least $\ceil{n/2}$ slots. The growth schedule where one vertex is generated at each of the endpoints of the path in each slot creates the path graph of $n$ vertices in $\ceil{n/2}$ slots.
    
    Now let $G$ be the star graph with $n-1$ leaves. Increasing the size of the star graph can only be achieved by generating new leaves directly connected to the center vertex, and this can occur at most once per slot. Therefore, the growth schedule of $G$ requires exactly $n-1$ slots.
\end{proof}

\begin{observation}
Let $d=1$ and $G=(V,E)$ be a tree graph with diameter $diam$. Then any growth schedule $\sigma$ that grows $G$ requires at least $\ceil{diam/2}$ slots.
\end{observation}

\begin{proof}
Consider a path $p$ of size $diam$ that realizes the diameter of graph $G$. By Proposition \ref{prop:line} we know that $p$ alone requires a growth schedule with at least $\ceil{diam/2}$ slots.
\end{proof}

\begin{observation}\label{leaf}
Let $d=1$ and $G=(V,E)$ be a tree graph with maximum degree $\Delta(G)$. Then any growth schedule $\sigma$ that grows $G$ requires at least $\Delta(G)$ slots.
\end{observation}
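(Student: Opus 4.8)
The plan is to exploit the fact that, for edge-activation distance $d=1$, the growth process is extremely rigid. A vertex $u'$ generated for $u$ may only be joined to vertices at distance at most $d-1 = 0$ from $u$, i.e.,~to $u$ itself; hence the only edge activated when $u'$ is born is $uu'$, and $u$ can be the parent of at most one new vertex per slot. First I would prove, by induction on $t$, that every intermediate instance $G_t$ is a tree: $G_0$ is a single vertex, and in each slot exactly one edge ($uu'$) is activated for each generated vertex $u'$, so $|E_t| - |E_{t-1}| = |V_t| - |V_{t-1}|$, while the graph remains connected (each new vertex is attached to an existing one, and no permitted deletion can disconnect $G_t$). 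Thus $|E_t| = |V_t| - 1$ and $G_t$ is connected, hence a tree. Since every edge of a tree is a bridge, it follows that no excess edge can ever be deleted, so the only ways a vertex gains an incident edge are (i) being attached to its parent at birth and (ii) receiving a child in some later slot.

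Next I would fix a vertex $v \in V(G)$ with $\deg_G(v) = \Delta(G)$ and count its neighbours in $G = G_k$. By the previous paragraph, each neighbour of $v$ is either the parent of $v$ (present only if $v \neq u_0$) or a child of $v$, and since $v$ receives at most one child per slot, the children of $v$ are born in pairwise distinct slots of $\{1,\dots,k\}$. If $v = u_0$, then $v$ has no parent, so all $\Delta(G)$ neighbours are children, forcing $k \geq \Delta(G)$. If $v \neq u_0$, let $t_0 \geq 1$ be the slot in which $v$ is born (the last index of its birth path $B_v$); then $v$ can only receive children in slots $t_0 + 1, \dots, k$, so it has at most $k - t_0$ children, whence $\Delta(G) = \deg_G(v) \leq (k - t_0) + 1 \leq k$, again giving $k \geq \Delta(G)$.

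There is essentially no hard step here; the only points requiring care are (a) justifying from the model of Section~\ref{subsec:model} that $d = 1$ forces the unique activated edge to be $uu'$ and at most one child per vertex per slot, and (b) the tree invariant for $G_t$, which is exactly what rules out edge deletions and enables the clean degree accounting above. I would close by noting that the bound is tight: by Proposition~\ref{prop:line}, the star $K_{1,\Delta}$ on $\Delta + 1$ vertices requires exactly $\Delta$ slots when $d = 1$.
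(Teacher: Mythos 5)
Your proof is correct and somewhat more self-contained than the paper's. The paper's proof is a one-liner that looks at the closed neighborhood $G' = G[N_G[u]]$ of a max-degree vertex $u$, observes $G'$ is a star on $\Delta(G)+1$ vertices, and invokes Proposition~\ref{prop:line} (a standalone star on $n$ vertices needs $n-1$ slots when $d=1$); the lower bound is then inherited by $G$ by an implicit subgraph/restriction argument. You instead do the degree accounting directly: for $d=1$ the only activatable edge at a new vertex $u'$ is $uu'$, each existing vertex gets at most one child per slot, every $G_t$ is a tree (so no permitted deletion ever removes an edge), and hence the max-degree vertex $v$ has at most one parent plus at most $k-t_0 \leq k-1$ children (or $k$ children if $v=u_0$), forcing $k \geq \Delta(G)$. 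The two proofs rely on the same mechanism (one child per slot, at most one parent, no deletions), but yours avoids the unstated ``a schedule for $G$ restricted to $V(G')$ is a schedule for $G'$'' step and makes explicit the tree invariant that rules out edge deletions, which the paper leaves implicit. Both are valid; yours buys a cleaner chain of inferences at the cost of a few extra lines, and your closing tightness remark via $K_{1,\Delta}$ matches the paper's Proposition~\ref{prop:line}.
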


\begin{proof}
    Consider a vertex $u \in G$ with degree $\Delta(G)$ and let $G'=(V',E')$ be a subgraph of $G$, such that $V' = N[u]$ (the close neighborhood of vertex $u$) and $E' = E(N[u])$.
    Notice that $G'$ is a star graph of size $\Delta(G)$. By Proposition \ref{prop:line}, we know that the growth schedule of $G'$ alone has at least $\Delta(G)$ slots. 
\end{proof}

We now provide an algorithm, called {\em \texttt{trimming}}, that optimally solves the graph growth problem for $d=1$. We begin with the following simple observation.

\begin{observation}
    Let $d=1$. Consider a tree graph $G$ and a growth schedule $\sigma$ for it. Denote by $G_t$ the graph grown so far until the end of slot $t$ of $\sigma$. Then any vertex generated in slot $t$ must be a leaf vertex in $G_t$.
\end{observation}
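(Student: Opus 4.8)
The plan is to argue directly from the update rules of the model specialized to $d=1$, tracking the degree of a newly generated vertex through the three phases of a slot. First I would note what edge activation can do when $d=1$. When the process generates a new vertex $u'$ for a parent $u\in V_{t-1}$ in slot $t$, it adds the edge $uu'$ and may additionally add any subset of the edges $\{vu'\mid v\in V_{t-1}\text{ at distance at most }d-1=0\text{ from }u\text{ in }G_{t-1}\}$. Since the only vertex at distance at most $0$ from $u$ is $u$ itself, this candidate set contains only the already-present edge $uu'$. Moreover, no edge joining two vertices that are both generated in slot $t$ can be activated, because such an edge is not of the form $vu'$ with $v\in V_{t-1}$. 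Hence, immediately after the edge-activation phase of slot $t$, every vertex generated in slot $t$ has degree exactly $1$, its unique neighbour being its parent.

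Next I would handle the edge-deletion phase. Fix a vertex $u'$ generated in slot $t$, with parent $u$. After activation the only edge incident to $u'$ is $uu'$, so if this edge were deleted, $u'$ would become an isolated vertex; since $V_{t-1}\neq\emptyset$ (indeed $u\in V_{t-1}$), this would disconnect $G_t$, which the model forbids. Therefore $uu'$ survives the deletion phase, and deletions cannot create any new edge incident to $u'$. Consequently, in $G_t$ the vertex $u'$ still has degree exactly $1$, i.e., $u'$ is a leaf of $G_t$, as claimed. The same reasoning applies simultaneously to every vertex generated in slot $t$, since the model generates at most one child per vertex of $V_{t-1}$ and these child--parent edges are pairwise distinct.

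There is essentially no hard step here; the only point requiring care is to make the claim about $G_t$ — the graph after \emph{all} operations of slot $t$ — rather than merely about the state right after activation, which is precisely why the ``deletions do not disconnect the graph'' restriction must be invoked. Everything else is an immediate unfolding of the definitions in the degenerate case $d=1$, where a freshly generated vertex can be attached only to its own parent.
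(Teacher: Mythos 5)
Your proof is correct and follows essentially the same approach as the paper's: the paper's one-line argument simply asserts that a vertex generated in slot $t$ has degree $1$ at the end of slot $t$ ``by definition of the model for $d=1$,'' whereas you unpack exactly why — the distance-$0$ constraint makes the parent the only possible neighbour, and the no-disconnection rule prevents the parent edge from being deleted — but the underlying reasoning is the same.
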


\begin{proof}
    Every vertex $u$ generated in slot $t$ has degree equal to 1 at the end of slot $t$ by definition of the model for $d=1$. Therefore every vertex $u$ in graph $G_t$ must be a leaf vertex.
\end{proof}

The {\em \texttt{trimming}} algorithm, see Algorithm \ref{Trimming} follows a bottom-up approach for building the intended growth schedule $\sigma=(\mathcal{S}_k,\mathcal{S}_{k-1},\ldots \mathcal{S}_1,\mathcal{E})$, where $\mathcal{E}=\emptyset$. At every iteration $i=1,2,\ldots$ of the algorithm, we
consider the leaves of the current tree graph and we create the parent-child pairs of the currently last slot $\mathcal{S}_{k+1-i}$ of the schedule. Then we remove from the current tree graph all the leaves that were included in a parent-child pair at this iteration of the algorithm, and we recurse. The process is repeated until graph $G$ has a single vertex left which is added in the first slot of $\sigma$ as the initiator. 
In the next theorem we show that the algorithm produces an optimum growth schedule.

\begin{algorithm}[t!]
\caption{Trimming Algorithm, where $d=1$.}
\label{Trimming}
\begin{algorithmic}[1]
\REQUIRE{A target tree graph $G=(V,E)$ on $n$ vertices.}
\ENSURE{An optimal growth schedule for $G$.}

\STATE{$k\leftarrow 1$}

\WHILE{$V\neq \emptyset$}
    \FOR{each leaf vertex $v\in V$ and its unique neighbor $u\in V$}
        \IF{$u$ is not marked as a ``parent in $S_k$''}
            \STATE{Mark $u$ as a ``parent in $\mathcal{S}_k$''}
	        \STATE{$\mathcal{S}_k \leftarrow \mathcal{S}_k \cup\{(u,v,\{uv\}$)\}}
	        \STATE{$V\leftarrow V\setminus \{v\}$}
	    \ENDIF
	\ENDFOR
	\STATE{$\mathcal{S}_{k+1}=\emptyset$; $k\leftarrow k+1$} 
\ENDWHILE
\RETURN{$\sigma=(\mathcal{S}_k,\mathcal{S}_{k-1},\ldots \mathcal{S}_1,\emptyset)$}
\end{algorithmic}
\end{algorithm}

\begin{theorem}\label{correctness-trimming-thm}
     For $d=1$, the {\em \texttt{trimming}} algorithm computes in polynomial time an optimum (shortest) growth schedule $\sigma$ of $\kappa$ slots for any tree graph $G$.
\end{theorem}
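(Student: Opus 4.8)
The plan is to prove two things: first, that the schedule $\sigma$ output by the \texttt{trimming} algorithm is a valid growth schedule for $G$ with $d=1$, and second, that no growth schedule for $G$ uses fewer than $\kappa$ slots, where $\kappa$ is the number of iterations performed by the algorithm. Validity is the easy direction: reading $\sigma$ forward from the initiator, each $\mathcal{S}_i$ attaches a set of new leaves to distinct already-present vertices, each via a single edge, which is exactly what an edge-activation distance $d=1$ slot permits; since the algorithm only ever removes (going backward) leaves of the current tree, running $\sigma$ forward rebuilds $G$ exactly, and no excess edges are ever used. I would also remark that each vertex is marked as a parent at most once per iteration, so the $q$ parents in each slot are distinct, matching the model's requirement that a vertex generates at most one child per slot.

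For optimality, I would argue by a potential/charging argument against an arbitrary growth schedule $\sigma^\ast$ for $G$ of length $k^\ast$. The key structural fact, already recorded in the excerpt (the observation that for $d=1$ every vertex generated in slot $t$ is a leaf of $G_t$), is that in any schedule, the vertices generated in the final slot form an independent set of leaves, and after conceptually deleting them one obtains a valid schedule of length $k^\ast-1$ for the resulting subtree. So I would define, for a tree $T$, the quantity $r(T)$ = the minimum number of slots in a growth schedule for $T$, and show by induction on $|V(T)|$ that $r(T)$ equals the number of iterations the \texttt{trimming} algorithm performs on $T$. The inductive step needs: (a) the algorithm, in its first iteration, removes a maximal-in-the-greedy-sense set $L$ of leaves, one per distinct parent; and (b) for the lower bound, any schedule's last slot can only remove a set of leaves with distinct parents, hence the tree left after one ``trimming step'' — whatever maximal such set is removed — has minimum-schedule-length exactly $r(T)-1$. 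The subtle point is that different maximal leaf-removal sets could a priori leave non-isomorphic subtrees; I would handle this by proving the robustness lemma that $r(T') = r(T'')$ whenever $T'$ and $T''$ both arise from $T$ by deleting a set of leaves containing at least one leaf adjacent to each vertex that has a leaf neighbour (i.e., both are ``fully trimmed'' one step), e.g.\ via an exchange argument swapping which leaf of a given parent is kept.

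The main obstacle I anticipate is precisely this robustness lemma: showing the greedy choice of which leaves to peel is without loss of generality. The cleanest route is probably to avoid comparing arbitrary trimmings and instead show directly that the greedy ``trim all peelable parents'' step is simultaneously (i) achievable in one slot — so $r(T) \le r(T_{\mathrm{trim}}) + 1$ — and (ii) forced from below — for any schedule, after its last slot the tree still contains all the edges the greedy step would remove beyond whatever that schedule removed, so a trivial induction gives $r(T) \ge r(T_{\mathrm{trim}}) + 1$ by noting $r$ is monotone under the relevant edge additions. Concretely, if $\sigma^\ast$ has length $k^\ast$ and removes leaf-set $L^\ast$ in its last slot, then $G \setminus L^\ast$ has a schedule of length $k^\ast - 1$; since $G \setminus L^\ast$ contains $G_{\mathrm{trim}}$ as a subtree plus possibly some extra pendant leaves, and adding a pendant leaf increases $r$ by at most... — here one must be slightly careful, so I would instead phrase it as: $G_{\mathrm{trim}}$ is obtained from $G \setminus L^\ast$ by further peeling leaves, so $r(G_{\mathrm{trim}}) \le r(G\setminus L^\ast) \le k^\ast - 1$ using monotonicity of $r$ under leaf-peeling (a one-line induction). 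Combined with $r(G) \le r(G_{\mathrm{trim}}) + 1$ from part (i) and the induction hypothesis $r(G_{\mathrm{trim}}) = \kappa - 1$, this pins down $r(G) = \kappa$. The base case is the single vertex, with $r = 1$ (the initiator slot). Finally, polynomial running time is immediate: each iteration scans the current tree's leaves in linear time and the number of iterations is at most $n$.
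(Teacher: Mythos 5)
Your proposal is correct in spirit and takes a genuinely different route from the paper's proof. The paper argues by an extremal exchange: it supposes (for contradiction) that a strictly shorter schedule $\sigma'$ exists, picks among optimal schedules the one for which the vector $(|L'_{k'}|,\dots,|L'_1|)$ of slot sizes is lexicographically largest, and then shows that this $\sigma'$ must match the trimming schedule slot-for-slot in the number of leaves generated (moving a mis-timed leaf generation later would otherwise increase the lexicographic value). Your proposal instead sets up an induction on $|V(T)|$, using the two inequalities $r(G)\le r(G_{\mathrm{trim}})+1$ (the trim is realizable in one extra slot) and $r(G_{\mathrm{trim}})\le r(G\setminus L^\ast)\le k^\ast-1$ for every schedule $\sigma^\ast$, where the key auxiliary fact is monotonicity of the optimal length $r(\cdot)$ under deletion of a pendant leaf (which you correctly observe holds because a leaf $v$ of the target, having only its parent as neighbour, can never itself be a parent in a $d=1$ schedule, so it can simply be dropped from any schedule). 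Both routes work; yours is arguably cleaner because it packages the argument into a reusable monotonicity lemma rather than a one-off lexicographic extremality trick, at the cost of being phrased around the abstract quantity $r(T)$ rather than directly about the algorithm's output.

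One imprecision you should tighten: the sentence ``$G_{\mathrm{trim}}$ is obtained from $G\setminus L^\ast$ by further peeling leaves'' is not literally true, since for a parent $p$ the trimming step may delete a different leaf $w_p$ than the leaf $w^\ast_p\in L^\ast$ that $\sigma^\ast$ deletes, so neither tree is a subgraph of the other. What is true, and what you in fact need, is that $G_{\mathrm{trim}}$ is \emph{isomorphic} to a tree obtained from $G\setminus L^\ast$ by peeling leaves: apply the automorphism of $G$ that swaps $w_p\leftrightarrow w^\ast_p$ for every common parent $p$ (leaves sharing a parent are interchangeable), after which $L^\ast$ maps into $L_{\mathrm{trim}}$. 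Since $r$ is isomorphism-invariant, the chain of inequalities then goes through. You clearly saw this issue (your ``robustness lemma''/exchange remark), but the phrasing of the final step quietly drops it; it should be made explicit, because without the relabelling the containment claim fails. Also, the base case should be the single vertex with $r=0$ slots rather than $1$, since in the model $G_0$ already contains the initiator before slot $1$ begins; this is a one-off indexing matter but worth aligning with the paper's conventions.
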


\begin{proof}
Let $\sigma=(\mathcal{S}_1,\ldots, \mathcal{S}_k,\emptyset)$ be the growth schedule obtained by the {\em \texttt{trimming}} algorithm (with input $G$). Suppose that $\sigma$ is not optimum, and let $\sigma'\neq \sigma$ be an optimum growth schedule for $G$. That is, $\sigma=(\mathcal{S}'_1,\ldots, \mathcal{S}'_{k'},\emptyset)$, where $k' < k$. 
Denote by $(L_1,L_2,\ldots L_k)$ and $(L'_1,L'_2,\ldots L'_{k'})$ the sets of vertices generated in each slot of the growth schedules $\sigma$ and $\sigma'$, respectively. Note that $\sum_{i=1}^{k}|L_i| = \sum_{i=1}^{k'}|L'_i|=n-1$. 
Among all optimum growth schedules for $G$, we can assume without loss of generality that $\sigma'$ is chosen such that 
the vector $(|L'_{k'}|,|L'_{k'-1}|,\ldots |L'_{1}|)$ is lexicographically largest. 

Recall that the {\em \texttt{trimming}} algorithm builds the growth schedule $\sigma$ backwards, i.e.,~it first computes $\mathcal{S}_k$, it then computes $\mathcal{S}_{k-1}$ etc. At the first iteration, the {\em \texttt{trimming}} algorithm collects all vertices which are parents of at least one leaf in the tree and for each of them will generate a new vertex in $\mathcal{S}_k$. Then, the algorithm removes all leaves that are generated in $\mathcal{S}_k$, and it recursively proceeds with the remaining tree after removing these leaves. 

Let $\ell$ be the number of slots such that the growth schedules $\sigma$ and $\sigma'$ generate the same number of leaves in their last $\ell$ slots, i.e.,~$|L_{k-i}|=|L'_{k'-1}|$, for every $i\in\{0,1,\ldots,\ell-1\}$, but $|L_{k-\ell}|\neq |L'_{k'-\ell}|$. Suppose that $\ell \leq k-1$.
Note by construction of the {\em \texttt{trimming}} algorithm that, since $|L_{k}| = |L'_{k'}|$, both growth schedules $\sigma$ and $\sigma'$ generate exactly one leaf for each vertex which is a parent of a leaf in $G$. That is, in their last slot, both $\sigma$ and $\sigma'$ have the same parents of new vertices; they might only differ in which leaves are generated for these parents. Consider now the graph $G_{k-1}$ (resp.~$G'_{k'-1}$) that is obtained by removing from $G$ the leafs of $L_{k}$ (resp.~of $L'_{k'}$). Then note that $G_{k-1}$ and $G'_{k'-1}$ are isomorphic. Similarly it follows that, if we proceed removing from the current graph the vertices generated in the last $\ell$ slots of the schedules $\sigma$ and $\sigma'$, we end up with two isomorphic graphs $G_{k-\ell+1}$ and $G'_{k'-\ell+1}$. Recall now that, by our assumption, $|L_{k-\ell}|\neq |L'_{k'-\ell}|$. Therefore, since the {\em \texttt{trimming}} algorithm always considers all possible vertices in the current graph which are parents of a leaf (to give birth to a leaf in the current graph), it follows that $|L_{k-\ell}| > |L'_{k'-\ell}|$. That is, at this slot the schedule $\sigma'$ misses at least one potential parent $u$ of a leaf $v$ in the current graph $G'_{k'-\ell+1}$. This means that the tuple $(u,v,\{uv\})$ appears at some other slot $\mathcal{S}'_{j}$ of $\sigma'$, where $j<k'-\ell$. Now, we can move this tuple from slot $\mathcal{S}'_{j}$ to slot $\mathcal{S}'_{k'-\ell}$, thus obtaining a lexicographically largest optimum growth schedule than $\sigma'$, which is a contradiction. 

Therefore $\ell \geq k$, and thus $\ell = k$, since $\sum_{i=1}^{k}|L_i| = \sum_{i=1}^{k'}|L'_i|=n-1$. This means that $\sigma$ and $\sigma'$ have the same length. That is, $\sigma$ is an optimum growth schedule.
\end{proof}

We move on to the case of $d\geq 4$, and we show that for any graph $G$, there is a simple algorithm that computes a growth schedule of an optimum number of slots and only linear number of excess edges in relation to the size of the graph.

\begin{lemma}\label{lemma:d_4}
For $d \geq 4$, any given graph $G=(V,E)$ on $n$ vertices can be grown with a growth schedule $\sigma$ of $\ceil{\log{n}}$ slots and $O(n)$ excess edges.
\end{lemma}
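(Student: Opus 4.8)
The plan is to grow $G$ by \emph{doubling} the vertex set in every slot while permanently maintaining a star of excess edges -- the \emph{backbone} -- joining the initiator $u_0$ to every vertex created so far. The backbone keeps the current graph at diameter at most $2$, so that when a new vertex $w$ is generated from a parent $p$, every already-existing vertex $v$ that should become a neighbour of $w$ lies at distance at most $2$ from $p$; since $d-1\ge 3$ here, the edge $wv$ is a legal activation, and so is the backbone edge $wu_0$ (distance $1$ from $p$). We may assume $G$ is connected, as any graph admitting a growth schedule is.

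First I would convert $G$ into a \emph{slot assignment}: a partition $V(G)=L_0\cup L_1\cup\dots\cup L_k$ with $k=\ceil{\log n}$, $L_0=\{u_0\}$, every $L_t$ an independent set of $G$, and $\big|L_0\cup\dots\cup L_t\big|\le 2\big|L_0\cup\dots\cup L_{t-1}\big|$ for every $t\ge 1$ (so, writing $S_t=L_0\cup\dots\cup L_t$, one has $|S_t|\le 2^t$, leaving room for all $n$ vertices by slot $k$). Independence of each $L_t$ is forced on us: two vertices born in the same slot can never be joined by an edge, since a newly generated vertex only ever receives edges to vertices that already existed at the \emph{start} of its slot. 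Given the assignment, the schedule writes itself. Maintain the invariant that, after slot $t$, the current graph on vertex set $S_t$ is $G[S_t]$ together with the backbone $\{u_0v:v\in S_t\setminus\{u_0\}\}$ and at most $|S_t|-1$ additional ``parent'' edges. In slot $t+1$: choose a distinct parent $p(w)\in S_t$ for each $w\in L_{t+1}$ (possible since $|L_{t+1}|\le|S_t|$), generate $w$ from $p(w)$, and activate the mandatory edge $wp(w)$, the backbone edge $wu_0$, and every edge $wv$ with $v\in N_G(w)\cap S_t$; all of these are legal by the diameter-$2$ property. Because $L_{t+1}$ is independent, no edge of $G$ inside $L_{t+1}$ is missed, and because every vertex outside $S_{t+1}$ is born later, every edge of $G$ is eventually activated; the invariant is plainly preserved.

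For the cost, in the last slot delete every backbone edge and every parent edge that is not an edge of $G$. What remains is exactly $G$; the deletion cannot disconnect the graph, since $G$ itself is a connected subgraph throughout; and the number of edges deleted is at most $(n-1)+(n-1)=O(n)$, so the schedule uses $O(n)$ excess edges, each alive for at most $k=O(\log n)$ slots. The number of slots is $k=\ceil{\log n}$, which is optimal because $|V_t|\le 2|V_{t-1}|$ forces any schedule for an $n$-vertex graph to use at least $\ceil{\log n}$ slots.

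The step I expect to be the main obstacle is the very first one: producing the slot assignment. The natural route is to start from a proper colouring of $G$ and then cut the colour classes into consecutive ``doubling blocks'' of sizes $1,1,2,4,\dots$; the delicate point is to keep each block inside a single colour class -- so that it stays independent -- while still exhausting all of $V(G)$ within $\ceil{\log n}$ slots, and this is precisely where the structural properties of the target graph (enough colour classes, suitably sized, relative to $\log n$) enter. Once the assignment is in hand, the backbone, the edge activations, and the edge count are all routine.
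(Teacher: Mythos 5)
Your construction — grow by doubling while keeping a permanent star (``backbone'') at $u_0$ so the current graph has diameter $2$ — is exactly the paper's construction. The gap is in the step you yourself flag as the main obstacle: you require a partition of $V(G)$ into independent sets $L_1,\dots,L_k$ with doubling prefix sizes and $k=\ceil{\log n}$, and no such partition exists in general. For $G=K_n$ every independent set is a singleton, so any schedule in which same-slot vertices must be pairwise non-adjacent in $G$ needs $n-1$ slots, not $\log n$. Under the constraint you impose, the lemma is simply not provable.

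What you are missing is that the paper does \emph{not} treat ``vertices born in the same slot are independent'' as a constraint for $d\geq 4$; that is only forced at $d=2$ (Proposition~\ref{prop:independent set} is stated and proved under the heading ``Basic Properties on $d=2$'', and its proof is a distance argument, not a restriction to $V(G_{t-1})$). In the paper's proof, a newly generated vertex $w$ with parent $p$ may activate an edge to \emph{another} vertex born in the same slot, provided that vertex lies within distance $d-1$ of $p$ in the current graph. With the star backbone the graph grown so far always has diameter $2$, so if $w'$ is another new vertex born to parent $p'$, then $\mathrm{dist}(p,w')\leq \mathrm{dist}(p,p')+1\leq 3$. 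Hence the edge $ww'$ is a legal activation as soon as $d-1\geq 3$, i.e.\ $d\geq 4$ — this is precisely where the threshold in the lemma statement comes from, and it is the detail your proof discards. Once new-to-new activations are available, every adjacency of $G$ can be realized on birth: you match each generated vertex to a vertex of $G$, activate exactly the edges that $G$ prescribes plus the backbone and parent edges, and the excess edges are just the at most $2n-1$ backbone/parent edges, as in your count. No coloring, no independence, no slot-assignment lemma is needed. (The companion lemma for $d\geq 3$ handles the one remaining failure — two new vertices with leaf parents sit at distance $4$ from each other, too far for $d=3$ — by maintaining a clique instead of a star, at a cost of $O(n^2)$ excess edges.)
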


\begin{proof}
    Let $G=(V,E)$ be the target graph, and $G_{t} = (V_{t}, E_{t})$ be the grown graph at the end of slot $t$. When the growth schedule generates a vertex $w$, $w$ is matched with an unmatched vertex of the target graph $G$. For any pair of vertices $v, w \in G_{\ceil{\log{n}}}$ that have been matched with a pair of vertices $v_j, w_j \in G$, respectively, if $(v_j, w_j) \in E$, then $(v, w) \in E_{\ceil{\log{n}}}$, and if $(v_j, w_j) \notin E$, then $(v, w) \notin E_{\ceil{\log{n}}}$.
    
    To achieve growth of $G$ in $\ceil{\log{n}}$ slots, for each vertex of $G_t$ the process must generate a new vertex at slot $t + 1$, except possibly for the last slot of the growth schedule.
    To prove the lemma, we show that the growth schedule maintains a star as a spanning subgraph of $G_t$, for any $t \leq \ceil{\log{n}}$, with the initiator $u$ as the center of the star. Trivially, the children of $u$ belong to the star, provided that the edge between them is not deleted until slot $\ceil{\log{n}}$. The children of all leaves of the star are at distance $2$ from $u$, therefore the edge between them and $u$ are activated at the time of their birth.
    
    The above schedule shows that the distance of any two vertices is always less or equal to four. Therefore, for each vertex $w$ that is generated in slot $t$ and is matched to a vertex $w_j \in G$, the process activates the edges with each vertex $u$ that has been generated and matched to vertex $u_j \in G_j$ and $(w_j, u_j) \in E$. Finally, the number of the excess edges that we activate are at most $2n - 1$ (i.e.,~the edges of the star and the edges between parent and child vertices). Any other edge is activated only if it exists in $G$.
\end{proof}

It is not hard to see that the proof of Lemma~\ref{lemma:d_4} can be slightly adapted such that, instead of maintaining a star, we maintain a clique. The only difference is that, in this case, the number of excess edges increases to at most $O(n^2)$ (instead of at most $O(n)$). On the other hand, this method of always maintaining a clique has the benefit that it works for $d=3$, as the next lemma states.

\begin{lemma}
    For $d \geq 3$, any given graph $G=(V,E)$ on $n$ vertices can be grown with a growth schedule $\sigma$ of $\ceil{\log{n}}$ slots and $O(n^2)$ excess edges.
\end{lemma}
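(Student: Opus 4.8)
The plan is to adapt the proof of Lemma~\ref{lemma:d_4}, replacing the spanning star by a spanning clique. I would maintain the invariant that at the end of every slot $t$ (before the final deletion) the grown graph is the complete graph on $|V_t|$ vertices, with $|V_t|=\min(2^{t},n)$, so that after $k:=\ceil{\log n}$ slots the grown graph is $K_n$; in the very last slot I then delete all edges of $K_n$ that are not edges of $G$, obtaining $G$. No deletions are performed before the last slot.

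The only step that genuinely uses $d\geq 3$ — and the reason this easy approach collapses for $d=2$ — is the clique-doubling step. Given $G_{t-1}=K_m$, in slot $t$ each of (enough of) the vertices $u\in V_{t-1}$ generates a child $u'$, which is connected to its parent $u$, to every other old vertex $v\in V_{t-1}$, and to every other newly generated child $v'$. Joining $u'$ to an old $v$ is allowed because $v$ is at distance $1\leq d-1$ from $u$ in $G_{t-1}$; joining $u'$ to another new child $v'$ is allowed because, with $d\geq 3$, a child may reach the distance-$2$ vertices of its parent, and the parent $v$ of $v'$ is at distance $1\le d-2$ from $u$ in the clique $G_{t-1}$. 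Hence all pairs within $V_{t-1}\cup\{u' : u\text{ chosen}\}$ become adjacent, so $G_t=K_{|V_t|}$. In the last slot one generates exactly $n-|V_{k-1}|$ new vertices (legal since $n-|V_{k-1}|\le |V_{k-1}|$), reaching $K_n$, and then removes the $\binom{n}{2}-|E(G)|$ non-edges of $G$.

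For the accounting: the schedule uses exactly $k=\ceil{\log n}$ slots, which matches the trivial lower bound, so the slot count is optimal. Every edge ever activated is an edge of $K_n$ and is kept until the last slot, where precisely the $\binom{n}{2}-|E(G)|$ non-edges of $G$ are deleted; thus the number of excess edges is $\binom{n}{2}-|E(G)|=O(n^{2})$. Each intermediate instance $G_t$ is a complete graph, hence connected, and the single batch of deletions in the last slot turns $K_n$ into $G$, which is connected (every valid target graph is, since each $G_t$ must stay connected by definition of the model); these deletions can moreover be ordered so that connectivity is preserved after each individual removal. The construction is explicit and clearly runs in time polynomial in $n$.

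The main — in fact the only — obstacle is justifying the clique-doubling step within the edge-activation rule, i.e.\ that two vertices generated in the same slot may be joined when their parents are adjacent, which is exactly what $d\geq 3$ provides (and is unavailable for $d=2$). Everything else — the slot count, the $O(n^{2})$ bound on excess edges, connectivity of the intermediate instances, and the polynomial running time — is routine and essentially identical to the bookkeeping in Lemma~\ref{lemma:d_4}. As the remark preceding the statement notes, one may equivalently run the $d\geq 4$ construction verbatim but keep a clique in place of a star; relative to Lemma~\ref{lemma:d_4}, the scaffold now carries $\Theta(n^{2})$ rather than $O(n)$ edges, which is precisely the source of the weaker excess-edge bound.
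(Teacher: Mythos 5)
Your proof is correct and takes essentially the same approach as the paper, whose remark preceding the lemma explicitly says to adapt the $d\geq 4$ construction by maintaining a spanning clique instead of a star — exactly the clique-doubling invariant you use. Your justification that two new children $u',v'$ can be joined because their parents are adjacent in the current clique (so $v'$ lies within distance $2\le d-1$ of $u$) is precisely the point where $d\geq 3$ is needed and where the $d=2$ case breaks down, and the $O(n^2)$ excess-edge count from finally pruning $K_n$ down to $G$ matches the paper's stated bound.
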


\subsection{Basic Properties and Sub-processes for $d=2$}\label{sec:edge_act_distance_2}

For the rest of the paper, we always assume that $d=2$. In this section, we show some basic properties for growing a graph $G$ which restrict the possible growth schedules and we also provide some lower bounds on the number of slots. We will also provide some basic algorithms which will be used as sub-processes in the rest of the paper. In the next proposition we show that the vertices generated in each slot form an independent set in the grown graph, i.e.,~any pair of vertices generated in the same slot cannot have an edge between them in the target graph.

\begin{proposition} \label{prop:independent set}
    The vertices generated in a slot form an independent set in the target graph $G$.
\end{proposition}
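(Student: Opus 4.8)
The plan is to argue, purely from the edge-activation rule, that every edge of the target graph has, \emph{at the slot in which it is activated}, exactly one endpoint among the vertices born in that slot; since two vertices born in the \emph{same} slot can never meet this requirement, no edge of $G$ joins them.

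First I would record two elementary facts about the model. (i) \emph{Every edge of $G=G_k$ is activated in some slot.} Since $E_0=\emptyset$ and $E_t=(E_{t-1}\cup E^+_t)\setminus E^-_t\subseteq E_{t-1}\cup E^+_t$ for every $t$, a trivial induction on $t$ gives $E_t\subseteq\bigcup_{s\le t}E^+_s$, so in particular $E(G)\subseteq\bigcup_{s=1}^{k}E^+_s$. (ii) \emph{For every slot $s$, each edge of $E^+_s$ has exactly one endpoint in $V^+_s$.} By the edge-activation rule, an edge added in slot $s$ has the form $uu'$ or $vu'$, where $u'\in V^+_s$ is the vertex generated for some parent $u\in V_{s-1}$, and the other endpoint — namely $u$, respectively a vertex $v\in V_{s-1}$ at distance at most $d-1=1$ from $u$ in $G_{s-1}$ — lies in $V_{s-1}$. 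Since $V_{s-1}\cap V^+_s=\emptyset$, exactly one endpoint of the edge is in $V^+_s$.

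Next I would combine these with the fact that every vertex other than the initiator is born in a \emph{unique} slot (vertex generation always introduces a vertex not present in $V_{s-1}$, so $V^+_s\cap V^+_{s'}=\emptyset$ for $s\ne s'$). Let $x,y$ be distinct vertices both generated in slot $t$, and suppose for contradiction that $xy\in E(G)$. By (i), $xy\in E^+_s$ for some slot $s$; by (ii), exactly one of $x,y$ lies in $V^+_s$. But if $s=t$ then both $x,y\in V^+_t=V^+_s$, while if $s\ne t$ then, by uniqueness of the birth slot, neither $x$ nor $y$ lies in $V^+_s$ — either case contradicts (ii). Hence $xy\notin E(G)$, so the set $V^+_t$ of vertices generated in slot $t$ is independent in $G$.

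There is no genuine obstacle here; the only point needing care is fact (ii), and in particular ruling out that an edge between two simultaneously generated vertices $u',v'\in V^+_s$ is ever activated. This is exactly where the structure of the edge-activation rule (the ``far'' endpoint is drawn from $G_{s-1}$, which excludes $V^+_s$) is used, so I would state it explicitly. Everything else is bookkeeping with the definitions of $V^+_t$, $E^+_t$, and $E_t$.
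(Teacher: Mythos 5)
Your proof is correct, and it is in fact tighter than the paper's own argument. The paper argues via distances: it observes that if $v_1,v_2$ are generated in slot $t$ for neighbours $u_1,u_2$, then $v_1$ and $v_2$ are at distance $3$ in $G_t$, hence no edge between them can be activated at slot $t$. This gets at the right intuition but is slightly indirect (the activation rule is phrased in terms of distances in $G_{t-1}$, where neither $v_1$ nor $v_2$ even lives), and it does not explicitly rule out that an edge $v_1v_2$ could be activated at some later slot $s>t$. Your fact (ii) — that every edge in $E^+_s$ has exactly one endpoint in $V^+_s$, because the activation rule always pairs a fresh vertex of $V^+_s$ with a vertex of $V_{s-1}$ — cuts straight to the structural constraint and covers all slots uniformly, and combining it with (i) and uniqueness of birth slots closes the argument cleanly. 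So you are exploiting the same model restriction as the paper (new edges always join a new vertex to an old one), but you package it as a global invariant rather than as a distance computation at the moment of birth, which buys you an explicitly complete case analysis at no extra cost.
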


\begin{proof}
    Let $G_{t-1}$ be our graph at the beginning of slot $t$. Consider any pair of vertices $u_1,u_2$ that have minimal distance between them, in other words, they are neighbors and $dist=1$. Assume that for vertices $u_1,u_2$  new vertices $v_1,v_2$ are generated in slot $t$, respectively. The distance between vertices $v_1,v_2$ in slot $t$ just after they are generated is $dist=3$ and therefore the process cannot activate an edge between them. Finally, for any other pair of non-neighboring vertices, the distance between their children is $dist > 3$, thus remaining an independent set.
\end{proof}

\begin{proposition} \label{proposition:distance_reduction}
    Consider any growth schedule $\sigma$ for graph $G$. Let $t_1, t_2$, $t_1 \leq t_2$, be the slots in which a pair of vertices $u, w$ is generated, respectively. Let $dist_{t_2}$ be the distance between $u$ and $w$ at the end of slot $t_2$. Then, at the end of any slot $t\geq t_2$, $dist_{t}\geq dist_{t_2}$.
\end{proposition}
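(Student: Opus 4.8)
The plan is to prove this by induction on $t \geq t_2$, showing that a single slot of the growth process cannot bring any already-existing pair of vertices closer together. The base case $t = t_2$ is trivial. For the inductive step, I would fix a slot $t+1$ and argue that for \emph{any} two vertices $u, w$ present in $G_t$, we have $\mathrm{dist}_{t+1}(u,w) \geq \mathrm{dist}_{t}(u,w)$; applying this with the specific pair from the statement, combined with the inductive hypothesis $\mathrm{dist}_t(u,w) \geq \mathrm{dist}_{t_2}(u,w)$, finishes the argument. So the whole content reduces to a monotonicity-per-slot claim.

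The key step is the per-slot claim, and here is how I would establish it. Take any shortest $u$–$w$ path $P$ in $G_{t+1}$, say of length $\mathrm{dist}_{t+1}(u,w)$. I want to transform $P$ into a $u$–$w$ walk in $G_t$ of no greater length. The only edges of $P$ that need not be present in $G_t$ are those incident to vertices born in slot $t+1$ (edge-activation with $d=2$) — older edges may be deleted in slot $t+1$, but deletions only remove edges, so deletions alone cannot cause a decrease; the danger is that new vertices create shortcuts. Let $x$ be a vertex of $P$ born in slot $t+1$ with parent $p \in V_t$. Since $x$ appears on a \emph{shortest} path and $x \notin \{u,w\}$ (both $u,w \in V_t$), $x$ has two distinct neighbors $a, b$ on $P$. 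By the $d=2$ rule, every neighbor of $x$ in $G_{t+1}$ other than $p$ is a vertex of $V_t$ at distance at most $1$ from $p$ in $G_t$, i.e.\ lies in $N_{G_t}[p]$; and $p$ itself is of course in $N_{G_t}[p]$. Hence both $a$ and $b$ lie in $N_{G_t}[p]$, so the length-$2$ subpath $a$–$x$–$b$ can be rerouted in $G_t$ through $p$ (as $a$–$p$–$b$, or directly if $a=p$ or $b=p$, or trivially if $a = b$), without increasing its length. Doing this replacement at every slot-$(t+1)$ vertex on $P$ yields a $u$–$w$ walk in $G_t$ of length at most $\mathrm{dist}_{t+1}(u,w)$, hence $\mathrm{dist}_t(u,w) \leq \mathrm{dist}_{t+1}(u,w)$, which is the per-slot claim.

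A couple of technical points I would need to handle carefully. First, the rerouting must not accidentally shorten the path in a way that invalidates the bound — but that is fine, since we only need an upper bound on the resulting length, and the claim $\mathrm{dist}_t(u,w) \le \mathrm{dist}_{t+1}(u,w)$ is what we want. Second, consecutive vertices of $P$ born in slot $t+1$ cannot be adjacent to each other: by Proposition~\ref{prop:independent set} the slot-$(t+1)$ vertices form an independent set, so on $P$ no two new vertices are consecutive, meaning each new vertex on $P$ genuinely has two \emph{old} neighbors $a,b \in V_t$ and the local reroute through $p$ is well-defined and self-contained (the reroutes at different new vertices do not interfere). Third, one should check the case where $x$ is adjacent on $P$ to another new vertex's reroute vertex — again the independent-set property rules out adjacency of new vertices, so this does not arise.

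The main obstacle, and the part requiring the most care, is precisely the rerouting argument around a newly born vertex: one must correctly invoke the $d=2$ activation constraint to conclude that \emph{all} of $x$'s current neighbors (not just some) lie in the closed neighborhood of its parent in $G_t$, and then verify that replacing $a$–$x$–$b$ by a walk inside $N_{G_t}[p]$ never costs more than $2$ edges. Everything else — the induction on $t$, the observation that edge deletions are harmless, the reduction to the per-slot claim — is routine once this local step is nailed down.
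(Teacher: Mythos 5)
Your proposal is correct, and it formalizes essentially the same underlying idea as the paper's proof---namely, that the $d=2$ rule confines every new vertex's neighborhood to the closed neighborhood of its parent, so new vertices cannot create shortcuts. The paper establishes this via a brief two-case analysis on how a newly born vertex might interact with a shortest path (cases: a new vertex replaces an endpoint, or a new vertex substitutes for an internal path vertex), leaving it to the reader to see that these are the only relevant situations. Your version is noticeably tighter: you make the induction on slots explicit, reduce to a clean per-slot monotonicity claim, and give a constructive rerouting of an arbitrary shortest path in $G_{t+1}$ into a $u$--$w$ walk of no greater length in $G_t$, replacing each local detour $a$--$x$--$b$ through a new vertex $x$ by $a$--$p$--$b$ through its parent $p$. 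Crucially, you invoke Proposition~\ref{prop:independent set} to rule out two consecutive new vertices on the path, which is exactly the gap a careful reader would worry about in the paper's sketch. The result is a more watertight argument that follows the same strategy but closes the exhaustiveness concern inherent in the paper's informal case split.
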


\begin{proof}
    Given that the $d=2$, for any vertex that is generated at slot $t$, edges can only be activated with its parent and with the neighbors of its parent.
    
    Let $u$ be a vertex that is generated for a vertex $u'$ at $t_1$, and $w$ be a vertex that is generated for a vertex $w'$ at $t_2$. Let $G_{t_2-1}$ be the graph at the beginning of slot $t_2$, and $P$, $|P| = d$, be the shortest path between $u$ and $w$ in $G_{t_2-1}$.
    We distinguish two cases:
    
    \vspace{0.2cm}
    
    \begin{enumerate}
        \item For vertex $u$ and/or $w$ new vertices that are connected to all neighbors of $u$ and/or $w$ are generated. In this case, the path that contains the new vertices will clearly be larger than $d$.
        \item In this case, for some vertex $p$ in the path between $u$ and $w$ a new vertex $p'$ is generated and all its edges with the neighbors of $p$ are activated. In this case, the path that passes through $p'$ will clearly have the same size as $P$.
    \end{enumerate}

    It is then obvious that no growth schedule starting from $G_{t_2-1}$ can reduce the shortest distance between $u$ and $w$.
\end{proof}

\begin{proposition}\label{prop:birth path}
    Consider $t_1, t_2$, where $t_1 \leq t_2$, to be the slots in which a pair of vertices $u, w$ is generated by a growth schedule $\sigma$ for graph $G$, respectively, and edge $(u,w)$ is not activated at $t_2$. Then any pair of vertices $v,z$ cannot be neighbors in $G$ if $u$ belongs to the birth path of $v$ and $w$ belongs to the birth path of $z$.
\end{proposition}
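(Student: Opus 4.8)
The plan is to turn the claim into a distance statement and push it down the two birth paths. Since every $G_t$ is connected, it suffices to show $\mathrm{dist}_{G}(v,z)\ge 2$, i.e.\ that $v\neq z$ and $vz\notin E(G)$. Two facts carry the argument. The first is Proposition~\ref{proposition:distance_reduction}: once two vertices have both been generated, their distance never decreases. The second is the following consequence of $d=2$: if $a$ is generated in slot $t'$ for a parent $a'$, then $N_{G_{t'}}(a)\subseteq N_{G_{t'-1}}[a']$, because the only edges ever incident to $a$ join it to $a'$ or to vertices that were at distance $\le 1$ from $a'$ at the start of slot $t'$, and later deletions only remove edges. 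From this I get a \emph{monotonicity} statement: for any vertex $y\in V_{t'-1}$ one has $\mathrm{dist}_{G_{t'}}(a,y)\ge\mathrm{dist}_{G_{t'-1}}(a',y)$, since a shortest $a$--$y$ path in $G_{t'}$ leaves $a$ into $N_{G_{t'-1}}[a']$ and, after rerouting any length-two detour through a vertex born in slot $t'$ (whose two neighbours on that detour lie in a single parent's closed neighbourhood in $G_{t'-1}$, hence at distance $\le 2$ there), becomes an $a'$--$y$ walk in $G_{t'-1}$ of length at most $\mathrm{dist}_{G_{t'}}(a,y)$. Now the base case $v=u$, $z=w$: the edge $uw$ is absent from $G_{t_2-1}$ (as $w\notin V_{t_2-1}$), is not activated in slot $t_2$ by hypothesis, and can never be activated afterwards, since in every slot only edges incident to a vertex born in that slot are activated; hence $uw\notin E(G_{t_2})$, so $\mathrm{dist}_{G_{t_2}}(u,w)\ge 2$ and, by Proposition~\ref{proposition:distance_reduction}, $\mathrm{dist}_{G_t}(u,w)\ge 2$ for all $t\ge t_2$.

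For the general case, write $t_v$ (resp.\ $t_z$) for the slot in which $v$ (resp.\ $z$) is generated, and induct on the combined length of the portion of $B_v$ strictly below $u$ and the portion of $B_z$ strictly below $w$. If both are empty we are in the base case. Otherwise set $T=\max\{t_v,t_z\}$. If $v$ and $z$ are generated in the same slot, then by (the proof of) Proposition~\ref{prop:independent set} they lie at distance $\ge 3$ in every subsequent graph and we are done. So assume, say, that $z$ is generated strictly last; let $z'$ be its parent, which sits one step higher on $B_z$ (so $w$ is still on $B_{z'}$) and lies, together with $v$, in $V_{T-1}$. By the monotonicity statement applied to $z$ at slot $T$, $\mathrm{dist}_{G_T}(v,z)\ge\mathrm{dist}_{G_{T-1}}(v,z')$, and the right-hand side is $\ge 2$ by the induction hypothesis; Proposition~\ref{proposition:distance_reduction} then carries this to all $t\ge T$, hence to $G$. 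The case where $v$ is generated strictly last is symmetric.

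The crux is the rerouting buried in the monotonicity statement: one must verify that a vertex generated in a later slot never shortens the distance between two older vertices — which is exactly what the $d=2$ locality rule prevents, since such a vertex attaches only inside one parent's closed neighbourhood — and that edge deletions never decrease distances, so that the argument can be run in the static snapshot $G_{T-1}$. The one genuinely delicate point is the order of births: the peeling-off in the induction needs the later-generated of $v,z$ to sit strictly below $u$ or $w$, which is automatic once $v$ and $z$ are both generated no earlier than slot $t_2$ (the regime in which the proposition is used, and in which the induction goes through verbatim), but requires extra care when, say, $v$ is generated before $w$ even exists.
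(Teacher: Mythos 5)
Your strategy — recast the claim as a distance lower bound, prove a per-slot monotonicity lemma for a newborn vertex's distances, and peel the later-born endpoint off its birth path — is essentially the same argument the paper sketches (distance non-decrease via Proposition~\ref{proposition:distance_reduction}, plus the $d=2$ locality of new edges), only made explicit where the paper is terse. The monotonicity lemma $\mathrm{dist}_{G_{t'}}(a,y)\ge\mathrm{dist}_{G_{t'-1}}(a',y)$ is correctly derived (edges among old vertices in $G_{t'}$ survive from $G_{t'-1}$, and a newborn detour reroutes through a single parent), and the base case is handled cleanly. Two small quibbles: invoking Proposition~\ref{prop:independent set} for the equal-slot case gives distance $\ge 2$, not $\ge 3$ as you wrote (two newborns can share an old neighbour if their parents do); and the conclusion of the proposition only needs $vz\notin E(G)$, so distance $\ge 2$ in $G$ suffices.

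The serious issue is the ``delicate point'' you flag in your last paragraph but do not resolve, and it is a real obstruction, not merely a matter of care. Your inductive step peels the strictly-later-born of $v,z$ onto its parent and asserts that $w$ remains on the shorter birth path; this is false when that vertex is $w$ itself (the parent of $w$ does not have $w$ on its birth path). That failure is unavoidable in the regime where a descendant of $u$ is born \emph{before} $w$. Concretely: start from $u_0$; in slot $1$ generate $u_1$ from $u_0$; in slot $2$ generate $u_2$ from $u_1$ and activate $u_0u_2$; in slot $3$ generate $u_3$ from $u_0$, activating $u_2u_3$ but \emph{not} $u_1u_3$. Take $u=u_1$, $w=u_3$, $v=u_2$, $z=u_3$: then $u\in B_v$, $w\in B_z$, $uw$ is never activated, yet $vz=u_2u_3\in E$. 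So the statement fails unless one additionally assumes (as the paper's proof and your induction both tacitly do) that $v$ and $z$ are generated no earlier than slot $t_2$. Under that restriction your argument goes through verbatim; without it neither your proof nor the paper's closes the gap, and you should have said so explicitly rather than deferring it as ``requires extra care.''
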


\begin{proof}
    Given that the edge between vertices $u$ and $w$ is not activated, and by Proposition \ref{proposition:distance_reduction}, the children of $u$ will always have distance at least $2$ from $w$ (i.e.,~edges of these children can only be activated with the vertices that belong to the neighborhood of their parent vertex, and no edge activations can reduce their distance).
    Sequentially, the same holds also for the children of $w$.
    All vertices that belong to the progeny $P_u$ of $u$ (i.e.,~each vertex $z$ such that $u \in B_z$) have to be in distance at least $2$ from $w$, therefore they cannot be neighbors with any vertex in $P_w$.
\end{proof}

We will now provide some lower bounds on the number of slots for any growth schedule $\sigma$ for graph $G$.

\begin{lemma}
Assume that graph $G$ has chromatic number $\chi (G)$. Then any growth schedule $\sigma$ that grows $G$ requires at least $\chi (G)$ slots.
\end{lemma}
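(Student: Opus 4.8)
The plan is to turn any growth schedule for $G$ into a proper vertex-colouring of $G$ and then compare the number of colours with $\chi(G)$. The whole argument rests on Proposition~\ref{prop:independent set}, together with the fact that the model never deletes vertices.

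First I would recall that, since no vertex is ever removed, $V(G)$ is the disjoint union of $\{u_0\}$ with the sets $V^+_1,\dots,V^+_k$ of vertices generated in slots $1,\dots,k$. By Proposition~\ref{prop:independent set}, each $V^+_t$ induces an independent set in the target graph $G$. Hence, colouring every vertex $v\neq u_0$ by the index of the slot in which $v$ is generated partitions $V(G)\setminus\{u_0\}$ into at most $k$ independent sets; since every edge of $G$ joins vertices in two distinct parts, this is already a proper colouring of $G-u_0$ with at most $k$ colours. The remaining step is to place the initiator into this colouring without pushing the colour count above $k$: the useful fact here is that slot~$1$ generates at most one vertex, because $|V_0|=1$, so the ``first layer'' of the schedule is essentially a single vertex and $u_0$ can be accommodated among the $k$ slot-classes. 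Putting these together yields a proper colouring of $G$ with at most $k$ colours, and therefore $k\ge\chi(G)$ by the definition of the chromatic number.

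The step I expect to require the most care is precisely this treatment of the initiator $u_0$ together with the (singleton) first slot. The bulk of the statement — that slots $1,\dots,k$ supply $k$ independent sets covering $V(G)\setminus\{u_0\}$, hence a proper colouring on $n-1$ vertices — follows immediately from Proposition~\ref{prop:independent set} and the monotonicity of the vertex sets, and it already delivers the weaker bound $k\ge\chi(G)-1$. Squeezing out the final additive constant is the only place where one genuinely has to reason about how $u_0$ interacts with the later slots of $\sigma$; everything else is routine.
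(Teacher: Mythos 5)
Your argument follows the same route as the paper's proof: by Proposition~\ref{prop:independent set} the slot-classes $V^+_1,\dots,V^+_k$ are independent sets of $G$, so assigning to each non-initiator vertex the index of the slot in which it is generated gives a proper $k$-colouring of $G-u_0$. You are in fact more careful than the paper's one-line proof, which simply asserts that this ``colours $G$ with $k$ colours''; you correctly flag that $u_0$ is never generated, so the $k$ slot-classes cover only $V(G)\setminus\{u_0\}$.

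However, the step that carries all the remaining weight --- that ``$u_0$ can be accommodated among the $k$ slot-classes'' because $|V^+_1|\le 1$ --- is asserted rather than proved, and it is false in general. Take $G=K_n$. By Proposition~\ref{prop:independent set}, each slot generates at most one vertex of any clique, so every growth schedule for $K_n$ has at least $n-1$ slots, and the one-vertex-per-slot clique schedule attains exactly $n-1$. In that schedule $u_0$ is adjacent to every other vertex of $G$, hence to a vertex in every slot-class, so $u_0$ cannot be inserted into any of them and the slot-colouring requires $k+1=n$ colours. Thus the argument you outline only yields $k\ge\chi(G)-1$, which is tight for cliques, and the additive gap cannot be closed by the singleton-first-slot observation. (The paper's own proof has exactly the same deficiency: the $k$ slot-classes together with $\{u_0\}$ give $\chi(G)\le k+1$, not $\chi(G)\le k$; the off-by-one is harmless for the downstream NP-hardness reduction, but not for the lemma as literally stated.)
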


\begin{proof}
Assume that there exists a growth schedule $\sigma$ that can grow graph $G$ in $k<\chi(G)$ slots. By Proposition \ref{prop:independent set}, the vertices generated in each slot $t_i$ for $i=1,2,...,k$ must form an independent set in $G$. Therefore, we could color graph $G$ using $k$ colors which contradicts the original statement that $\chi(G)>k$.
\end{proof}

\begin{lemma}
Assume that graph $G$ has clique number $\omega(G)$. Then any growth schedule $\sigma$ for $G$ requires at least $\omega(G)$ slots.
\end{lemma}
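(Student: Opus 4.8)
The plan is to reduce this to the previous lemma by exploiting the well-known bound $\chi(G) \geq \omega(G)$. Indeed, if $G$ contains a clique on $\omega(G)$ vertices, then every proper coloring of $G$ must assign distinct colors to those $\omega(G)$ vertices, so $\chi(G) \geq \omega(G)$. Combining this with the lemma just proven (any growth schedule for $G$ requires at least $\chi(G)$ slots) immediately gives that any growth schedule for $G$ requires at least $\omega(G)$ slots. This is essentially a one-line corollary.

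Alternatively, if one prefers a self-contained argument that mirrors the chromatic-number proof directly, I would argue as follows. Suppose toward a contradiction that some growth schedule $\sigma$ grows $G$ in $k < \omega(G)$ slots. Fix a clique $K$ on $\omega(G)$ vertices in $G$. By Proposition~\ref{prop:independent set}, the set of vertices generated in any single slot forms an independent set in $G$. Partitioning $V(K)$ according to the slot in which each of its vertices was generated yields a partition of the $\omega(G)$ clique vertices into at most $k$ classes, each an independent set in $G$, hence each of size at most $1$ (since any two vertices of a clique are adjacent). Therefore $\omega(G) \leq k$, contradicting $k < \omega(G)$.

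The proof has no real obstacle; the only thing to be slightly careful about is the edge case handling of the initiator $u_0$, which is "generated" in slot $0$ rather than in some slot $t \geq 1$. This is harmless: one can either view $\{u_0\}$ as the (trivially independent) set generated in slot $0$, giving a partition into $k+1$ parts but with the slot-$0$ part contributing at most one clique vertex, or simply observe that the pigeonhole count $\omega(G) \leq (\text{number of slots counting slot }0) = k+1$ still yields a contradiction when $k \leq \omega(G) - 2$, and the remaining case $k = \omega(G)-1$ is handled by noting that $u_0$ together with one extra vertex per slot still cannot cover a clique of size $\omega(G)$ unless $u_0$ itself lies in $K$, in which case the slot-$0$ singleton $\{u_0\}$ plus $k-1$ further singletons fall short. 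Cleanest is to just cite $\chi(G)\ge\omega(G)$ and the preceding lemma, so I would present that as the main proof and relegate the direct argument to a remark if desired.
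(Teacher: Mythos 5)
Your direct pigeonhole argument is essentially identical to the paper's own proof: the paper invokes Proposition~\ref{prop:independent set} to conclude that each slot contains at most one vertex of a maximum clique, and then applies pigeonhole. Your first, one-line route---reducing to the previous lemma via the standard inequality $\chi(G)\ge\omega(G)$---is also correct and is arguably cleaner, since it avoids repeating the pigeonhole reasoning; the paper does not take this route, but it loses nothing by doing so, since in this paper the two lemmas appear back-to-back and the clique bound is never used where the chromatic bound would not also apply.

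One remark on your edge-case discussion involving the initiator $u_0$: it is somewhat muddled and, as written, contains an arithmetic slip. If $u_0$ lies in a clique $K$ of size $\omega(G)$ and the schedule has $k=\omega(G)-1$ slots, then $u_0$ together with \emph{one} clique vertex generated in each of the $k$ slots does cover all $\omega(G)$ vertices of $K$ (that is $1+k=\omega(G)$, not ``falls short'' as you claim). In other words, under the paper's slot-counting convention---where $u_0$ exists at slot $0$ and is not generated in any slot $t\ge 1$---the sharp bound one gets from the independent-set/pigeonhole argument is $\omega(G)-1$ slots, not $\omega(G)$; the example $G=K_n$, which is grown in $n-1$ slots, confirms this. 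The paper's proof (and its proof of the chromatic-number lemma) silently glosses over this same $\pm 1$, so this is not a defect of your proposal relative to the paper, but your attempted repair does not actually close the gap. If you want a clean statement, either count $u_0$ as occupying an implicit ``slot $0$'' or state the bound as $\omega(G)-1$; trying to argue that $k=\omega(G)-1$ is impossible will not work because it is, in fact, achievable.
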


\begin{proof}
By Proposition \ref{prop:independent set}, we know that every slot must contain an independent set of the graph and it cannot contain more than one vertex from clique $q$. By the pigeon hole principle, it follows that $\sigma$ must have at least $c$ slots.
\end{proof}

We continue by presenting simple algorithms for two basic growth processes that are recurrent both in our positive and negative results. One is the process of growing any path graph and the other is that of growing any star graph. Both returned growth schedules use a number of slots which is logarithmic and a number of excess edges which is linear in the size of the target graph. Logarithmic being a trivial lower bound on the number of slots required to grow graphs of $n$ vertices, both schedules are optimal w.r.t. their number of slots. As will shall later follow from Corollary \ref{lower-bound-lem-line} in Section \ref{subsec:lower-bounds}, they are also optimal w.r.t. the number of excess edges used for this time-bound.

\medskip

\noindent\underline{\textbf{{\em \texttt{Path}} algorithm:}} Let $u_0$ always be the ``left'' endpoint of the path graph being grown. 
For any target path graph $G$ on $n$ vertices, the algorithm computes a growth schedule for $G$ as follows. For every slot $1\leq t\leq \ceil{\log{n}}$ and every vertex $u_i\in V_{t-1}$, for $0\leq i \leq 2^{t-1}-1$, it generates a new vertex $u^\prime_i$ and connects it to $u_i$. Then, for all $0\leq i \leq 2^{t-1}-2$, it connects $u^\prime_i$ to $u_{i+1}$ and deletes the edge $u_iu_{i+1}$. Finally, it renames the vertices $u_0,u_1,\ldots,u_{2^t-1}$ from left to right, before moving on to the next slot. 

Figure \ref{fig:line} shows an example slot produced by the path algorithm. The pseudo-code of the algorithm can be found in \ref{alg-line}.  Note that the pseudo-code growth schedule of Algorithm \ref{alg-line} reserves every edge deletion operation until the last slot.

\begin{figure}
    \centering
    \begin{subfigure}[b]{1.0\textwidth}
		\includegraphics[width=1\linewidth]{}
       \caption{The path graph at the beginning of slot $3$.}
       \label{fig:line_1} 
    \end{subfigure}\hfil \vspace{0.7cm}
    \begin{subfigure}[b]{1.0\textwidth}
		\includegraphics[width=1\linewidth]{}
       \caption{Vertex generation and edge activation step (steps 1 and 2). The arrows represent vertex generations, while dotted lines represent the edges added to vertices of distance 2.}
       \label{fig:line_3}
    \end{subfigure}\hfil \vspace{0.7cm}
    \begin{subfigure}[b]{1.0\textwidth}
		\includegraphics[width=1\linewidth]{}
       \caption{Edge deletion step (step 3) and renaming of vertices.}
       \label{fig:line_4}
    \end{subfigure}
    
    \caption{Third slot of the {\em \texttt{path}} algorithm.}
	\label{fig:line} 
\end{figure}

\begin{lemma}\label{line-construction-lem}
For any path graph $G$ on $n$ vertices, the {\em \texttt{path}} algorithm computes in polynomial time a growth schedule $\sigma$ for $G$ of $\ceil{\log{n}}$ slots and $O(n)$ excess edges.
\end{lemma}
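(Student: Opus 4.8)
The plan is to verify the three claims separately: (i) the schedule produced by the \texttt{path} algorithm correctly grows $G$, (ii) it uses $\ceil{\log n}$ slots, and (iii) it uses $O(n)$ excess edges, and finally (iv) the running time is polynomial. The key structural invariant to establish by induction on $t$ is: \emph{at the start of slot $t$, the graph $G_{t-1}$ is exactly a path on $2^{t-1}$ vertices (or $n$ vertices if $2^{t-1} \ge n$), with vertices named $u_0, u_1, \ldots$ from left to right, $u_0$ being the fixed left endpoint.} I would first treat the idealized case $n = 2^\delta$ and then indicate the minor adjustment for arbitrary $n$ (in the last slot we simply stop generating/connecting once we reach $n$ vertices, i.e.\ we only process a prefix of the $u_i$'s).

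For the inductive step, I would check that each operation in slot $t$ is legal under the $d=2$ model and that the resulting graph is the path on $2^t$ vertices. Concretely: for each $u_i$ with $0 \le i \le 2^{t-1}-1$ a fresh child $u'_i$ is generated and the edge $u_i u'_i$ is activated — legal by the vertex-generation rule (one child per existing vertex). Then for $0 \le i \le 2^{t-1}-2$, the edge $u'_i u_{i+1}$ is activated; this is legal since $u_{i+1}$ is at distance $2$ from $u_i$ via the path edge $u_i u_{i+1}$ present in $G_{t-1}$, so $u_{i+1}$ is within distance $d-1 = 1$ of the parent $u_i$. Deleting $u_i u_{i+1}$ is legal because the graph stays connected: after the deletion the sequence $u_0, u'_0, u_1, u'_1, u_2, \ldots$ forms a path (each $u'_i$ bridges $u_i$ and $u_{i+1}$, and the rightmost new vertex $u'_{2^{t-1}-1}$ dangles off $u_{2^{t-1}-1}$). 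Relabeling this path left-to-right as $u_0, u_1, \ldots, u_{2^t - 1}$ re-establishes the invariant and keeps $u_0$ as the left endpoint. After slot $\delta = \log n$ the graph is the path on $n$ vertices, which is $G$ up to the vertex naming.

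For the slot count, the invariant gives $|V_t| = 2^t$, so $\ceil{\log n}$ slots suffice and (since $|V_t|$ at most doubles per slot, and this is the generic trivial lower bound mentioned in the text) this is exactly the number used. For the excess-edge count: in slot $t$ the algorithm deletes the edges $u_i u_{i+1}$ for $0 \le i \le 2^{t-1}-2$, which is $2^{t-1} - 1$ edges. Summing, $\sum_{t=1}^{\ceil{\log n}} (2^{t-1} - 1) < \sum_{t=1}^{\ceil{\log n}} 2^{t-1} = 2^{\ceil{\log n}} - 1 < 2n = O(n)$ excess edges. (The exact slot at which each excess edge is finally removed is immaterial to the count, as discussed in Section~\ref{sec:approach}; one may equivalently postpone all deletions to the last slot.) Finally, each slot touches $O(2^t) = O(n)$ vertices and edges and does $O(1)$ work per object, and there are $O(\log n)$ slots, so the total running time is polynomial in $n$ (hence in $|\langle G \rangle|$).

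The only mild subtlety — the main thing to get right rather than a genuine obstacle — is the handling of arbitrary $n$ (not a power of two) in the final slot: there the current path has $2^{\ceil{\log n}-1} < n \le 2^{\ceil{\log n}}$ vertices, and we must generate children for exactly $n - 2^{\ceil{\log n}-1}$ of the leftmost vertices (or argue by a clean ``stop at $n$ vertices'' convention), checking that connectivity and the distance-$2$ legality conditions still hold for that prefix; the bookkeeping is routine but should be stated. Everything else is a direct induction.
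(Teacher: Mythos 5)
Your proposal is correct and follows essentially the same approach as the paper: maintain the invariant that after slot $t$ the current graph is a path on $2^t$ vertices, deduce the $\ceil{\log n}$ slot count from the doubling, and bound the excess edges by a direct count of $O(n)$. The main presentational difference is minor — you count the deleted edges slot by slot ($2^{t-1}-1$ per slot, summing to $< 2n$), whereas the paper counts total activations ($2(n-1)$) minus the $n-1$ edges of the final path — and you are somewhat more careful than the paper in spelling out the legality of each operation under $d=2$ and in flagging the adjustment needed when $n$ is not a power of two; neither difference changes the substance of the argument.
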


\begin{proof}
    It is easy to see by the description and by Figure \ref{fig:line} that the graph grown is a path subgraph on $n$ vertices. To expand on this, in every slot, we maintain a path graph but we double its size. This process requires $\ceil{\log{n}}$ slots by design since in every slot, for every vertex the process generates a new vertex (apart from the last slot) and after $\ceil{\log{n}}$ slots, the size of the current graph will be $n$. For the excess edges, consider that in the whole process at the end of every slot $t$, every edge activated in the previous slot $t-1$ is deleted. Every edge activated in the process apart from those in the last slot is an excess edge. For every vertex generation there are at most $2$ edge activations that occur in the same slot and there are $n-1$ total vertex generations in total which means that the total edge activations are $2(n-1)$. Therefore, the excess edges are at most $2(n-1)-(n-1)=O(n)$ since the final path graph has $n-1$ edges. Finally, note that if an excess edge is activated in slot $t$, then it is deleted in slot $t+1$ which results in maximum lifetime of $1$.
\end{proof}

\begin{algorithm}[t]
\caption{Path growth schedule.}
\label{alg-line}
\vspace{0.2cm}
\begin{algorithmic}[1]
\REQUIRE{A path graph $G=(V,E)$ on $n$ vertices.}
\ENSURE{A growth schedule for $G$.}

\STATE{$V=u_1$}
\FOR {$k=1,2,\ldots,\ceil{\log n}$}
    \STATE{$\mathcal{S}_k=\emptyset$; \ \ $V_k=\emptyset$; \ \ $\mathcal{E}=\emptyset$}
    \FOR{each vertex $u_i\in V_k$}
        \STATE $\mu(k) = i + \ceil{n/2^{k}}$
        \IF{$u_{\mu(t)} \in V$}
            
            \STATE {$\mathcal{S}_k=\mathcal{S}_k\cup \{(u_i,u_{\mu(k)},\{u_iu_{\mu(k)}, (u_{\mu(k)}u_{\mu(k-1)}: u_{\mu(k-1)} \in V )\})\}$}
            \IF{$k<\ceil{\log n}$}
                \STATE {$\mathcal{E}=\mathcal{E}\cup \{u_iu_{\mu(k)}, (u_{\mu(k)}u_{\mu(k-1)}: u_{\mu(k-1)} \in V )\})\}$}
            \ENDIF    
            \STATE{$V_k\leftarrow V_k\cup u_{\mu(k)}$}
        \ENDIF
    \ENDFOR
    \STATE{$V\leftarrow V\cup V_k$}
\ENDFOR
\RETURN{$\sigma=(\mathcal{S}_1,\mathcal{S}_2,\ldots,\mathcal{S}_{\ceil{\log n}},\mathcal{E})$} 
\end{algorithmic}
\end{algorithm}

\noindent\underline{\textbf{{\em \texttt{Star}} algorithm:}} The description of the algorithm can be found in Section \ref{sec:approach}.

\begin{lemma}
For any star graph $G$ on $n$ vertices, the {\em \texttt{star}} algorithm computes in polynomial time a growth schedule $\sigma$ for $G$ of $\ceil{\log{n}}$ slots and $O(n)$ excess edges.
\end{lemma}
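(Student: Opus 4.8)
The plan is to verify the three claims made in the lemma about the \texttt{star} algorithm described in Section~\ref{sec:approach}: that it runs in polynomial time, that the resulting schedule has $\ceil{\log n}$ slots, and that it uses $O(n)$ excess edges. (Here I will treat the general case $n$ not necessarily a power of $2$, so $\ceil{\log n}$ rather than $\log n$ appears.) First I would restate the behaviour of the algorithm precisely: it maintains, at the end of each slot $t$, a star centered at $u_0$ on exactly $\min(2^t, n)$ vertices, with $u_0$ adjacent to all other (peripheral) vertices. The invariant to carry through the induction is exactly this: $G_t$ is a star centered at $u_0$ on $\min(2^t,n)$ vertices.

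For the inductive step, suppose $G_{t-1}$ is a star centered at $u_0$ on $m = \min(2^{t-1},n)$ vertices. In slot $t$ the process generates, for each of these $m$ vertices $u$, a new child $u'$ and activates the edge $uu'$ (so the graph temporarily becomes a "double star"/spider). For each peripheral vertex $u \neq u_0$, the newly generated $u'$ sits at distance $2$ from $u_0$ in $G_{t-1}$ (via the path $u' - u - u_0$), so the edge $u_0u'$ is within edge-activation distance $d=2$ and may be activated; the process activates it and then deletes the edge $uu'$. One must check the two legality conditions here: (i) when the edge $u_0 u'$ is activated, the neighbourhood condition $N_t(u') \subseteq N_t(u)$ of the model is satisfied --- indeed at that moment $u'$ is adjacent only to $u$ and (about to be / just) to $u_0$, while $u$ is adjacent to $u_0$; and (ii) deleting $uu'$ does not disconnect the graph, since $u'$ remains attached via $u_0$ and $u$ remains attached via $u_0$. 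After these operations the new vertices generated from $u_0$ and from the peripheral vertices are all adjacent to $u_0$ and to nothing else, and the old peripheral vertices are still adjacent to $u_0$; hence $G_t$ is again a star centered at $u_0$, now on $2m = \min(2^t, 2n)$ vertices. To land exactly on $n$ vertices in the final slot when $n$ is not a power of two, the algorithm in the last slot simply has fewer than $m$ of the existing vertices generate a child (or generates children and the extra ones are pruned); I would note this as a routine adjustment. After $\ceil{\log n}$ slots the star has $n$ vertices, which is the target graph $G$.

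For the excess-edge count: in each slot $t$ with current size $m$, the process activates $m$ edges of the form $uu'$ and $m-1$ edges of the form $u_0u'$ (one for each peripheral vertex), and it deletes the $m-1$ edges $uu'$ for the peripheral $u$. Summing over slots, the total number of deleted edges is $\sum_{t=1}^{\ceil{\log n}} (\,2^{t-1}-1\,) < \sum_{t\geq 1}^{\ceil{\log n}} 2^{t-1} < 2^{\ceil{\log n}} = O(n)$, exactly as sketched in Section~\ref{sec:approach}. (Alternatively: every vertex other than $u_0$ is generated once, contributing at most one activated $uu'$ edge that later becomes excess, so at most $n-1 = O(n)$ excess edges; moreover each such edge has lifetime $1$, being deleted in the same slot it is created.) For polynomial running time: the algorithm does one pass over the at most $n$ vertices per slot, there are $\ceil{\log n}$ slots, and each per-vertex step (generate a vertex, record two edge activations and one deletion) is constant work, so the total running time is $O(n \log n)$, certainly polynomial in $|\langle G\rangle|$.

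I do not expect a serious obstacle here --- the lemma is essentially a clean restatement of the worked example in Section~\ref{sec:approach}. The only points requiring a little care are (a) being careful with ceilings when $n$ is not a power of $2$, so that the schedule ends with exactly $n$ vertices rather than overshooting, and (b) explicitly checking that each activation of an edge $u_0u'$ respects the model's neighbourhood-containment requirement and that each deletion keeps the graph connected. These are the steps I would write out in full in the actual proof; everything else is bookkeeping on the geometric sum.
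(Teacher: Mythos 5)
Your proposal is correct and follows essentially the same approach as the paper's proof: maintain the invariant that $G_t$ is a star on $\min(2^t,n)$ vertices centered at $u_0$, and bound the excess edges either via the geometric sum $\sum_t (2^{t-1}-1)$ or by observing that each of the $n-1$ vertex generations contributes at most one excess edge, giving $O(n)$. You are more explicit than the paper on a few bookkeeping points (verifying the neighbourhood-containment and connectivity conditions, handling non-powers of two, and stating the running time), but these are elaborations rather than a different argument.
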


\begin{proof}
    Let $u_0$ be the initiator and $n$ the size of the star graph. By construction, we can see that in every slot a star graph is maintained. In order to terminate with a star graph of size $n$, we require $\ceil{\log{n}}$ slots since for every vertex a new vertex is generated in each slot and therefore after $\ceil{\log{n}}$ slots, the graph will have size $n$. 
    
    For every vertex generation, there are at most two edge activations. Since there are $n-1$ vertices generated in total, there are $2(n-1)$ total edge activations. Therefore, the excess edges are at most $2(n-1)-(n-1)=O(n)$. Finally, note that if an excess edge is activated in slot $t$, then it is deleted in slot $t+1$ which results in maximum lifetime of $1$.
\end{proof}

\section{Growth Schedules of Zero Excess Edges}\label{sec:zero_excess}

In this section, we study which target graphs $G$ can be grown using $\ell=0$ excess edges for $d=2$. We begin by providing an algorithm that decides whether a graph $G$ can be grown by any schedule $\sigma$. We build on to that, by providing an algorithm that computes a schedule of $k=\log n$ slots for a target graph $G$, if one exists. We finish with our main technical result showing that computing the smallest schedule for a graph $G$ is NP-complete and any approximation of the shortest schedule cannot be within a factor of $ n^{\frac{1}{3}-\varepsilon}$  of the optimal solution, for any $\varepsilon>0$, unless $P=NP$. First, we check whether a graph $G$ has a growth schedule of $\ell=0$ excess edges. Observe that a graph $G$ has a growth schedule if and only if it has a schedule of $k=n-1$ slots.

\begin{definition}\label{def:candidates}
    Let $G = (V,E)$ be any graph. A vertex $v\in V$ can be the last generated vertex in a growth schedule $\sigma$ of $\ell=0$ for $G$ if there exists a vertex $w\in V\setminus\{v\}$ such that $N[v]\subseteq N[w]$. In this case, $v$ is called a {\em candidate} vertex and $w$ is called the \emph{candidate parent} of $v$. Furthermore, the set of candidate vertices in $G$ is denoted by $S_G = \{v\in V : N[v] \subseteq N[w] \text{ for some } w\in V\setminus \{v\}\}$\ see Figure~$\ref{fig:candidates}$.
\end{definition}

\begin{definition}
    A candidate elimination ordering of a graph G is an ordering $v_1,v_2, \ldots,v_n$ of $V(G)$ such that $v_i$ is a candidate vertex in the subgraph induced by $v_i, \ldots,v_n$, for $1\leq i\leq n$.
\end{definition}

\begin{figure}[h]
	\centering
	\begin{minipage}{0.23\textwidth}
		\centering
		\includegraphics[width=1\linewidth]{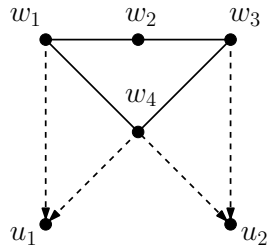}
	\end{minipage}\qquad

	\caption{Consider the above graph $G_t$ to be the graph grown after slot $t$. Vertices $u_1$ and $u_2$ are candidate vertices. The arrows represent all possible vertex generations in the previous slot $t$. Vertices $w_1$ and $w_4$ are candidate parents of $u_1$, while $w_3$ and $w_4$ are candidate parents of $u_2$.}
	\label{fig:candidates}
\end{figure}

\begin{lemma}\label{lem:cop-win}
    A graph $G$ has a growth schedule of $n-1$ slots and $\ell=0$ excess edges if and only if $G$ has a candidate elimination ordering.
\end{lemma}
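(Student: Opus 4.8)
The plan is to prove the equivalence directly in both directions, in each case exploiting the one crucial consequence of $\ell=0$: a zero-excess schedule never deletes an edge, so the successive instances $G_0\subseteq G_1\subseteq\cdots$ form a nested chain of subgraphs of $G$ whose edge sets only ever grow by edges incident to the vertex born in that slot. First I would reduce, via the observation preceding the statement, to schedules that generate exactly one vertex per slot and hence have exactly $n-1$ slots: slowing an arbitrary zero-excess schedule down to one birth per slot keeps it zero-excess, because the vertices born together in a slot are pairwise non-adjacent in $G$ (Proposition~\ref{prop:independent set}), so we never need to connect a freshly generated vertex to a sibling, and introducing a sibling earlier cannot destroy any of the distance-$\le 1$ relations that license a later sibling's edge activations.

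For the forward direction I would take such a schedule $\sigma$, rename the initiator $v_n$, and let $v_{n-t}$ be the vertex generated in slot $t$, so $V(G_s)=\{v_{n-s},\dots,v_n\}$. The key step is to observe that in fact $G_s=G[\{v_{n-s},\dots,v_n\}]$ for every $s$: no edge is ever removed, and every edge activated in a slot is incident to the vertex just born, so no edge between two already-present vertices is ever added; hence $E(G_s)$ already contains every $G$-edge inside $V(G_s)$, and being contained in $E(G)$ it equals that set. Given this, if $u$ is the parent used for $v_i$ in slot $n-i$, then $u\in\{v_{i+1},\dots,v_n\}$, the parent edge $uv_i$ survives, and every other neighbour $w$ of $v_i$ in $G_{n-i}$ was activated only because $w$ is at distance $\le d-1=1$ from $u$ in $G_{n-i-1}$, i.e.\ $w$ is also a neighbour of $u$ in $G_{n-i}$. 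Thus $N_{G_{n-i}}[v_i]\subseteq N_{G_{n-i}}[u]$, which by Definition~\ref{def:candidates} says exactly that $v_i$ is a candidate vertex of $G[\{v_i,\dots,v_n\}]$ with candidate parent $u$; so $v_1,\dots,v_n$ is a candidate elimination ordering.

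For the converse I would reverse this construction. Given a candidate elimination ordering $v_1,\dots,v_n$, I start from $G_0=(\{v_n\},\emptyset)$ and in slot $t=1,\dots,n-1$, writing $i=n-t$, I pick a candidate parent $v_j$ (with $j>i$) witnessing that $v_i$ is a candidate in $G[\{v_i,\dots,v_n\}]$, generate $v_i$ for $v_j$, activate exactly the $G$-edges from $v_i$ to $\{v_{i+1},\dots,v_n\}$, and delete nothing. This is a legal $d=2$ step: the parent edge $v_iv_j$ is a $G$-edge because $v_i\in N[v_i]\subseteq N[v_j]$ and $v_i\neq v_j$; and any other activated edge $v_iw$ has $w\in N[v_j]\setminus\{v_i,v_j\}$, so $w$ and $u=v_j$ are adjacent in $G[\{v_i,\dots,v_n\}]$ and both lie in $V(G_{t-1})=\{v_{i+1},\dots,v_n\}$, which puts $w$ at distance $1$ from $u$ in $G_{t-1}$. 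A one-line induction gives $G_t=G[\{v_{n-t},\dots,v_n\}]$, hence $G_{n-1}=G$; no edge is ever deleted, so the schedule uses $n-1$ slots and $\ell=0$ excess edges.

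I expect the only genuinely delicate point to be the identity $G_s=G[\{v_{n-s},\dots,v_n\}]$ in the forward direction: this is exactly where zero excess is used, and making it airtight requires spelling out both that edges are never removed and that an activated edge always has the freshly born vertex as an endpoint. Everything else is bookkeeping. I would also remark that a candidate vertex is nothing but a dominated/corner vertex in the sense of cop-win graph theory, so Lemma~\ref{lem:cop-win} identifies the zero-excess-growable graphs with the dismantlable (i.e.\ cop-win) graphs --- presumably the reason for the lemma's label and the entry point for the subsequent $\log n$-slot algorithm.
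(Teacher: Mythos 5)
Your proof is correct and takes the same route as the paper: the reverse birth order of a zero-excess schedule is a candidate elimination ordering, and conversely a candidate elimination ordering replayed backwards yields a zero-excess schedule. Your write-up is considerably more careful than the paper's one-paragraph sketch, which treats only the forward implication (and is even slightly off there, placing the domination property in $G_{t+1}$ rather than in $G_t$, where it is needed to match Definition~3); the identity $G_s=G[V(G_s)]$ and the explicit slow-down to one birth per slot via Proposition~\ref{prop:independent set} are precisely the details the paper elides, and your closing observation that this identifies zero-excess-growable graphs with cop-win (dismantlable) graphs matches the remark the paper makes immediately after the lemma.
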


\begin{proof}
    By definition of the model, whenever a vertex $u$ is generated for a vertex $w$ in a slot $t$, only edges between $u$ and vertices in $N[w]$ can be activated, which means that $N[u]\subseteq N[w]$. Since $\ell=0$, this property stays true in $G_{t+1}$. Therefore, any vertex $u$ generated in slot $t$, is a candidate vertex in graph $G_{t+1}$.
\end{proof}

The following algorithm can decide whether a graph has a candidate elimination ordering, and therefore, whether it can be grown with a schedule of $n-1$ slots and $\ell=0$ excess edges. The algorithm computes the slots of the schedule in reverse order. 

\medskip

\noindent\underline{\textbf{{\em \texttt{Candidate elimination ordering}} algorithm}:} Given the graph $G=(V,E)$, the algorithm finds all candidate vertices and deletes an arbitrary candidate vertex and its incident edges. The deleted vertex is added in the last empty slot of the schedule $\sigma$. The algorithm repeats the above process until there is only a single vertex left. If that is the case, the algorithm produces a growth schedule. If the algorithm cannot find any candidate vertex for removal, it decides that the graph cannot be grown. 

\begin{algorithm}[t]
\caption{Candidate elimination order}
\vspace{0.2cm}
\label{cop-win-alg}
\begin{algorithmic}[1]
\REQUIRE{A graph $G=(V,E)$ on $n$ vertices.}
\ENSURE{A growth schedule for $G$, if it exists}

\FOR{$k=n-1$ downto $1$}
    \STATE{$\mathcal{S}_k=\emptyset$}
    \FOR{every vertex $v\in V$}
        \IF[$v$ is a new candidate vertex]{($N[v]\subseteq N[u]$, for some vertex $u\in V\setminus \{v\}) \land (\mathcal{S}_k = \emptyset)$}
            \STATE{$\mathcal{S}_k \leftarrow \{(u,v,\{vw: w\in N(v)\})\}$}
            \STATE{$V \leftarrow V\setminus\{v\}$}
            
        \ENDIF
    \ENDFOR
    \IF{$\mathcal{S}_k = \emptyset$}
        \RETURN{``NO''}
    \ENDIF
\ENDFOR
\RETURN{$\sigma=(\mathcal{S}_1,\mathcal{S}_2,\ldots,\mathcal{S}_{n-1}, \emptyset)$}
\end{algorithmic}
\end{algorithm}

\begin{lemma}\label{lem:removing-candidates}
    Let $v\in S_G$. Then $G$ has a candidate elimination ordering if and only if $G-v$ has a candidate elimination ordering.
\end{lemma}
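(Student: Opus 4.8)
The plan is to prove the two implications separately; the backward one is essentially a one-liner and the forward one is the substantive part, which I would establish by induction on $n=|V(G)|$. For the backward direction, suppose $v\in S_G$ and that $G-v$ has a candidate elimination ordering $v_2,\dots,v_n$. Since $v\in S_G$, it has a dominator $w\in V(G)\setminus\{v\}$, i.e.\ $N_G[v]\subseteq N_G[w]$, so $v$ is itself a candidate vertex of $G$. Because $G[\{v_i,\dots,v_n\}]=(G-v)[\{v_i,\dots,v_n\}]$ for every $i\geq 2$, the sequence $v,v_2,\dots,v_n$ is a candidate elimination ordering of $G$.

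For the forward direction I would induct on $n$, the case $n=1$ being vacuous since then $S_G=\emptyset$. Assume $G$ has a candidate elimination ordering and let $u$ be its first vertex: then $u$ is dominated by some $x\neq u$, and $G-u$ has a candidate elimination ordering. Fix a dominator $w\neq v$ of $v$, and assume $u\neq v$ (otherwise we are done). The engine of the argument is the trivial remark that a containment $N_G[a]\subseteq N_G[b]$ survives deleting any vertex $c\neq b$, since $N_{G-c}[a]=N_G[a]\setminus\{c\}\subseteq N_G[b]\setminus\{c\}=N_{G-c}[b]$; I will also use that $G-v$ has a candidate elimination ordering as soon as it has a candidate vertex whose deletion leaves a graph that itself has one. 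If $N_G[u]=N_G[v]$ (the \emph{twin case}), then the bijection $V(G)\setminus\{u\}\to V(G)\setminus\{v\}$ sending $v\mapsto u$ and fixing every other vertex is an isomorphism $G-u\to G-v$, so $G-v$ inherits a candidate elimination ordering from $G-u$. Otherwise $N_G[u]\neq N_G[v]$, and then $v$ has a dominator $w'\notin\{u,v\}$: take $w'=w$ if $w\neq u$; and if $w=u$, then $N_G[v]\subsetneq N_G[u]\subseteq N_G[x]$ forces $x\neq v$ (else $N_G[u]=N_G[v]$), so $x\notin\{u,v\}$ dominates $v$ and we take $w'=x$. By the remark, $v\in S_{G-u}$, so the induction hypothesis gives that $(G-u)-v$ has a candidate elimination ordering. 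Moreover $u\in S_{G-v}$: its dominator there is $x$ when $x\neq v$, and when $x=v$ (which, in this branch, forces $w\neq u$) it is $w$, since $N_G[u]\subseteq N_G[v]\subseteq N_G[w]$ with $w\notin\{u,v\}$. As $(G-v)-u=(G-u)-v$, we conclude that $G-v$ has a candidate elimination ordering, which closes the induction.

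\textbf{Main obstacle.} The one genuinely delicate point is the twin case $N_G[u]=N_G[v]$: here a naive ``exchange'' argument fails, because deleting the competing candidate $u$ can strip $v$ of all its dominators — for instance, in $K_4$ minus an edge $ab$ the two remaining vertices are adjacent twins, and deleting one of them leaves the path $a$--$\ast$--$b$, whose middle vertex is no longer dominated. The resolution is precisely the isomorphism $G-u\cong G-v$, which is available exactly when the two candidates are closed-neighborhood twins; everything outside this case reduces to bookkeeping about closed-neighborhood containments surviving single-vertex deletions.
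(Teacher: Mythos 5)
Your proof is correct, and it takes a genuinely different route from the paper's.

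The paper's proof of the forward direction is a direct, non-inductive transformation of a given candidate elimination ordering of $G$ into one for $G-v$: it moves $v$ to the end of the ordering, and when $v$ has descendants, it re-parents the progeny of $v$ onto its dominator $w$, then drops $v$. Your proof instead inducts on $|V(G)|$ and splits on the relationship between $v$ and the first vertex $u$ of the given ordering: the case $u=v$ is immediate; the twin case $N_G[u]=N_G[v]$ is handled via the isomorphism $G-u\cong G-v$; and the non-twin case shows $v\in S_{G-u}$ and $u\in S_{G-v}$, applies the inductive hypothesis to $(G-u)-v$, and then invokes the already-proved backward implication to reattach $u$ in $G-v$. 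Your route has two nice features: it isolates the genuinely delicate situation (closed-neighborhood twins) and dispatches it cleanly with an isomorphism rather than an exchange/re-parenting argument, and it reuses the easy direction of the very lemma being proved, which keeps the structure modular. The paper's route is more hands-on and constructive — it explicitly produces the new ordering rather than composing with an isomorphism — but the re-parenting step requires some care to justify when one passes from $v$'s direct children to deeper descendants, a subtlety your argument sidesteps entirely. One tiny nit: your ``engine'' remark should state $c\notin\{a,b\}$ rather than just $c\neq b$ (you need $a$ to survive the deletion as well); all your invocations of the remark do satisfy the stronger hypothesis, so nothing breaks.
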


\begin{proof}
    Let $c$ be a candidate elimination ordering of $G-v$. Then, generating vertex $v$ at the end of $c$ trivially results in a candidate elimination ordering of $G$.
    
    Conversely, let $c$ be a candidate elimination ordering of $G$. If $v$ is the last vertex in $c$, then $c\setminus \{v\}$ is trivially a candidate elimination ordering of $G-v$. Suppose that the last vertex of $c$ is a vertex $u\neq v$. As $v\in S_G$ by assumption, there exists a vertex $w\neq v$ such that $N[v]\subseteq N[w]$. If $v$ does not give birth to any vertex in $c$ then we can move $v$ to the end of $c$, i.e.,~right after vertex $u$. Let $c'$ be the resulting candidate elimination ordering of $G$; then $c'\setminus v$ is a candidate elimination ordering of $G-v$, as the parent-child relations of $G-v$ are the same in both $c'\setminus v$ and $c$.

    Finally suppose that $v$ gives birth to at least one vertex, and let $Z$ be the set of vertices which are born by $v$ or by some descendant of $v$. 
    If $w$ appears before $v$ in $c$, then for any vertex in $Z$ we assign its parent to be $w$ (instead of $v$). Note that this is always possible as $N[v]\subseteq N[w]$. Now suppose that $w$ appears after $v$ in $c$, and let $Z_0 = \{z\in Z : v<_c z <_c w\}$ be the vertices of $Z$ which lie between $v$ and $w$ in $c$. Then we move all vertices of $Z_0$ immediately after $w$ (without changing their relative order). Finally, similarly to the above, for any vertex in $Z$ we assign its parent to be $w$ (instead of $v$). In either case (i.e.,~when $w$ is before or after $v$ in $c$), after making these changes we obtain a candidate elimination ordering $c''$ of $G$, in which $v$ does not give birth to any other vertex. Thus we can obtain from $c''$ a new candidate elimination ordering $c'''$ of $G$ where $v$ is moved to the end of the ordering. Then $c'''\setminus v$ is a candidate elimination ordering of $G-v$, as the parent-child relations of $G-v$ are the same in both $c'''\setminus v$ and $c''$. 
\end{proof}

\begin{theorem}
    The {\em \texttt{candidate elimination ordering}} algorithm is a polynomial-time algorithm that, for any graph $G$, decides whether $G$ has a growth schedule of $n-1$ slots and $\ell=0$ excess edges, and it outputs such a schedule if one exists.
\end{theorem}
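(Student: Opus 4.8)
The plan is to reduce the statement to the two lemmas already in hand. By Lemma~\ref{lem:cop-win}, $G$ has a growth schedule of $n-1$ slots and $\ell=0$ excess edges if and only if $G$ admits a candidate elimination ordering, so it suffices to prove that the \texttt{candidate elimination ordering} algorithm (i) correctly decides whether $G$ has a candidate elimination ordering, (ii) in the affirmative case returns such a schedule, and (iii) runs in polynomial time.

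\textbf{Correctness.} I would argue by induction on $n = |V(G)|$. If $n=1$ the algorithm trivially succeeds and the empty schedule works. Otherwise the algorithm looks for a candidate vertex of the current graph. If none exists, it answers ``NO''; this is correct because any candidate elimination ordering of a graph on at least two vertices must begin (at the high-index end) with a candidate vertex of that graph, so the absence of one rules out a candidate elimination ordering, and Lemma~\ref{lem:cop-win} then rules out the desired schedule. If a candidate vertex $v$ is found (with a candidate parent $u$ satisfying $N[v]\subseteq N[u]$ in the current graph), the algorithm records $v$ into the last empty slot and recurses on $G-v$, which has $n-1$ vertices. By Lemma~\ref{lem:removing-candidates}, $G$ has a candidate elimination ordering iff $G-v$ does, and by the induction hypothesis the algorithm decides the latter correctly; moreover, appending $v$ to a candidate elimination ordering of $G-v$ gives one for $G$, which is exactly what the recorded slots encode. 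This establishes (i).

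\textbf{Output is a valid schedule.} When the algorithm succeeds it returns $\sigma = (\mathcal{S}_1,\dots,\mathcal{S}_{n-1},\emptyset)$, where each $\mathcal{S}_k$ is a single tuple $(u,v,\{vw : w\in N(v)\})$ with $v$ the vertex eliminated at that step and $u$ its candidate parent in the graph present at that step. I would verify this is a legitimate $d=2$ growth schedule with $\ell=0$: replaying the slots from the single surviving vertex in increasing order of $k$ inverts the eliminations, so the final instance is exactly $G$; in slot $k$ exactly one vertex $v$ is generated for $u$, and since $N[v]\subseteq N[u]$ at that moment we have $u\in N(v)$, so $uv$ is activated, and every other neighbour $w$ of $v$ lies in $N(u)$, i.e.\ at distance $1$ from $u$, so activating $vw$ is permitted; finally no edge is ever deleted, giving $\ell=0$ over all $n-1$ slots. (This is precisely the schedule construction underlying Lemma~\ref{lem:cop-win}.) Hence (ii) holds.

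\textbf{Running time and main obstacle.} Each iteration scans every vertex $v$ and tests, for each $w\neq v$, whether $N[v]\subseteq N[w]$ — an $O(n)$ check per pair using the adjacency matrix — so $O(n^3)$ per iteration, and there are at most $n-1$ iterations, giving a polynomial $O(n^4)$ bound; this gives (iii). The only substantive ingredient is Lemma~\ref{lem:removing-candidates}, the exchange argument showing that peeling off an \emph{arbitrary} candidate vertex never destroys a candidate elimination ordering, which legitimises the algorithm's greedy choice; this is assumed here, so the remaining work is bookkeeping. The point requiring a little care is confirming that the tuples recorded by the algorithm satisfy the model's edge-activation constraint \emph{at the slot where they occur} (done above via the $N[v]\subseteq N[u]$ invariant), together with the ``stuck'' case analysis showing ``no candidate vertex'' forces ``no candidate elimination ordering.''
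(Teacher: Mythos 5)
Your proposal is correct and follows essentially the same approach as the paper: rely on Lemma~\ref{lem:cop-win} to equate the existence of the schedule with the existence of a candidate elimination ordering, and on Lemma~\ref{lem:removing-candidates} to justify the greedy peeling of an arbitrary candidate vertex at each step. You spell out the induction and the verification that the output tuples respect the $d=2$ edge-activation constraint more explicitly than the paper does, but there is no substantive difference in method.
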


\begin{proof}
    First note that we can find the candidate vertices in polynomial time, and thus, the algorithm terminates in polynomial time. This is because the algorithm removes one candidate vertex $u$ in each loop, which based on Lemma \ref{lem:removing-candidates}. By reversing the ordering of the removed vertices the algorithm can produce a growth schedule for $G$ if one exists.
\end{proof}

The notion of candidate elimination orderings turns out to coincide with the notion of cop-win orderings, discovered in the past in graph theory for a class of graphs, called cop-win graphs \cite{LinSS12,Bandelt91,poston71}. In particular, it is not hard to show that \emph{a graph has a candidate elimination ordering if and only if it is a cop-win graph}. This implies that our {\em \texttt{candidate elimination ordering}} algorithm is probably equivalent to some folklore algorithms in the literature of cop-win graphs. 

\begin{lemma}
    There is a modified version of the candidate elimination ordering algorithm that computes in polynomial time a growth schedule for any graph $G$ of $n-1$ slots and $\ell$ excess edges, where $\ell$ is a constant, if and only if such a schedule exists.
\end{lemma}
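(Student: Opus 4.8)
The plan is to reduce to the zero-excess case already handled by the {\em \texttt{candidate elimination ordering}} algorithm. The key structural observation I would use is that a deleted edge can never be re-activated: activating an edge $xy$ in a slot $t$ requires that one of $x,y$ be the vertex generated in slot $t$, but once $xy$ has been deleted both of its endpoints already exist, so $xy$ stays gone forever. In particular, in any growth schedule $\sigma$ of $n-1$ slots and at most $\ell$ excess edges for $G$, the set $F:=\bigcup_t E^-_t$ of deleted edges consists of pairwise distinct edges, $|F|\le\ell$, and $F\cap E(G)=\emptyset$ (an edge of $G=G_{n-1}$ is present at the end, hence was never deleted). Building on this, I would first prove the equivalence: $G$ admits a growth schedule of $n-1$ slots and at most $\ell$ excess edges if and only if there is an edge set $F$ with $F\cap E(G)=\emptyset$ and $|F|\le\ell$ such that $G\cup F:=(V(G),E(G)\cup F)$ has a candidate elimination ordering.

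For the easy direction, given such an $F$ I would run the {\em \texttt{candidate elimination ordering}} algorithm on $G\cup F$ to obtain a zero-excess schedule $\tau$ of $n-1$ slots; since each generated vertex is joined at least to its parent and $G\cup F$ is connected (the target of every schedule is connected, so $G$ is, and $G\cup F\supseteq G$), all intermediate graphs of $\tau$ are connected, so appending a single edge-deletion operation that removes all of $F$ in the last slot is legitimate and yields a schedule of $n-1$ slots and $|F|\le\ell$ excess edges growing $(G\cup F)\setminus F=G$. For the converse, starting from $\sigma$ and $F$ as above, I would form $\sigma'$ by discarding every edge-deletion operation while keeping all vertex generations and edge activations. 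The only schedule constraint to recheck is that when a vertex $v$ is generated for its parent $p(v)$ in slot $t$, the neighborhood of $v$ is contained in $N[p(v)]$ in the current graph; this held in $\sigma$ with respect to $G_{t-1}$, and the current graph of $\sigma'$ before slot $t$ is a supergraph of $G_{t-1}$ on the same vertex set, so closed neighborhoods only grow and the constraint still holds. Since $\sigma'$ performs no deletions, the final neighborhood of each $v$ is exactly its $\sigma$-final neighborhood together with the $F$-edges incident to $v$, i.e.\ $N_{G\cup F}[v]$; hence $\sigma'$ grows $G\cup F$ with $n-1$ slots and zero excess edges, and Lemma~\ref{lem:cop-win} supplies a candidate elimination ordering of $G\cup F$.

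Given the equivalence, the modified algorithm is immediate: enumerate all edge sets $F$ of size at most $\ell$ among the non-edges of $G$ --- there are $O(n^{2\ell})$ of them, a polynomial number when $\ell$ is constant --- and for each run the {\em \texttt{candidate elimination ordering}} algorithm on $G\cup F$; if it succeeds for some $F$, output the resulting zero-excess schedule for $G\cup F$ followed by the deletion of $F$ in its last slot, and otherwise report that no schedule with these parameters exists. Correctness follows from the equivalence, and the total running time is polynomial because there are polynomially many candidates $F$ and each candidate-elimination check runs in polynomial time. I expect the main obstacle to be the converse direction of the equivalence --- one must argue carefully that stripping all deletions from $\sigma$ neither violates the locality constraint nor produces a proper supergraph of $G\cup F$; this is exactly where the impossibility of re-activating a deleted edge does the work, both to bound $|F|$ and to pin down the final neighborhoods.
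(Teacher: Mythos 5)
Your proposal is correct and follows the same approach as the paper: enumerate all $O(n^{2\ell})$ ways to add at most $\ell$ non-edges to $G$ and, for each augmented graph $G\cup F$, run the {\em \texttt{candidate elimination ordering}} algorithm. The paper's proof merely asserts that this brute-force-over-$F$ reduction is correct, whereas you actually establish the underlying equivalence --- in particular the observation that a deleted edge can never be re-activated (so the excess edges of any schedule form a fixed set $F\subseteq \binom{V}{2}\setminus E(G)$ of size at most $\ell$, and stripping deletions from $\sigma$ yields a zero-excess schedule for $G\cup F$) --- which is exactly the missing justification and a genuine improvement in rigor over the paper's argument.
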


\begin{proof}
    The candidate elimination ordering algorithm can be slightly modified to check whether a graph $G=(V,E)$ has a growth schedule of $n-1$ slots and $\ell$ excess edges. The modification is quite simple. For $\ell=1$, we create multiple graphs $G'_x$ for $x=1, 2, \ldots, \frac{n(n-1)}{2} - |E|$ where each graph $G'_x$ is a copy of $G$ with the addition of one edge $e\notin E$, and we do this for all possible edge additions. In essence, we create $G'_x=(V'_x,E'_x)$, where $V'_x=V$ and $E'_x=E\cup uv$ such that $uv\not\in E$ and $(E'_j \neq E'_i)$, for all $i \neq j$. Since the complement of $G$ has $e\leq \frac{n(n-1)}{2}$ edges, we will create up to $\frac{n(n-1)}{2}$ graphs $G'_x$. We then run the candidate elimination ordering algorithm on all $G'_x$. If the algorithm returns ``no'' for all of them, then there exists no growth schedule for $G$ of $n-1$ slots and $1$ excess edge. Otherwise, the algorithm outputs a schedule of $n-1$ slots and $1$ excess edge for graph $G$. This process can be  modified for any $\ell$, but then the number of graphs $G'_x$ tested is at most $\frac{n^{\ell}(n-1)}{2}$. Therefore if $\ell$ is a constant, all graphs $G'_x$ can be checked in polynomial time.
\end{proof}

Our next goal is to decide whether a graph $G=(V,E)$ on $n$ vertices has a  growth schedule $\sigma$ of $\log n$ slots and $\ell=0$ excess edges. The {\em \texttt{fast growth}} algorithm computes the slots of the growth schedule in reverse order. 

\medskip
\noindent\underline{\textbf{{\em \texttt{Fast growth}} algorithm}:} The algorithm finds set $S_G$ of candidate vertices in $G$. It then tries to find a subset $L\subseteq S_G$ of candidates that satisfies all of the following properties:

\begin{enumerate}
   \item $|L|=n/2$. 
   \item $L$ is an independent set. 
   \item There is a perfect matching between the candidate vertices in $L$ and their candidate parents in $G$.
\end{enumerate}

Any set $L$ that satisfies the above constraints is called {\em valid}. The algorithm finds such a set by creating a 2-SAT formula $\phi$ whose solution is a valid set $L$. If the algorithm finds such a set $L$, it adds the vertices in $L$ to the last slot of the schedule. It then removes the vertices in $L$ from graph $G$ along with their incident edges. The above process is then repeated to find the next slots. If at any point, graph $G$ has a single vertex, the algorithm terminates and outputs the schedule. If at any point, the algorithm cannot find a valid set $L$, it outputs ``no''.

\begin{algorithm}[t!]
\caption{Fast growth algorithm}
\vspace{0.2cm}
\label{fast-cop-win-alg}
\begin{algorithmic}[1]
\REQUIRE{A graph $G=(V,E)$ on $n=2^\delta$ vertices.}
\ENSURE{A growth schedule of $k=\log n$ slots and $\ell=0$ excess edges for $G$.}
\FOR{$k=\log n$ downto $1$}
    \STATE{$\mathcal{S}_k=\emptyset$; \ \ $\phi=\emptyset$}
    \STATE{Find a perfect matching $M=\{u_iv_i:1\leq i \leq n/2\}$ of $G$.}
    \IF{No perfect matching exists}
        \RETURN{"NO"}
    \ENDIF    
    \FOR {every edge $u_iv_i \in M$}
        \STATE{Create variable $x_i$}
    \ENDFOR    
    \FOR{every edge $u_iv_i \in M$}   
        \IF[$u_i$ is a candidate vertex and $v_i$ is not.]{$(N[u_i]\subseteq N[w]$, for some vertex $w\in V\setminus \{u_i\}) \ \land \ (N[v_i]\not\subseteq N[x]$, for any vertex $x\in V\setminus \{v_i\})$} 
            \STATE{$\phi \leftarrow \phi \land {(\neg x_i)}$}
        \ELSIF[$u_i$ is a not candidate and $v_i$ is a candidate]{($N[u_i]\not\subseteq N[w]$ for any vertex $w\in V\setminus \{u_i\}) \  \land \ (N[v_i]\subseteq N[x]$, for some vertex $x\in V\setminus \{v_i\})$} 
            \STATE{$\phi \leftarrow \phi \land (x_i)$}
        \ELSIF[$u_i$ is not a candidate and $v_i$ is not a candidate]{($N[u_i]\not\subseteq N[w]$, for some vertex $w\in V\setminus \{u_i\}) \ \land \ (N[v_i]\not\subseteq N[x]$, for some vertex $x\in V\setminus \{v_i\})$} 
            \RETURN{"NO"}
        \ENDIF        
    \ENDFOR   
    \FOR{every edge $u_iu_j\in E\setminus M$}
        \STATE{$\phi \leftarrow \phi \land (x_i\lor x_j)$}
    \ENDFOR     
    \FOR{every edge $v_iv_j\in E\setminus M$}
        \STATE{$\phi \leftarrow \phi \land (\neg{x_i}\lor {\neg x_j})$}
    \ENDFOR 
    \FOR{every edge $u_iv_j\in E\setminus M$}
        \STATE{$\phi \leftarrow \phi \land (x_i\lor {\neg x_j})$}
    \ENDFOR
    
    \IF{$\phi$ is satisfiable}
        \STATE{Let $\tau$ be a satisfying truth assignment for $\phi$}
        \FOR{$i=1,2,\ldots,n/2$}
            \IF{$x_i=true$ in $\tau$}
                \STATE{$\mathcal{S}_k\leftarrow \mathcal{S}_k\cup (u_i,v_i,\{v_iw: w\in N(v_i)\})$}
                \STATE{$V \leftarrow V\setminus\{v_i\}$}
                \STATE{$E \leftarrow E\setminus\{v_iw: w\in N(v_i)\}$}
            \ELSE[$x_i=false$ in $\tau$]
                \STATE{$\mathcal{S}_k\leftarrow \mathcal{S}_k\cup (v_i,u_i,\{u_iw: w\in N(u_i)\})$}
                \STATE{$V \leftarrow V\setminus\{u_i\}$}
                \STATE{$E \leftarrow E\setminus\{u_iw: w\in N(u_i)\}$}
            \ENDIF    
        \ENDFOR
    \ELSE[$\phi$ is not satisfiable]
        \RETURN{"NO"}
    \ENDIF
\ENDFOR
\RETURN{$\sigma=(\mathcal{S}_1,\mathcal{S}_2,\ldots,\mathcal{S}_{k},\emptyset)$}

\end{algorithmic}
\end{algorithm}

\begin{lemma}\label{lem:perfect_matching1}
    Consider any graph $G=(V,E)$. If $G$ has a growth schedule of $\log n$ slots and $\ell=0$ excess edges then there exists a perfect matching $M$ that contains a valid candidate vertex set $L$, where $L$ has exactly one vertex for each edge of the perfect matching $M$.
\end{lemma}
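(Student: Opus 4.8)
The plan is to read the required perfect matching off the last slot of the given schedule. First I would use a counting argument: since $n = 2^\delta$ and the schedule has exactly $\log n = \delta$ slots, and since in each slot every existing vertex generates \emph{at most} one child, the only way to pass from $|V_0| = 1$ to $|V_k| = n$ in $\delta$ slots is for the vertex count to double in every slot. In particular $G_{k-1}$ has exactly $n/2$ vertices and, in slot $k$, every vertex of $G_{k-1}$ generates exactly one child. Let $L$ be the set of these $n/2$ children (equivalently, the set of vertices generated in slot $k$), and let $M$ be the set of parent--child pairs of slot $k$. Since $V(G_{k-1})$ and $L$ partition $V(G)$ and $M$ matches each vertex of $G_{k-1}$ to a distinct child in $L$, the set $M$ is a perfect matching of $G$, and exactly one endpoint of each edge of $M$ lies in $L$ (the child) while the other lies outside $L$ (the parent).

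Next I would verify that $L$ is \emph{valid} in the sense of the three stated conditions. Condition~1, $|L| = n/2$, is immediate from the doubling observation. Condition~2, that $L$ is an independent set of $G$, is exactly Proposition~\ref{prop:independent set} applied to slot $k$. For condition~3, and for the requirement $L \subseteq S_G$, I would argue as follows: fix $v \in L$ with parent $u \in V(G_{k-1})$. By the edge-activation rule for $d = 2$, every neighbor of $v$ in $G_k$ is either $u$ itself or a vertex that was a neighbor of $u$ in $G_{k-1}$; and since $\ell = 0$ no edge incident to $u$ is ever deleted, so $N_{G_{k-1}}(u) \subseteq N_{G_k}(u)$. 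Combining these with the fact that $uv$ is activated in slot $k$ gives $N_{G_k}[v] \subseteq N_{G_k}[u]$, and as $G = G_k$ by the model, $N_G[v] \subseteq N_G[u]$ with $u \neq v$ (because $u \in V(G_{k-1})$ while $v \notin V(G_{k-1})$). Hence $v$ is a candidate vertex and $u$ is one of its candidate parents in the sense of Definition~\ref{def:candidates}. Since $M$ matches each $v \in L$ to such a $u$ and distinct children get distinct parents, $M$ itself already is a perfect matching between the candidate vertices in $L$ and candidate parents in $G$. Thus $L$ is valid, and the perfect matching $M$ of $G$ contains $L$ with one $L$-vertex per edge, as claimed.

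I do not anticipate a serious obstacle: the statement is essentially an unpacking of the definitions once the doubling observation is in hand. The two points that need care are (i) making the ``every slot must double'' argument watertight (no slot can afford to generate fewer than one child per vertex if the total slot count is $\log n$), and (ii) transferring the inclusion $N[v] \subseteq N[u]$ from the instant of generation to the final target graph $G$ — this is the only place where the hypothesis $\ell = 0$ is genuinely used, since with excess edges a later deletion (even of an edge not incident to $v$, used merely as a relay) could in principle spoil the inclusion. I would also state explicitly that the witness parent is distinct from its child, as Definition~\ref{def:candidates} requires.
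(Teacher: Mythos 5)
Your proof is correct and follows the same route as the paper: read $M$ and $L$ off the parent--child pairs of the final slot, using the forced doubling to get $|L|=n/2$. In fact the paper's own proof is far terser (it merely observes that $n/2$ parents generate $n/2$ children and asserts the conclusion), so your explicit verification of the three validity conditions — independence via the slot-independent-set proposition and $N_G[v]\subseteq N_G[u]$ via $d=2$ together with $\ell=0$ — supplies detail the paper leaves implicit.
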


\begin{proof}
    Let us assume that graph $G$ has a growth schedule. Then in the last slot, there are $n/2$ vertices, called parents, for which $n/2$ other vertices, called children, are generated. Therefore, such a perfect matching $M$ always exists where set $L$ contains the children.
\end{proof}

\begin{lemma}\label{lem:2SAT}
    The 2-SAT formula $\phi$, generated by the {\em \texttt{fast growth algorithm}}, has a solution if and only if there is an independent set $|V_2|=n/2$, where $V_2$ is a valid set of candidate vertices in graph $G=(V,E)$.
\end{lemma}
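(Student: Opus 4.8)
The plan is to show that a satisfying assignment of $\phi$ corresponds exactly to a choice, for each matching edge $u_iv_i$, of which endpoint is to be ``born'' in the last slot (the child) and which is the parent, subject to the three validity constraints. Concretely, for each edge $u_iv_i \in M$, interpret the Boolean variable $x_i$ as: $x_i = \mathit{true}$ means $v_i$ is selected into the candidate set (born, i.e.\ placed in $L$) and $u_i$ is its parent; $x_i = \mathit{false}$ means $u_i$ is selected into $L$ with $v_i$ as its parent. Under this dictionary, the set $L$ read off from any truth assignment $\tau$ automatically has size exactly $n/2$ (one vertex per matching edge) and automatically admits a perfect matching to its parents (namely the other endpoint of the same matching edge). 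So the only two conditions that need to be \emph{enforced} by clauses are (i) every chosen vertex is in fact a candidate vertex in $G$, and (ii) $L$ is independent.

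\textbf{Forward direction ($\phi$ satisfiable $\Rightarrow$ valid $L$ exists).} Suppose $\tau$ satisfies $\phi$, and let $L$ be the set obtained as above. First, each $v_i$ (resp.\ $u_i$) put into $L$ is a candidate vertex: the unit clauses forced by the algorithm handle the two ``exactly-one-endpoint-is-a-candidate'' cases (forcing $\neg x_i$ when only $u_i$ is a candidate, forcing $x_i$ when only $v_i$ is), and the ``neither endpoint is a candidate'' case causes the algorithm to output ``no'' before building $\phi$, so it does not arise; when both endpoints are candidates, either choice of $x_i$ yields a candidate vertex in $L$, which is fine. Second, $L$ is independent: I must check no edge of $G$ has both endpoints in $L$. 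An edge of $G$ is either in $M$ or in $E\setminus M$. A matching edge $u_iv_i$ never has both endpoints in $L$ since exactly one of $u_i,v_i$ is chosen. A non-matching edge has one of three types — $u_iu_j$, $v_iv_j$, or $u_iv_j$ — and for each type the algorithm adds the clause $(x_i \vee x_j)$, $(\neg x_i \vee \neg x_j)$, $(x_i \vee \neg x_j)$ respectively; unpacking the dictionary, these clauses say precisely ``not both of the two $M$-endpoints lying on this edge are chosen into $L$.'' Hence $\tau \models \phi$ forces $L$ independent. Therefore $L$ is a valid candidate set of size $n/2$.

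\textbf{Reverse direction (valid $L$ exists $\Rightarrow$ $\phi$ satisfiable).} Conversely, suppose $L$ is a valid candidate vertex set: $|L|=n/2$, $L$ independent, and there is a perfect matching $M'$ between $L$ and its candidate parents. Here I need a small bridging step: the algorithm works with \emph{its own} chosen perfect matching $M$, not $M'$. The cleanest route is to observe that $L$ together with a perfect matching to its parents forms a perfect matching of $G$ (every vertex is either in $L$ or is the parent of exactly one vertex of $L$ — this uses $|L| = n/2$ and the matching being perfect), so without loss of generality I may take $M = M'$ (any perfect matching the algorithm picks will do, but the argument is transparent for this one, and I should remark that since $\phi$'s satisfiability is what we are asserting and $L$ gives a witness, it suffices to exhibit one perfect matching making $\phi$ satisfiable — alternatively, argue the statement is independent of which perfect matching is fixed, as in Lemma~\ref{lem:perfect_matching1}). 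With $M = M'$ fixed, define $\tau$ by the dictionary: $x_i = \mathit{true}$ iff the $L$-endpoint of $u_iv_i$ is $v_i$. Then every unit clause is satisfied because the forced endpoint is the unique candidate and hence must be the one in $L$ (if $v_i$ were the unique candidate but $u_i \in L$, then $u_i$ would be a candidate, contradiction — so the $L$-endpoint is $v_i$, giving $x_i = \mathit{true}$, matching the unit clause $(x_i)$; symmetrically for the other case). Every binary clause is satisfied because $L$ is independent, so for each non-matching edge at most one of its two $M$-endpoints is in $L$, which is exactly what the corresponding clause asserts. Thus $\tau \models \phi$.

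\textbf{Main obstacle.} The routine parts are unpacking the clause-to-constraint dictionary and the case analysis on edge types; these are mechanical. The one genuinely delicate point is the mismatch between the ``perfect matching $M$'' hard-wired into the algorithm and the matching implicit in an arbitrary valid $L$. I expect to resolve it by the observation above — a valid $L$ canonically induces a perfect matching of $G$ — combined with (the proof of) Lemma~\ref{lem:perfect_matching1}, so that fixing any perfect matching is without loss of generality. A secondary subtlety is making sure the ``neither endpoint is a candidate'' early-``no'' branch is correctly accounted for: if such a matching edge exists, then no independent half-set can avoid it while still being all-candidates on that edge, so indeed no valid $L$ exists and returning ``no'' is correct — this should be folded into the reverse direction as the observation that a valid $L$ rules out that branch ever triggering.
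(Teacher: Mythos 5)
Your proposal takes the same approach as the paper: the dictionary $x_i \leftrightarrow$ ``which endpoint of the $i$-th matching edge is selected into the set,'' the clause-by-clause unpacking, and the case split on non-matching edge types $u_iu_j$, $v_iv_j$, $u_iv_j$ mirror the paper's argument exactly. The forward direction is correct as written.

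The one genuine gap is in the reverse direction, and you half-notice it yourself: ``without loss of generality I may take $M = M'$'' is not a legitimate WLOG. The algorithm fixes its own perfect matching $M$ \emph{before} any valid $L$ is in hand, and the formula $\phi$ is built from that specific $M$; you cannot retroactively swap in the matching $M'$ induced by $L$. Your parenthetical ``it suffices to exhibit one perfect matching making $\phi$ satisfiable'' is false for the same reason — the lemma asserts satisfiability of the formula the algorithm actually constructed, not of some other formula built from a matching of your choosing. The correct resolution (which the paper states in one line) is a pigeonhole observation that removes the WLOG entirely: since $L$ is an independent set of size exactly $n/2$, no edge of any perfect matching $M$ of $G$ can have both endpoints in $L$, and since $|M| = |L| = n/2$, each $M$-edge must have exactly one endpoint in $L$. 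Thus the dictionary assignment ``$x_i = \text{true}$ iff $v_i \in L$'' is well-defined for the algorithm's $M$, whatever it is, and your subsequent checks of the unit clauses and binary clauses then go through unchanged. Replace the WLOG with this pigeonhole step and the reverse direction is complete.
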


\begin{proof}
    Let us assume that graph $G$ has a growth schedule. Based on Lemma $\ref{lem:perfect_matching1}$, there are $n/2$ parents and $n/2$ children in $G$. Therefore, there has to be a set $V_2$, where $|V_2|=n/2$ and $V_2$ is an independent set such that there is another set $V_1$, where $|V_1|=n/2$ and $V_1\cap V_2=\emptyset$. Any perfect matching $M \in G$ includes edges $u_iv_i\in M$, where $u_i\in V_1$ and $v_i\in V_2$ because $V_2$ is an independent set. 
    
    The solution to the $2$-SAT formula $\phi$ we are going to create is a valid set $V_2$ as stated above. Consider an arbitrary edge $u_iv_i$ from the perfect matching $M$. The algorithm creates a variable $x_i$ for each $u_iv_i$. The truthful assignment of $x_i$ means that we pick $v_i$ for $V_2$ and the negative assignment means that we pick $u_i$ for $V_2$. Since $|V_2|=n/2$, then for every edge $u_iv_i\in M$, at least one of $u_i,v_i$ is a candidate vertex, because otherwise some other edge $u_jv_j\in M$ would need to have $2$ candidates vertices at its endpoints and include them both in $V_2$, which is impossible. Thus, graph $G$ would have no growth schedule.
    
    If $v_i$ is a candidate vertex and $u_i$ is not, then $v_i\in V_2$, and we add clause $(x_i)$ to $\phi$. 
    If $u_i$ is a candidate vertex and $v_i$ is not, then $u_i\in V_2$, in which case we add clause $(\neg x_i)$ $\phi$.
    If both $u_i$ and $v_i$ are candidate vertices, either one could be in $V_2$ as long as $V_2$ is an independent set.
    
    We now want to make sure that every vertex in $V_2$ is independent. Therefore, for every edge $u_iu_j\in E$, we add clause $(x_i\lor x_j)$ to $\phi$. This means that in order to satisfy that clause, $u_i$ and $u_j$ cannot be both picked for $V_2$. Similarly, for every edge $v_iv_j\in E$, we add clause $(\neg x_i)\lor (\neg x_j)$ to $\phi$ and for every edge $u_iv_j\in E$, we add clause $(x_i)\lor (\neg x_j)$ to $\phi$.
    
    The solution to formula $\phi$ is a valid set $V_2$ and we can find it in polynomial time. If the formula has no solution, then no valid independent set $V_2$ exists for graph $G$.
\end{proof}

\begin{lemma}\label{lem:perfect_matching}
    Consider any graph $G=(V,E)$. If $G$ has a growth schedule of $\log n$ slots and $\ell=0$ excess edges, then any arbitrary perfect matching contains a valid candidate set $|L|=n/2$, where $L$ has exactly one vertex for each edge of the perfect matching.
\end{lemma}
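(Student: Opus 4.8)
The plan is to prove something slightly stronger than stated: a single valid candidate set serves as a transversal for \emph{every} perfect matching simultaneously. Concretely, I would take as that set $L^*$, the set of vertices generated in the last slot of a given growth schedule $\sigma$ for $G$ of $\log n$ slots and $\ell=0$ excess edges; alternatively, one may simply invoke Lemma~\ref{lem:perfect_matching1} to obtain \emph{some} valid candidate set $L$ and run the same argument on it, since a valid candidate set is by definition independent and of size exactly $n/2$.

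First I would pin down the structure of the last slot. Since every vertex generates at most one child per slot, $|V_t|\le 2|V_{t-1}|$, and starting from $|V_0|=1$ the requirement $|V_{\log n}|=n=2^{\log n}$ forces equality at every slot; hence the last slot generates exactly $n/2$ vertices, one for each vertex of $V_{\log n-1}$, and the parent map $v\mapsto\mathrm{par}(v)$ is a bijection from $L^*$ onto $V\setminus L^*$. I would then check that $L^*$ is a valid candidate set in the sense of the {\em \texttt{fast growth}} algorithm: $|L^*|=n/2$ is immediate; $L^*$ is independent in $G$ by Proposition~\ref{prop:independent set}; and since $\ell=0$ no edge is ever deleted, so $E_{\log n-1}\subseteq E(G)$ and every edge incident to $v\in L^*$ is an edge activated at $v$'s birth, lying in $N_{G_{\log n-1}}[\mathrm{par}(v)]\subseteq N_G[\mathrm{par}(v)]$; thus $N_G[v]\subseteq N_G[\mathrm{par}(v)]$, so $\mathrm{par}(v)$ is a candidate parent of $v$, and the bijection $v\mapsto\mathrm{par}(v)$ is exactly the required perfect matching between the candidate vertices of $L^*$ and their candidate parents.

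The heart of the argument, and the only place that upgrades ``for some perfect matching'' (Lemma~\ref{lem:perfect_matching1}) to ``for every perfect matching'', is a counting observation: let $M$ be an arbitrary perfect matching of $G$; its edges partition $V$ into $n/2$ pairs, and since $L^*$ is independent each pair contains at most one vertex of $L^*$; as $\sum_{e\in M}|e\cap L^*|=|L^*|=n/2$ is a sum of $n/2$ terms from $\{0,1\}$, every term equals $1$. Hence $L^*$ selects exactly one endpoint of every edge of $M$, which together with the previous paragraph establishes the lemma. I do not expect a real obstacle here; the only point requiring care is the justification that each slot exactly doubles the vertex set, so that $|L^*|=n/2$ precisely rather than merely $|L^*|\ge n/2$ (and likewise that the parent map is a bijection) — this uses both the $\log n$-slot bound and $\ell=0$, and it is where the standing assumption $n=2^\delta$ is actually needed, so I would keep that assumption explicit.
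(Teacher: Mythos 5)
Your proof is correct and takes essentially the same approach as the paper's: both take $L^*$ to be the set of vertices generated in the last slot of the given schedule, observe that it is an independent set of size exactly $n/2$ in which each vertex has its parent as a candidate parent, and then apply the counting argument (an independent set of size $n/2$ must meet every edge of a perfect matching exactly once). Your write-up is more self-contained, spelling out the forced doubling $|V_t|=2|V_{t-1}|$, the bijectivity of the parent map, and the inclusion $N_G[v]\subseteq N_G[\mathrm{par}(v)]$ via $\ell=0$, whereas the paper delegates these facts to Lemmas~\ref{lem:perfect_matching1} and~\ref{lem:2SAT}.
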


\begin{proof}
    By Lemma $\ref{lem:2SAT}$, any perfect matching $M$ contains edges $uv$, such that there exists a valid candidate set $V_2$ that contains one vertex exactly for each edge $uv\in M$. Thus, if graph $G$ has a growth schedule, the solution to the $2$-SAT formula corresponds to a valid candidate set~$V_2$.
\end{proof}

\begin{theorem}
    For any graph $G$ on $2^\delta$ vertices, the {\em \texttt{fast growth}} algorithm computes in polynomial time a growth schedule $\sigma$ for $G$ of $\log n$ slots and $\ell=0$ excess edges, if one exists.
\end{theorem}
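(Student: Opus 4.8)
The plan is to establish two facts: the algorithm runs in polynomial time, and it outputs a growth schedule of $\log n$ slots and $\ell=0$ excess edges precisely when one exists. The running-time part is routine: the outer loop executes $\log n$ times, and within an iteration we compute the candidate set (by testing $N[v]\subseteq N[w]$ over all pairs $v,w$), a perfect matching, and a 2-SAT instance on $n/2$ variables, all in polynomial time. So the real work is correctness, which I would split into soundness and completeness.

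For soundness I would mirror the loop with a downward induction. Set $G^{(\delta)}=G$ and $G^{(i-1)}=G^{(i)}-L_i$, where $L_i$ is the valid set removed when $k=i$. Reading the returned $\sigma=(\mathcal S_1,\dots,\mathcal S_{\log n},\emptyset)$ forward, I claim that the graph grown after $\mathcal S_1,\dots,\mathcal S_{i-1}$ is exactly $G^{(i-1)}$; then $\mathcal S_i$ has, for each $v\in L_i$ with matched partner $p$, the vertex $p$ generate $v$ and activate precisely $v$'s edges of $G^{(i)}$. Validity for $d=2$ requires every neighbour of $v$ other than $p$ to lie in $N_{G^{(i-1)}}(p)$; this holds because the algorithm only pairs $v$ with a partner $p$ for which $N_{G^{(i)}}[v]\subseteq N_{G^{(i)}}[p]$, and $L_i$ is independent so $v$ has no neighbour inside $L_i$. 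Hence slot $\mathcal S_i$ transforms $G^{(i-1)}$ into $G^{(i)}$ with no deletions, and composing all $\log n$ slots grows $G$ with zero excess.

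For completeness I would induct on $\delta=\log n$, the case $\delta=0$ being trivial. Assume $G$ on $2^\delta$ vertices has a $\delta$-slot, zero-excess schedule $\tau$. Since a slot generates at most one child per existing vertex, $|V_t|\le 2|V_{t-1}|$; as $|V_0|=1$ and $|V_\delta|=2^\delta$, every slot of $\tau$ doubles, so its last slot generates a set $L'$ of $n/2$ vertices joined to $n/2$ distinct parents — hence $G$ has a perfect matching and the algorithm does not abort for want of one. For whatever perfect matching $M$ the algorithm picks, Lemma~\ref{lem:perfect_matching} guarantees $M$ contains a valid candidate set, so by Lemma~\ref{lem:2SAT} the formula $\phi$ is satisfiable and the algorithm extracts a valid set $L$. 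It then remains to show that $G-L$ (on $2^{\delta-1}$ vertices) still admits a $(\delta-1)$-slot, zero-excess schedule; granting this, the recursive call succeeds by the induction hypothesis, and appending the slot that generates $L$ from its candidate parents (valid for the reasons used in the soundness part) yields the schedule for $G$.

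I expect the implication ``$G$ schedulable $\Rightarrow$ $G-L$ schedulable in one fewer slot'' to be the main obstacle: it is a confluence statement, asserting that removing \emph{any} valid set — not just the $L'$ produced by $\tau$'s final slot — leaves a graph growable in $\delta-1$ doubling slots. I would attack it with an exchange argument in the style of Lemma~\ref{lem:removing-candidates}: starting from $\tau$, for each $v\in L\setminus L'$ reassign the progeny of $v$ to $v$'s candidate parent (first relocating that parent, and any descendants of $v$ lying between them, so the reassignment is well-defined, exactly as in the proof of Lemma~\ref{lem:removing-candidates}), which makes $v$ childless and lets it move to the last slot; symmetrically, each vertex of $L'\setminus L$ is pulled out of the last slot. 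The feature absent from the $n-1$-slot setting is that every slot must remain ``full'' (doubling) so the length stays $\delta$; here I would use $|L\setminus L'|=|L'\setminus L|$ to pair the vertices pushed into and pulled out of the final slot, and invoke Propositions~\ref{prop:independent set} and~\ref{proposition:distance_reduction} to verify that independence and the neighbourhood-containment invariants are preserved throughout. Once $\tau$ has been rewritten so that its last slot is exactly $L$ matched to candidate parents, $G-L$ is precisely the graph grown by its first $\delta-1$ slots, closing the induction.
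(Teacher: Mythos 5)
Your high-level structure---soundness by forward simulation, completeness via Lemmas~\ref{lem:2SAT} and~\ref{lem:perfect_matching}, and a confluence step showing $G-L$ stays schedulable in $\delta-1$ slots---matches the paper's proof, and you correctly identify confluence as the crux. But your proposed exchange argument for confluence has a genuine gap, and the paper proves confluence by a materially different (and much more direct) route that your sketch misses.

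The paper's key observation is that the vertices of the symmetric difference are pairwise adjacent \emph{true twins}. Writing $L'$ for the last slot of the assumed schedule $\tau$ and $L$ for the set the algorithm extracts, each $u'\in L\setminus L'$ is non-adjacent to every vertex of $L\cap L'$ (as $L$ is independent), so in $\tau$'s last-slot matching $u'$ can only be the candidate parent of some $u\in L'\setminus L$; symmetrically, each $u\in L'\setminus L$ can only be the candidate parent of some $u'\in L\setminus L'$ in $L$'s matching. Composing the two bijections gives a permutation of $L'\setminus L$ along whose cycles the closed neighbourhoods form a weakly decreasing chain wrapping around, hence all equal; so each matched pair satisfies $N[u]=N[u']$. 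Swapping adjacent true twins is an automorphism of $G$, and applying these swaps to $\tau$ yields a schedule whose last slot is exactly $L$---no progeny reassignment, no relocation, and no risk of overfilling a slot. Your exchange sketch instead tries to move $v\in L\setminus L'$ to the last slot after handing its progeny to its candidate parent, but in the $\log n$-slot, zero-excess setting there is no slack: the candidate parent already has a child in every slot after its birth, and the last slot is already full, so the relocation collides unless the in/out vertices are literally interchangeable. Pairing ``in'' with ``out'' vertices is the right instinct, but the pairing is only legitimate once you know $N[v]=N[u']$, and Propositions~\ref{prop:independent set} and~\ref{proposition:distance_reduction} do not deliver that. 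Without the twin argument the confluence step remains unproven, so the proposal, while structurally aligned with the paper, has a real hole at its most important step.
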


\begin{proof}
    Suppose that $G=(V,E)$ has a growth schedule $\sigma$ of $\log n$ slots and $\ell=0$ excess edges.
    By Lemmas $\ref{lem:2SAT}$ and $\ref{lem:perfect_matching}$ we know that our {\em \texttt{fast growth}} algorithm finds a set $L$ for the last slot of a schedule $\sigma''$ but this might be a different set from the last slot contained in $\sigma$. Therefore, for our proof to be complete, we need to show that if $G$ has a growth schedule $\sigma$ of $\log n$ slots and $\ell=0$ excess edges, for any $L$ it holds that $(G-L)$ has a growth schedule $\sigma'$ of $\log n - 1$ slots and $\ell=0$ excess edges.
    
    Assume that $\sigma$ has in the last slot $\mathcal{S}_{k}$ a set of vertices $V_1$ generating another set of vertices $V_2$, such that $|V_1|=|V_2|=n/2$, $V_1\cap V_2=\emptyset$ and $V_2$ is an independent set. Suppose that our algorithm finds $V'_2$ such that $V'_2\neq V_2$. 
    
    Assume that $V'_2\cap V_2=V_s$ and $|V_s|=n/2-1$. This means that $V'_2=V_s\cup u'$ and $V_2=V_s\cup u$ and $u'$ has no edge with any vertex in $V_s$. Since $u'\not\in V_2$ and $u'$ has no edge with any vertex in $V_s$, then $u'\in V_1$. However, $u'$ cannot be the candidate parent of anyone in $V_2$ apart from $u$. Similarly, $u$ is the only candidate parent of $u'$. Therefore $N[u]\subseteq N[u']\subseteq N[u] \implies N[u] = N[u']$. This means that we can swap the two vertices in any growth schedule and still maintain a correct growth schedule for $G$. Therefore, for $L=V'_2$, the graph $(G-L)$ has a growth schedule $\sigma'$ of $\log n - 1$ slots and $\ell=0$ excess edges. 
    
    Assume now that $V'_2\cap V_2=V_s$, where $|V_s| = x \geq 0$. Then, $V'_2=V_s\cup u'_1\cup u'_2,\cup \ldots\cup u'_y$ and $V_2=V_s\cup u_1\cup u_2,\cup \ldots\cup u_y$, where $y=n/2-x$. As argued above, vertices $u'_1,u'_2,\ldots,u'_y$ can be candidate parents only to vertices $u_1,u_2,\ldots,u_y$, and vice versa. Thus, there is a pairing $u_j, u_j'$ such that $N[u_j]\subseteq N[u'_j]\subseteq N[u_j] \implies N[u'_j] = N[u_j]$, for every $j=1,2,\ldots,y$. Thus, these vertices can be swapped in the growth schedule and still maintain a correct growth schedule for $G$. Therefore for any arbitrary $L=V'_2$, the graph $(G-L)$ has a growth schedule $\sigma'$ of $\log n - 1$ slots and $\ell=0$ excess edges.
\end{proof}

We will now show that the problem of computing the minimum number of slots required for a graph $G$ to be grown is NP-complete, and that it cannot be approximated within a $n^{\frac{1}{3}-\varepsilon}$ factor for any $\varepsilon>0$, unless P~=~NP.

\begin{definition}
Given any graph $G$ and a natural number $\kappa$, find a growth schedule of $\kappa$ slots and $\ell=0$ excess edges. We call this problem \emph{zero-excess growth schedule}.
\end{definition}

\begin{theorem}\label{thm:NP-HARD}
    The decision version of the zero-excess graph growth problem is NP-complete.
\end{theorem}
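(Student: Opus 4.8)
The plan is to observe that a zero-excess growth schedule for an $n$-vertex target graph $G$ has a succinct certificate. Since $\ell=0$ no edge is ever deleted, so when a vertex $v$ is born with parent $p$ in a slot $t$, the set of edges $v$ activates is \emph{forced}: it must be exactly $\{vw\in E(G): w$ is already present$\}$, because no spurious edge can later be removed and no edge $vw$ of $G$ can be activated later by $w$ (which was born earlier). Hence a schedule is completely described by (i) which vertex of $G$ plays the role of the initiator $u_0$ and (ii) for every other vertex its birth slot and its parent --- $O(n)$ data. Given such a certificate one checks in polynomial time that: each parent is born strictly before its child and gives birth to at most one child per slot; the total number of slots is at most $\kappa$; the set of vertices born in any fixed slot is independent in $G$ (cf.\ Proposition~\ref{prop:independent set}); and, for every $v\neq u_0$ with parent $p$ born in slot $t$, the domination condition $N_{G_t}[v]\subseteq N_{G_t}[p]$ holds, where $G_t$ is $G$ with all later-born vertices deleted. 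These checks also certify that the graph reconstructed by the schedule is exactly $G$. Everything is polynomial, so the problem is in NP.

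\textbf{NP-hardness: strategy.} For hardness I would reduce from a standard NP-complete problem; my first attempt would be $3$-SAT, or a structured variant (monotone / planar $3$-SAT) if that makes the gadgets cleaner. Given a formula $\phi$ over variables $x_1,\dots,x_N$ with clauses $C_1,\dots,C_M$, the aim is to build in polynomial time a graph $G_\phi$ and an integer $\kappa_\phi$ such that $G_\phi$ has a zero-excess growth schedule of $\kappa_\phi$ slots if and only if $\phi$ is satisfiable. The intended architecture is: a \emph{backbone} whose only budget-respecting schedules proceed in rigid ``phases'', so that the birth slot of almost every vertex is pinned down; one \emph{variable gadget} per $x_i$, attached to the backbone so that any schedule meeting the budget is forced to commit early to eliminating one of two mutually exclusive sub-configurations first --- this binary choice encodes the truth value of $x_i$; and one \emph{clause gadget} per $C_j$, wired to the three literal-configurations of $C_j$ so that it can be consumed within the global budget exactly when at least one of its literals was eliminated at the ``satisfying'' time. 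One then tunes $\kappa_\phi$ so tightly that a satisfying assignment yields a schedule of exactly $\kappa_\phi$ slots while, conversely, any schedule of at most $\kappa_\phi$ slots reads off a consistent satisfying assignment.

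\textbf{NP-hardness: execution and main obstacle.} Four things must be verified for the gadgets. (a) \emph{Domination is the enforcer}: closed neighborhoods must be chosen so that any ``wrong'' elimination order either violates $N_{G_t}[\cdot]\subseteq N_{G_t}[\cdot]$ in some intermediate $G_t$ --- which the schedule can never repair, since $\ell=0$ --- or is legal but costs an extra slot, pushing past $\kappa_\phi$. (b) \emph{Non-interference}: because same-slot vertices must be independent (Proposition~\ref{prop:independent set}), gadgets meant to act independently must be kept pairwise non-adjacent (or joined only through a single controlled interface vertex), so that parallel eliminations across gadgets are neither blocked nor forced. (c) \emph{The distinct-parents constraint} within a slot is usually the least delicate --- give each eliminable gadget vertex its own private dominator --- but it must be re-checked in the phases where many gadgets are consumed at once. (d) \emph{A clean epilogue}: once all variable and clause gadgets are gone, only the backbone remains and it must collapse to a single vertex using exactly the slots left, with no slack; this forces the backbone to be intrinsically fast-to-grow and rigid (e.g.\ a scaled star or path), never containing a large clique, since a residual $K_r$ alone would cost $r-1$ slots by Proposition~\ref{prop:independent set} and wreck the accounting. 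I expect (a) to be the crux: making the ``stability of domination across all intermediate graphs'' requirement encode exactly the Boolean constraints, while keeping $|V(G_\phi)|$ polynomial in $|\phi|$, is where essentially all of the case analysis lives. Finally, the claimed $n^{1-\varepsilon}$-inapproximability would not come from a single gadget reduction; there I would start from the NP-hardness just obtained (which already separates ``a $\kappa_\phi$-slot schedule exists'' from ``not'') and amplify the gap by a product/self-improvement construction that blows the optimum number of slots apart by a polynomial factor between yes- and no-instances, so that an $n^{1-\varepsilon}$-approximation would decide the original problem, contradicting P~$\neq$~NP.
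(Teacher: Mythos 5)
Your NP-membership argument is sound and, if anything, more explicit than the paper's (which simply takes the schedule plus an isomorphism as the certificate). However, your NP-hardness part is not a proof: it is a plan for a reduction from 3-SAT in which every essential step---the variable gadgets, the clause gadgets, the budget $\kappa_\phi$, and above all the verification that wrong elimination orders either break domination or overshoot the budget---is left unconstructed. You explicitly flag item (a) as ``the crux'' and ``where essentially all of the case analysis lives,'' so the argument has a hole exactly at its load-bearing point. This is a genuine gap, not a routine omission.

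There is also a missed shortcut that would have made the construction unnecessary. You already cite Proposition~\ref{prop:independent set}: vertices born in the same slot form an independent set in $G$. This immediately means the chromatic number $\chi(G)$ is a lower bound on the number of slots, which makes \emph{graph coloring}---not 3-SAT---the natural source problem. The paper reduces from minimum coloring: given $G$ on $n$ vertices, it builds $G'$ as the join of a copy of $G$ with an $n$-vertex clique. Every clique vertex is universal in $G'$, hence must occupy its own slot (it can share a slot with no one); and the copy of $G$ forces at least $\chi(G)$ further slots because each slot's births are an independent set. Conversely, a schedule of exactly $n+\chi(G)$ slots is exhibited by first growing the clique one vertex per slot and then adding one color class of $G$ per slot, each class born off the already-present universal vertices (which dominate everything). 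Thus $\kappa(G')=\chi(G)+n$, and deciding $\kappa(G')\le n+c$ decides $\chi(G)\le c$. No gadget engineering or domination case analysis is needed; the universal vertices do all the enforcement. Your instinct to amplify for the $n^{1-\varepsilon}$ inapproximability result is directionally right, but in the paper that too is carried out by a coloring-based construction (the inapproximability of chromatic number is inherited), not by a SAT-to-SAT gap amplification.
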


\begin{proof}
    First, observe that the decision version of the problem belongs to the class NP. Indeed, the required polynomial certificate is a given growth schedule $\sigma$, together with an isomorphism between the graph grown by $\sigma$ and the target graph $G$.

    To show NP-hardness, we provide a reduction from the coloring problem. Given an arbitrary graph $G=(V,E)$ on $n$ vertices, we grow graph $G'=(V',E')$ as follows: 
    Let $G_1=(V_1,E_1)$ be an isomorphic copy of $G$, and let $G_2$ be a clique of $n$ vertices. $G'$ consists of the union of $G_1=(V_1,E_1)$ and $G_2=(V_2,E_2)$, where we also add all possible edges between them. Note that every vertex of $G_2$ is a universal vertex in $G'$ (i.e.,~a vertex which is connected with every other vertex in the graph). 
    Let $\chi(G)$ be the chromatic number of graph $G$, and let $\kappa(G')$ be the minimum number of slots required for a growth schedule for $G'$. We will show that $\kappa(G')= \chi(G)+n$.
    
    Let $\sigma$ be an optimal growth schedule for $G'$, which uses $\kappa(G')$ slots. As every vertex $v\in V_2$ is a universal vertex in $G'$, $v$ cannot coexist with any other vertex of $G'$ in the same slot of $\sigma$. Furthermore, the vertices of $V_1$ require at least $\chi(G)$ different slots in $\sigma$, since $\chi(G)$ is the smallest possible partition of $V_1$ into independent sets. Thus $\kappa(G')\geq \chi(G)+n$.
    
    We now provide the following growth schedule $\sigma^*$ for $G'$, which consists of exactly $\chi(G)+n$ slots. Each of the first $n$ slots of $\sigma^*$ contains exactly one vertex of $V_2$; note that each of these vertices (apart from the first one) can be generated and connected with an earlier vertex of $V_2$. In each of the following $\chi(G)$ slots, we add one of the $\chi(G)=\chi(G_1)$ color classes of an optimal coloring of $G_1$. Consider an arbitrary color class of $G_1$ and suppose that it contains $p$ vertices; these $p$ vertices can be born by exactly $p$ of the universal vertices of $V_2$ (which have previously appeared in $\sigma^*$). This completes the growth schedule $\sigma^*$. Since $\sigma^*$ has $\chi(G)+n$ slots, it follows that $\kappa(G')\leq \chi(G)+n$.
\end{proof}

\begin{theorem}
    Let $\varepsilon>0$. If there exists a polynomial-time algorithm, which, for every graph $G$, computes a $n^{\frac{1}{3} - \varepsilon}$-approximate growth schedule (i.e.,~a growth schedule with at most $n^{\frac{1}{3} - \varepsilon} \kappa(G)$ slots), then \textup{P}~$=$~\textup{NP}.
\end{theorem}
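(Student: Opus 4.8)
The plan is to derive this from the inapproximability of the chromatic number itself: by Feige--Kilian / Zuckerman, for every constant $\gamma>0$ it is NP-hard to distinguish $n$-vertex graphs with $\chi(G)\le n^{\gamma}$ from those with $\chi(G)\ge n^{1-\gamma}$. The reduction behind Theorem~\ref{thm:NP-HARD} already shows that $\chi(G)$ is ``hidden inside'' a zero-excess growth problem, since there $\kappa(G\vee K_n)=\chi(G)+n$; the trouble is that the additive $n$ swamps the whole range of $\chi(G)$, so on those instances even a $(1+\Omega(1))$-approximation is useless. Hence I want a reduction $G\mapsto H=H(G)$ that is \emph{size-efficient}, $|V(H)|=n^{1+o(1)}$, produces a cop-win (hence growable, cf.\ Lemma~\ref{lem:cop-win}) graph, and satisfies $\kappa(H)=\chi(G)\cdot n^{o(1)}$. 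Given such a reduction, the chromatic gap $n^{1-2\gamma}$ becomes a $\kappa$-gap of at least $n^{1-2\gamma}/n^{o(1)}=|V(H)|^{1-O(\gamma)}$; picking $\gamma<\varepsilon/2$ then shows that an $N^{1-\varepsilon}$-approximation of $\kappa$ (with $N=|V(H)|$) would decide the NP-hard gap problem, so P~=~NP.

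For the construction I would first make the instance growable without inflating its chromatic number, replacing $G$ by $G^{+}=K_1\vee G$ (cop-win for every $G$, since the added universal vertex dominates everything, with $\chi(G^{+})=\chi(G)+1$); then amplify while supplying ``parents'' cheaply. The expensive part of the $G\vee K_n$ gadget is that a clique of $s$ universal vertices always costs $s$ slots — two clique vertices are adjacent, hence never share a slot (Proposition~\ref{prop:independent set}) — and the same obstruction kills any ``almost universal'' independent set of parents. The route I would pursue is a clique blow-up $H=G^{+}[K_m]$ with $m=n^{o(1)}$: each blown-up vertex is a clique of $m$ mutual true twins, so once one representative of a blown-up vertex is present the remaining copies can be born from those twins, and thus an entire colour class of the blow-up can be realised in a bounded number of slots, needing only the $n^{o(1)}$-size clique obtained from blowing up the $K_1$ for bootstrapping. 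The analysis then has the usual two sides: for the lower bound, $\kappa(H)\ge\chi(H)-1=\Omega(\chi(G))$ by Proposition~\ref{prop:independent set} and the fact that the colour classes of $G$ survive in $H$; for the upper bound one exhibits an explicit schedule that grows the $n^{o(1)}$ bootstrap vertices first and then processes the colour classes of $H$ phase by phase, $O(1)$ slots per phase, using a true twin as the parent whenever a blown-up clique already contains a vertex and a universal vertex on the first use of a clique (amortised over phases).

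The hard part is exactly the third requirement above: ensuring that the gadget which feeds ``parents'' to the (possibly $\Theta(n)$-sized) colour classes can itself be grown in $n^{o(1)}$ slots while $|V(H)|$ stays near-linear. Every variant that uses a clique of $s$ universal vertices runs into an $s+n/s\ge 2\sqrt n$ barrier — you pay $s$ to build it and $n/s$ to serve it as the parent of each fresh part of the graph — which only yields $N^{1/2}$-inapproximability; beating this down to $N^{1-\varepsilon}$ is the technical heart, and it is what forces the use of the true-twin structure of the blow-up (so that ``extra'' copies come for free rather than from universal vertices). A secondary point that must be handled carefully is that a clique blow-up only controls the $m$-fold chromatic number $\chi_m(G^{+})\in[m\chi_f(G^{+}),m\chi(G^{+})]$ rather than $\chi$ itself, so one either keeps the multiplicative interaction clean or invokes the hardness of approximating the relevant fractional/$m$-fold chromatic parameter, which the standard reductions also provide; this is the delicacy I would expect to consume most of the effort.
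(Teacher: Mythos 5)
You correctly identify that the additive helper cost has to become negligible, but your blow-up route is genuinely different from the paper's and, as you yourself acknowledge, not a proof. The paper instead builds a \emph{chain} of $n^2$ blocks: block $i$ holds two fresh copies $G^A_i, G^B_i$ of $G$ plus a size-$2n$ clique $C_i$ that dominates only blocks $i-1$ and $i$; consecutive blocks and consecutive cliques are fully joined, and one chosen vertex $a_i\in C_i$ per clique forms a cop-win backbone. The structural lemma is that a vertex of block $i$ can be a candidate only when the entire block $i-1$ or the entire block $i+1$ is still absent, which forces any schedule to process the blocks sequentially from the two ends of the chain inward; hence $\kappa(G')\ge \tfrac12 n^2\chi^*$ while also $\kappa(G')\le n^2\chi(G)+O(n)$. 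This \emph{multiplicative amplification} (repeat $G$ many times and force sequential processing so the $\chi(G)$ cost re-occurs each time) is what makes the per-block $\Theta(n)$ clique overhead negligible, and it is exactly the ingredient your sketch lacks.

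Your blow-up $H=G^{+}[K_m]$ does not escape the $s+n/s$ barrier you state: with $m=n^{o(1)}$ the bootstrap $m$-clique can still only seed at most $m$ fresh $K_m$-blocks per slot, so a color class of size $\Theta(n)$ still costs $\Theta(n/m)$ slots just to plant one representative in each block; the true-twin structure only accelerates filling out a block once a representative exists, not producing those representatives. The $\chi_m$ vs.\ $\chi$ mismatch you flag is a further unresolved issue. So there is a real gap in the proposal, and the paper closes it by a structurally different amplification, not by making the reduction near-linear in size.

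One caveat worth noting: the paper's instance $G'$ has $\Theta(n^3)$ vertices, yet its proof plugs in $k\le n^{1-\varepsilon}\kappa(G')$ with $n=|V(G)|$ rather than $|V(G')|^{1-\varepsilon}\kappa(G')$. Under the latter reading (which is what the theorem statement literally says, since the approximation factor is in terms of the graph handed to the algorithm), the same calculation only yields $N^{1/3-\varepsilon'}$-inapproximability. So your instinct that genuine size-efficiency, or a still larger amplification factor, is needed to reach the full $n^{1-\varepsilon}$ claim appears well-founded and is not entirely resolved by the paper's proof as written.
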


\begin{proof}
    The reduction is from the minimum coloring problem. Given an arbitrary graph $G=(V,E)$ with $n$ vertices, we construct in polynomial time a graph $G'=(V',E')$ with $N=4n^3$ vertices, as follows: We create $2n^2$ isomorphic copies of $G$, which are denoted by $G^A_1,G^A_2,\ldots,G^A_{n^2}$ and $G^B_1,G^B_2,\ldots,G^B_{n^2}$, and we also add $n^2$ clique graphs, each of size $2n$, denoted by $C_1,C_2,\ldots,C_{n^2}$. We define $V'=V(G^A_1)\cup \ldots\cup  V(G^A_{n^2}) \cup V(G^B_1)\cup \ldots\cup  V(G^B_{n^2})\cup V(C_1) \cup \ldots\cup  V(C_{n^2})$. Initially we add to the set $E'$ the edges of all graphs $G^A_1,\ldots,G^A_{n^2}$, $G^B_1,\ldots,G^B_{n^2}$, and $C_1,\ldots,C_{n^2}$. 
    For every $i=1,2,\ldots,n^2-1$ we add to $E'$ all edges between $V(G^A_i) \cup V(G^B_i)$ and $V(G^A_{i+1}) \cup V(G^B_{i+1})$. 
    For every $i=1,\ldots,n^2$, we add to $E'$ all edges between $V(C_{i})$ and $V(G^A_i) \cup V(G^B_i)$. Furthermore, for every $i=2,\ldots,n^2$, we add to $E'$ all edges between $V(C_{i})$ and $V(G^A_{i-1}) \cup V(G^B_{i-1})$.  For every $i=1,\ldots,n^2-1$, we add to $E'$ all edges between $V(C_{i})$ and $V(C_{i+1})$.
    For every $i=1,2,\ldots,n^2$ and for every $u\in V(G^B_i)$, we add to $E'$ the edge $uu'$, where $u'\in V(G^A_i)$ is the image of $u$ in the isomorphism mapping between $G^A_i$ and $G^B_i$. 
    To complete the construction, we pick an arbitrary vertex $a_i$ from each $C_i$. We add edges among the vertices $a_1,\ldots,a_{n^2}$ such that the resulting induced graph $G'[a_1,\ldots,a_{n^2}]$ is a graph on $n^2$ vertices which can be grown by a {\em \texttt{path}} schedule within $\lceil \log n^2 \rceil$ slots and with zero excess edges (see Lemma~\ref{line-construction-lem}\footnote{From Lemma~\ref{line-construction-lem} it follows that the path on $n^2$ vertices can be grown in $\lceil \log n^2 \rceil$ slots using $O(n^2)$ excess edges. If we put all these $O(n^2)$ excess edges back to the path of $n^2$ vertices, we obtain a new graph on $n^2$ vertices with $O(n^2)$ edges. This graph is the induced subgraph $G'[a_1,\ldots,a_{n^2}]$ of $G'$ on the vertices $a_1,\ldots,a_{n^2}$.}). 
    This completes the construction of~$G'$. Clearly, $G'$ can be grown in time polynomial in $n$.

    Now we will prove that there exists a growth schedule $\sigma'$ of $G'$ of length at most 
    $n^2 \chi(G)+ 4n-2 + \lceil 2\log n \rceil$. The schedule will be described inversely, that is, we will describe the vertices generated in each slot starting from the last slot of $\sigma'$ and finishing with the first slot. 
    First note that every $u\in V(G^A_{n^2})\cup V(G^B_{n^2})$ is a candidate vertex in $G'$ Indeed, for every $w\in V(C_{n^2})$, we have that $N[u]\subseteq V(G^A_{n^2})\cup V(G^B_{n^2})\cup V(G^A_{n^2-1})\cup V(G^A_{n^2-1})\cup V(C_{n^2}) \subseteq N[w]$. 
    To provide the desired growth schedule $\sigma'$, we assume that a minimum coloring of the input graph $G$ (with $\chi(G)$ colors) is known. In the last $\chi(G)$ slots, $\sigma'$ generates all vertices in $V(G^A_{n^2})\cup V(G^B_{n^2})$, as follows. At each of these slots, one of the $\chi(G)$ color classes of the minimum coloring $c_{OPT}$ of $G_{n^2}^A$ is generated on sufficiently many vertices among the first $n$ vertices of the clique $C_{n^2}$. Simultaneously, a different color class of the minimum coloring $c_{OPT}$ of $G_{n^2}^B$ is generated on sufficiently many vertices among the last $n$ vertices of the clique $C_{n^2}$.
    
    Similarly, for every $i=1,\ldots,n^2-1$, once the vertices of $V(G^A_{i+1})\cup \ldots \cup V(G^A_{n^2}) \cup V(G^B_{i+1})\cup \ldots \cup V(G^B_{n^2})$ have been added to the last $(n^2-i)\chi (G)$ slots of $\sigma'$, the vertices of $V(G^A_{i})\cup V(G^B_{i})$ are generated in $\sigma'$ in $\chi(G)$ more slots. This is possible because every vertex $u\in V(G^A_{i})\cup V(G^B_{i})$ is a candidate vertex after the vertices of $V(G^A_{i+1})\cup \ldots \cup V(G^A_{n^2}) \cup V(G^B_{i+1})\cup \ldots \cup V(G^B_{n^2})$ have been added to slots. Indeed, for every $w\in V(C_{i})$, we have that $N[u]\subseteq V(G^A_{i})\cup V(G^B_{i})\cup V(G^A_{i-1})\cup V(G^A_{i-1})\cup V(C_{i}) \subseteq N[w]$. 
    That is, in total, all vertices of $V(G^A_{1})\cup \ldots \cup V(G^A_{n^2}) \cup V(G^B_{1})\cup \ldots \cup V(G^B_{n^2})$ are generated in the last $n^2 \chi(G)$ slots. 
    
    The remaining vertices of $V(C_1)\cup \ldots \cup V(C_{n^2})$ are generated in $\sigma'$ in $4n-2 +\lceil\log n^2 \rceil$ additional slots. First, for every odd index $i$ and for $2n-1$ consecutive slots, for vertex $a_i$ of $V(C_i)$ exactly one other vertex of $V(C_i)$ is generated. This is possible because for every vertex $u\in V(C_i)\setminus a_i$, $N[u]\subseteq V(C_i) \cup V(C_{i-1}) \cup V(C_{i+1})\subseteq N[a_i]$. Then, for every even index $i$ and for $2n-1$ further consecutive slots, for vertex $a_i$ of $V(C_i)$ exactly one other vertex of $V(C_i)$ is generated. That is, after $4n-2$ slots only the induced subgraph of $G'$ on the vertices $a_1,\ldots,a_{n^2}$ remains. The final $\lceil \log n^2 \rceil$ slots of $\sigma'$ are the ones obtained by Lemma~\ref{line-construction-lem}. 
    To sum up, $G'$ is grown by the growth schedule $\sigma'$ in $k=n^2 \chi(G) + 4n-2 + \lceil \log n^2 \rceil$ slots, and thus 
    \begin{align}\label{equ:1}
    \kappa(G')\leq n^2 \chi(G) + 4n-2 + \lceil 2\log n \rceil
    \end{align}
    
    Suppose that there exists a polynomial-time algorithm $A$ which computes an $N^{\frac{1}{3}-\varepsilon}$-approximate growth schedule $\sigma''$ for graph $G'$ (which has $N$ vertices), i.e.,~a growth schedule of $k\leq N^{\frac{1}{3} - \varepsilon} \kappa(G')$ slots. 
    Note that, for every slot of $\sigma''$, all different vertices of $V(G^A_i)$ (resp.~$V(G^B_i)$) which are generated in this slot are independent. 
    For every $i=1,\ldots,n^2$, denote by $\chi^A_i$ (resp.~$\chi^B_i$) the number of different slots of $\sigma''$ in which at least one vertex of $V(G^A_i)$ (resp.~$V(G^B_i)$) appears. Let $\chi^* = \min\{\chi^A_i,\chi^B_i : 1\leq i \leq n^2\}$. Then, there exists a coloring of $G$ with at most $\chi^*$ colors (i.e.,~a partition of $G$ into at most $\chi^*$ independent sets).
    
    Now we show that $k \geq  \frac{1}{2}n^2 \chi^*$. 
    Let $i\in \{2,\ldots,n^2-1\}$ and let $u\in V(G^A_{i})\cup V(G^B_{i})$.
    Assume that $u$ is generated at slot $t$ in $\sigma''$. Then, either all vertices of $V(G^A_{i-1})\cup V(G^B_{i-1})$ or all vertices of $V(G^A_{i+1})\cup V(G^B_{i+1})$ are generated at a later slot $t'\geq t+1$ in $\sigma''$. 
    Indeed, it can be easily checked that, if otherwise both a vertex $x\in V(G^A_{i-1})\cup V(G^B_{i-1})$ and a vertex $y\in V(G^A_{i+1})\cup V(G^B_{i+1})$ are generated at a slot $t''\leq t$ in $\sigma''$, then $u$ cannot be a candidate vertex at slot $t$, which is a contradiction to our assumption. 
    That is, in order for a vertex $u\in V(G^A_{i})\cup V(G^B_{i})$ to be generated at some slot $t$ of $\sigma''$, we must have that $i$ is either the currently smallest or largest index for which some vertices of $V(G^A_{i})\cup V(G^B_{i})$ have been generated until slot $t$. 
    On the other hand, by definition of $\chi^*$, the growth schedule $\sigma''$ needs at least $\chi^*$ different slots to generate all vertices of the set $V(G^A_{i})\cup V(G^B_{i})$, for $1\leq i \leq n^2$. Therefore, since at every slot, $\sigma''$ can potentially generate vertices of at most two indices $i$ (the smallest and the largest respectively), it needs to use at least $\frac{1}{2}n^2 \chi^*$ slots to grow the whole graph $G'$. Therefore 
    \begin{align}\label{equ:2}
    k\geq \frac{1}{2}n^2 \chi^*
    \end{align}
    
    Recall that $N=4n^3$. It follows by Eq.\ref{equ:1} and Eq.\ref{equ:2} that 
    \begin{eqnarray*}
\frac{1}{2}n^{2}\chi ^{\ast } &\leq &k\leq N^{\frac{1}{3}-\varepsilon
}\kappa (G^{\prime }) \\
&\leq &N^{\frac{1}{3}-\varepsilon }(n^{2}\chi (G)+4n-2+\lceil 2\log n\rceil )
\\
&\leq &4n^{1-3\varepsilon }(n^{2}\chi (G)+6n)
\end{eqnarray*}
    and thus
    $\chi^* \leq 8n^{1-3\varepsilon} \chi(G) + 48n^{-3\varepsilon}$.
    Note that, for sufficiently large $n$, we have that $8n^{1-3\varepsilon} \chi(G) + 48n^{-3\varepsilon} \leq n^{1-\varepsilon} \chi(G)$. That is, given the $N^{\frac{1}{3}-\varepsilon}$-approximate growth schedule produced by the polynomial-time algorithm $A$, we can compute in polynomial time a coloring of $G$ with $\chi^*$ colors such that $\chi^* \leq n^{1-\varepsilon} \chi(G)$. This is a contradiction since for every $\varepsilon>0$, there is no polynomial-time $n^{1-\varepsilon}$-approximation for minimum coloring, unless P~=~NP~\cite{zuckerman2007linear}.
\end{proof}

\section{Growth Schedules of (Poly)logarithmic Slots}\label{sec:algorithms_basic_graph_classes}

In this section, we study graphs that have growth schedules of (poly)logarithmic slots, for $d=2$. As we have proven in the previous section, an integral factor in computing a growth schedule for any graph $G$, is computing a $k$-coloring for $G$. Since we consider polynomial-time algorithms, we have to restrict ourselves to graphs where the $k$-coloring problem can be solved in polynomial time and, additionally, we want small values of $k$ since we want to produce fast growth schedules. Therefore, we investigate tree, planar and $k$-degenerate graph families since there are polynomial-time algorithms that solve the $k$-coloring problem for graphs drawn from these families. We continue with lower bounds on the number of excess edges if we fix the number of slots to $\log n$, for path, star and specific bipartite graph families. 
\subsection{Trees}

We provide an algorithm that computes growth schedules for tree graphs. Let $G$ be the target tree graph. The algorithm applies a decomposition strategy on $G$, where vertices and edges are removed in phases, until a single vertex is left.
We can then grow the target graph $G$ by reversing its decomposition phases, using the {\em \texttt{path}} and {\em \texttt{star}} schedules as subroutines. 

\medskip

\noindent\underline{\textbf{{\em \texttt{Tree}} algorithm}:} Starting from a tree graph $G$, the algorithm keeps alternating between two phases, a \emph{path-cut} and a \emph{leaf-cut} phase. Let $G_{2i}$, $G_{2i+1}$, for $i\geq 0$, be the graphs obtained after the execution of the first $i$ pairs of phases and an additional path-cut phase, respectively.

\medskip

\noindent\textbf{Path-cut phase:} For each path subgraph $P=(u_1, u_2, \ldots, u_\nu)$, for $2<\nu\leq n$, of the current graph $G_{2i}$, where $u_2,u_3,...,u_{\nu-1}$ have degree $2$ and $u_1,u_\nu$ have degree $\neq 2$ in $G_{2i}$,
edge $u_1u_\nu$ between the endpoints of $P$ is activated and vertices $u_2,u_3,...u_{\nu-1}$ are removed along with their incident edges. An example of this is shown in Figure $\ref{fig:LineCut}$. 
If a single vertex is left, the algorithm terminates; otherwise, it proceeds to the leaf-cut phase.

\noindent\textbf{Leaf-cut phase:} Every leaf vertex of the current graph $G_{2i+1}$ is removed along with its incident edge. An example of this is shown in Figure $\ref{fig:LeafCut}$. 
If a single vertex is left, the algorithm terminates; otherwise, it proceeds to the path-cut phase.\\

Finally, the algorithm reverses the phases (by decreasing $i$) to output a growth schedule for the tree $G$ as follows. For each path-cut phase $2i$, all path subgraphs that were decomposed in phase $i$ are regrown by using the {\em \texttt{path}} schedule as a subprocess. These can be executed in parallel in $O(\log n)$ slots. The same holds true for leaf-cut phases $2i+1$, where each can be reversed to regrow the removed leaves by using {\em \texttt{star}} schedules in parallel in $O(\log n)$ slots. In the last slot, the schedule deletes every excess edge.

\begin{figure}[htb]
    \centering 
\begin{subfigure}[t]{0.25\textwidth}
  \includegraphics[width=\linewidth]{}
  \caption{Graph $G_{2i-1}$ at the beginning of the $i$-th path-cut phase.}
\end{subfigure}\hfil 
\begin{subfigure}[t]{0.25\textwidth}
  \includegraphics[width=\linewidth]{}
  \caption{The dotted and the dashed edges with their incident vertices form two different path subgraphs. Every vertex, apart from the endpoints of each path, is removed and the endpoints become connected.}
\end{subfigure}\hfil 
\begin{subfigure}[t]{0.25\textwidth}
  \includegraphics[width=\linewidth]{}
  \caption{The resulting graph $G_{2i+1}$ at the end of the $i$-th path-cut phase.}
\end{subfigure}
\caption{An example of a path-cut phase.}\label{fig:LineCut}
\end{figure}

\begin{figure}[htb]
    \centering 
\begin{subfigure}[b]{0.25\textwidth}
  \includegraphics[width=\linewidth]{images/LIneCut2.pdf}
  \caption{Graph $G_{2i+1}$ at the beginning of the $i$-th leaf-cut phase.}
\end{subfigure}\hfil 
\begin{subfigure}[b]{0.25\textwidth}
  \includegraphics[width=\linewidth]{}
  \caption{The leaf vertices along with their incident edges are removed.}
\end{subfigure}\hfil 
\begin{subfigure}[b]{0.25\textwidth}
  \includegraphics[width=\linewidth]{}
  \caption{The resulting graph $G_{2i+2}$ at the end of the $i$-th leaf-cut phase.}
\end{subfigure}
\caption{An example of a leaf-cut phase.}\label{fig:LeafCut}
\end{figure}

\begin{lemma}
    Given any tree graph $G$, the algorithm  deconstructs $G$ into a single vertex using $O(\log n)$ phases.
\end{lemma}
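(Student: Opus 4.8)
The plan is to bound the number of phases by a potential-function / diameter-shrinking argument. The key observation is that a path-cut phase followed by a leaf-cut phase shrinks the tree substantially. First I would set up the right measure of progress. A natural candidate is the number of "branching vertices" (vertices of degree $\ge 3$) plus the number of leaves; call this the number of \emph{topologically essential} vertices of the tree. A path-cut phase contracts every maximal degree-$2$ path into a single edge, so after it the resulting graph $G_{2i+1}$ is a \emph{homeomorphically reduced} tree: every internal vertex has degree $\ge 3$. In such a tree, the number of leaves is at least the number of internal vertices plus one (standard fact: $\sum (\deg v - 2) = -2$, so internal vertices contribute nonnegatively and leaves contribute $-1$ each, forcing $\#\text{leaves} \ge \#\text{internal} + 2$ when there is at least one internal vertex). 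Hence at least half of the vertices of $G_{2i+1}$ are leaves.

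The second step is then immediate: the leaf-cut phase deletes all leaves of $G_{2i+1}$, so $|V(G_{2i+2})| \le |V(G_{2i+1})|/2 \le |V(G_{2i})|/2$ (the path-cut phase only removes vertices, never adds). Therefore after $O(\log n)$ pairs of phases the tree is reduced to a single vertex. One has to handle a couple of degenerate cases carefully: when $G_{2i+1}$ is a single edge (two leaves, no internal vertex), or a single vertex, the process terminates as stated; and one should check that a path-cut phase is well-defined on a tree that is itself a path (then $u_1, u_\nu$ are its two leaves, both of degree $1 \ne 2$, and the whole interior is removed in one phase, leaving an edge or a vertex). These boundary cases all only help, since they terminate the process even faster.

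The main obstacle — and it is a mild one — is making the "half the vertices are leaves" claim airtight in the presence of the deletions. After a path-cut phase, $G_{2i+1}$ is a tree in which every non-leaf vertex has degree $\ge 3$, because any vertex that had degree exactly $2$ in $G_{2i}$ lies in the interior of some maximal degree-$2$ path and is therefore removed, while endpoints of such paths have degree $\ne 2$ by definition and retain degree $\ge 3$ (if they were leaves of $G_{2i}$, they remain leaves). Given this structural property, the leaf-count inequality above applies verbatim. I would also remark that an $O(\log n)$ bound on the number of phases, combined with the fact that each reversed path-cut and each reversed leaf-cut takes $O(\log n)$ slots (by Lemma~\ref{line-construction-lem} and the \texttt{star} lemma, run in parallel across the independent path/star pieces), yields the $O(\log^2 n)$ slot bound claimed for the \texttt{tree} algorithm, and that one excess edge per contracted path (charged to the endpoint $u_\nu$) gives the $O(n)$ excess-edge bound — but those consequences belong to the later theorem, not to this lemma.
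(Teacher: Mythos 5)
Your proof is correct and follows essentially the same approach as the paper: after a path-cut phase no degree-$2$ vertices remain, so leaves strictly outnumber internal vertices, and the subsequent leaf-cut phase at least halves the vertex count, giving the $O(\log n)$ bound. You additionally justify the leaf-majority claim via the degree-sum identity $\sum_v(\deg v - 2) = -2$ and spell out the degenerate boundary cases, both of which the paper's proof asserts without elaboration.
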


\begin{proof}
    Consider the graph $G_{2i}$ after the execution of the $i$-th path-cut phase. The path-cut phase removes every vertex that has exactly $2$ neighbors in the current graph, and in the next leaf cut phase, the graph consists of leaf vertices $u\in S_u$ with $deg=1$ and internal vertices $v\in S_v$ with $deg > 2$. Therefore, $|S_u|>|S_v|$ and since $|S_u|+|S_v|=|V_i|$, we can conclude that $|S_u|>|V_i|/2$ and any leaf-cut phase cuts the size of the current graph in half since it removes every vertex $u\in S_u$. This means that after at most $\log n$ path-cut phases and $\log n$ leaf cut phases the graph will have a single vertex.
\end{proof}

\begin{lemma} \label{lem:tree_slots}
    Every phase can be reversed using a growth schedule of $O(\log n)$ slots.
\end{lemma}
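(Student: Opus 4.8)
The plan is to show separately that each of the two phase types can be reversed within $O(\log n)$ slots, relying on the two basic subprocesses established earlier. Recall that a path-cut phase replaced, for each maximal degree-$2$ path $P=(u_1,\dots,u_\nu)$, the whole path by the single edge $u_1u_\nu$; to reverse it we must, starting from the edge $u_1u_\nu$, regrow $P$ so that its endpoints end up adjacent to exactly the vertices they had before. The key observation is that the {\em \texttt{path}} algorithm of Lemma~\ref{line-construction-lem} can be run ``internally'' on the edge $u_1u_\nu$: think of $u_1$ as the left endpoint and of the $\nu-2$ internal vertices (plus $u_\nu$) as the vertices to be grown along it, using $u_1u_\nu$ temporarily as the relay edge that plays the role of the edge between consecutive endpoints in Lemma~\ref{line-construction-lem}. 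Since every internal vertex of $P$ has degree $2$ in the final tree, all the edges created during this subprocess are genuinely wanted except for the initial relay edge $u_1u_\nu$ itself, which becomes the single excess edge for this path and is deleted in the last slot. A maximal path $P$ has at most $n$ vertices, so the {\em \texttt{path}} subprocess finishes in $\lceil\log|V(P)|\rceil = O(\log n)$ slots. Crucially, distinct maximal degree-$2$ paths of $G_{2i}$ are internally vertex-disjoint, so all the per-path {\em \texttt{path}} subprocesses run simultaneously in the same $O(\log n)$ slots; the only shared vertices are the degree-$\neq 2$ endpoints, and there each path-subprocess only ever \emph{generates children of the left endpoint} and activates one relay edge, so there is no conflict in applying the vertex-generation rule (each vertex generates at most one child per slot, and the left endpoint of one path is not simultaneously the left endpoint of another). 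Hence a path-cut phase is reversed in $O(\log n)$ slots.

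For a leaf-cut phase, the reversal must regrow, at each surviving vertex $v$, the set $L_v$ of leaves that were attached to $v$; this is exactly the star-growth situation. Run the {\em \texttt{star}} algorithm of the second lemma of Section~\ref{subsec:basic} with $v$ as the initiator and $L_v\cup\{v\}$ as the target star: it produces the star $\{vw : w\in L_v\}$ within $\lceil\log(|L_v|+1)\rceil = O(\log n)$ slots, activating $O(|L_v|)$ edges of which only $O(|L_v|)$ are excess edges (the transient leaf-to-leaf edges) to be removed in the final slot. Again the different stars are centered at distinct vertices $v$ and the new leaves are freshly generated vertices, so all these {\em \texttt{star}} subprocesses are mutually independent and run in parallel within the same $O(\log n)$ slots. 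One subtlety to check: during the {\em \texttt{star}} subprocess rooted at $v$, the vertex $v$ must be free to generate a new child in each slot, which is fine because $v$'s pre-existing tree edges are untouched and $v$ is the center of only one star. Therefore a leaf-cut phase is also reversed in $O(\log n)$ slots.

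The main obstacle, and the place where care is needed, is verifying that the ``in parallel'' claim is actually consistent with the model: in a single slot, each existing vertex may generate \emph{at most one} new vertex, and a newly generated vertex $u'$ may only be joined to vertices within distance $1$ of its parent $u$ in the current graph. So I must confirm that, when the path- and star-subprocesses of a phase are overlaid, (i) no vertex is asked to generate two children in one slot — this holds because in the {\em \texttt{path}} subprocess only the running ``left'' endpoints generate children and in the {\em \texttt{star}} subprocess only the center does, and in the decomposition these roles are assigned to distinct vertices or to the same vertex playing a single role; and (ii) every edge activated in the reversal is between a new vertex and a vertex at distance $\le 1$ from its parent at that moment — this is precisely why the {\em \texttt{path}} and {\em \texttt{star}} subprocesses were designed (each new vertex is joined to its parent and to one neighbor of the parent, the relay, which is at distance exactly $1$). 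Once these local-feasibility checks are in place, the slot counts of Lemma~\ref{line-construction-lem} and of the star lemma apply verbatim, giving the $O(\log n)$ bound per phase and completing the proof.
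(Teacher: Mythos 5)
Your proof takes essentially the same route as the paper: reverse a path-cut phase by running the \texttt{path} subprocess from one endpoint of each collapsed degree-$2$ path, reverse a leaf-cut phase by running \texttt{star} subprocesses rooted at the surviving parents, and observe that the subprocesses within a phase are vertex-disjoint enough to run in parallel, giving $O(\log n)$ slots per phase. The slot count and the parallelism discussion match the paper's argument (the paper likewise asserts distinct initiators and parallel execution without spelling out the orientation argument that makes the initiators distinct when a degree-$\ge 3$ vertex is shared by several collapsed paths, so you share that gap rather than introduce it).

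Two details are off, though neither affects the slot bound that the lemma actually claims. First, the assertion that $u_1u_\nu$ is ``the single excess edge for this path'' is false: the \texttt{path} subprocess itself already deletes $\Theta(\nu)$ relay edges for a path of $\nu$ vertices (Lemma~\ref{line-construction-lem} says $O(n)$ excess edges, not $O(1)$), and the paper's version additionally connects \emph{every} newly generated vertex to $u_\nu$, contributing $\Theta(\nu)$ more. Second, you are vague about how the fixed vertex $u_\nu$ is reached: simply ``running \texttt{path} internally on $u_1u_\nu$'' does not immediately ensure the eventual vertex $u_{\nu-1}$ can be joined to $u_\nu$ within distance $d-1=1$ of its parent. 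The paper handles this explicitly by modifying \texttt{path} to activate $u' u_\nu$ for every new vertex $u'$, which keeps $u_\nu$ at distance $1$ from the whole growing path. Your ``relay edge'' phrasing gestures at this but should be made precise; otherwise the distance-constraint check in your final paragraph does not actually close.
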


\begin{proof}
    First, let us consider the path-cut phase. At the beginning of this phase, every starting subgraph $G'$ is a path subgraph with vertices $u_1,u_2,...,u_x$, where $u_1,u_x$ are the endpoints of the path. At the end of the phase, every subgraph has two connected vertices $u_1,u_x$. The reversed process works as follows: for each path $u_1,u_2,\ldots u_x$ that we want to generate, we use vertex $u_1$ as the initiator and we execute the {\em \texttt{path}} algorithm from Section \ref{sec:edge_act_distance_2} in order to generate vertices $u_2,u_3,...,u_{x-1}$. We add the following modification to {\em \texttt{path}}: every time a vertex is generated, an edge between it and vertex $u_x$ is activated. After this process completes, edges not belonging to the original path subgraph $G'$ are deleted. This growth schedule requires $\log x \leq \log n$ slots. We can combine the growth schedules of each path into a single schedule of $\log x$ slots since every schedule has distinct initiators and they can run in parallel.
    
    Now let us consider the leaf-cut phase. In this phase, every vertex removed is a leaf vertex $u$ with one neighbor $v$. Note that $v$ might have multiple neighboring leaves. The reverse process works as follows: For each vertex $v$, we use a separate star growth schedule from Section \ref{sec:edge_act_distance_2} with $v$ as the initiator, in order to generate every vertex $u$ that was a neighbor to $v$. Each of this growth schedule requires at most $\log x \leq \log n$ slots, where $x$ is the number of leaves in the current graph. We can combine the growth schedules of each star into a single schedule of $\log k$ slots since every schedule has distinct initiators and they can run in parallel.
\end{proof}

\begin{theorem}
For any tree graph $G$ on $n$ vertices, the {\em \texttt{tree}} algorithm computes in polynomial time a growth schedule $\sigma$ for $G$ of $O(\log^2n)$ slots and $O(n)$ excess edges.
\end{theorem}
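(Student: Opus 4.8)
The plan is to assemble the theorem from two structural facts already available — that the {\em \texttt{tree}} decomposition halts after $O(\log n)$ phases, and that each phase can be reversed by a growth schedule of $O(\log n)$ slots (Lemma~\ref{lem:tree_slots}) — together with a single amortized count of the excess edges across all phases.

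First I would verify that the reversed decomposition really is a legal growth schedule for $G$. For a path-cut phase I would check that regrowing each maximal degree-$2$ path $P=(u_1,\dots,u_\nu)$ via the {\em \texttt{path}} subroutine rooted at $u_1$, with every newly generated vertex additionally joined to $u_\nu$, respects the $d=2$ constraint: the relay edge $u_1u_\nu$ present in $G_{2i+1}$ makes $u_1$ adjacent to $u_\nu$, and since the edges to $u_\nu$ are removed only at the very end, every vertex that is on the path at any point of the subprocess stays adjacent to $u_\nu$, so $u_\nu$ is always within distance $d-1=1$ of the parent of each new vertex; also $u_1u_\nu\notin E(G)$ because $G$ is a tree, so it is genuinely an excess edge. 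For a leaf-cut phase I would run a {\em \texttt{star}} subroutine centred at each vertex whose pendant leaves were removed, which again fits $d=2$ since every new leaf is attached to its centre, and the centre is within distance $1$ of the parent of that leaf. Distinct path (resp.\ star) reversals in one phase act on disjoint vertex sets with distinct initiators, so they run in parallel; all excess edges are deleted in the final slot, which does not disconnect the graph since what remains is exactly the connected tree $G$.

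The slot count is then immediate: there are $O(\log n)$ phases, their reversals are performed one after another, and each reversal uses $O(\log n)$ slots by Lemma~\ref{lem:tree_slots}, for $O(\log^2 n)$ slots in total; and the whole construction is polynomial-time, since each decomposition phase and each reversal schedule is computed in polynomial time.

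The step I expect to be the real obstacle is the $O(n)$ excess-edge bound, because a phase-by-phase estimate only yields $O(n\log n)$ and an amortization is needed. Here I would use that every vertex of $G$ except the initiator is removed in exactly one phase, so the removed vertex sets are pairwise disjoint and sum to $n-1$. In any path-cut phase each decomposed path $(u_1,\dots,u_\nu)$ has $\nu-2\ge 1$ internal (removed) vertices, so the number of such paths is at most the number of vertices removed by that phase; moreover reversing that path activates $O(\nu)$ excess edges (the $O(\nu)$ of the {\em \texttt{path}} subroutine from Lemma~\ref{line-construction-lem}, at most $\nu-2$ edges to $u_\nu$, and one relay edge). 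Letting $\nu_j$ run over all paths decomposed in all path-cut phases, $\sum_j\nu_j=\sum_j(\nu_j-2)+2\cdot(\#\text{paths})\le 3(n-1)$, so all path-cut reversals together use $O(n)$ excess edges; and each leaf-cut phase removing a leaf set $L_i$ contributes $O(|L_i|)$ excess edges through its star reversals, with $\sum_i|L_i|\le n-1$. Adding the two contributions gives $O(n)$ excess edges, completing the argument.
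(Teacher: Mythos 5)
Your proposal is correct and follows essentially the same route as the paper: reverse the two cutting phases via the \texttt{path} and \texttt{star} subroutines (Lemma~\ref{lem:tree_slots}), multiply $O(\log n)$ phases by $O(\log n)$ slots per phase, and amortize the excess-edge count by observing that each sub-schedule charges $O(1)$ excess edges per vertex it generates, with the generated vertex sets partitioning $V(G)\setminus\{u_0\}$ so that the total is $O(n)$. Your version is slightly more careful in handling the constant for short paths (bounding the number of paths by the number of removed vertices) and in re-verifying the $d=2$ legality, but it is the same argument the paper gives.
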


\begin{proof}
    The growth schedules can be straightly combined into a single one by appending the end of each growth schedule with the beginning of the next one, since every sub-schedule $\sigma_i$ uses only a single vertex as an initiator $u$, which is always available (i.e.,~$u$ was generated by some previous $\sigma_{j}$). Since we have $O(\log n)$ schedules and every schedule has $O(\log n)$ slots, the combined growth schedule has $O(\log^2 n)$ slots. Note that every schedule used to reverse a phase uses $O(n)$ excess edges, where $n$ is the number of vertices generated in that schedule. Since the complete schedule generates $n-1$ vertices, the excess edges activated throughout the complete schedule are $O(n)$.
\end{proof}

\subsection{Planar Graphs}

In this section, we provide an algorithm that computes a growth schedule for any target planar graph $G=(V,E)$. The algorithm first computes a $5$-coloring of $G$ and partitions the vertices into color-sets $V_i$, $1 \leq i \leq 5$. 
The color-sets are used to compute the growth schedule for $G$. The schedule contains five sub-schedules, each sub-schedule $i$ generating all vertices in color-set $V_i$. In every sub-schedule $i$, we use a modified version of the {\em \texttt{star}} schedule to generate set $V_i$.

\noindent \textbf{Pre-processing:} By using the algorithm of \cite{Williams85}, the pre-processing step computes a 5-coloring of the target planar graph $G$. This creates color-sets $V_i\subseteq V$, where $1 \leq i \leq 5$, every color-set $V_i$ containing all vertices of color $i$. W.l.o.g., we can assume that $|V_1|\geq|V_2|\geq|V_3|\geq|V_4|\geq|V_5|$. Note that every color-set $V_i$ is an independent set of $G$.

\medskip

\noindent\underline{\textbf{{\em \texttt{Planar}} algorithm}:} The algorithm picks an arbitrary vertex from $V_1$ and makes it the initiator $u_0$ of all sub-schedules. Let $V_i=\{u_1,u_2,\ldots, u_{|V_i|}\}$. For every sub-schedule $i$, $1\leq i\leq 5$, it uses the {\em \texttt{star}} schedule with $u_0$ as the initiator, to grow the vertices in $V_i$ in an arbitrary sequence, with some additional edge activations. In particular, upon generating vertex $u_x\in V_i$, for all $1\leq x\leq |V_i|$:

\begin{enumerate}
    \item Edge $vu_x$ is activated if $v\in \bigcup_{j<i} V_j$ and $u_yv\in E$, for some $u_y\in V_i\cap P_{u_x}$, both hold (recall that $P_{u_x}$ contains the descendants of $u_x$). 
    \item Edge $wu_x$ is activated if $w\in \bigcup_{j<i} V_j$ and $wu_x\in E$ both hold. 
\end{enumerate}

Once all vertices of $V_i$ have been generated, the schedule moves on to generate $V_{i+1}$. Once all vertices have been generated, the schedule deletes every edge $uv\notin E$. Note that every edge activated in the growth schedule is an excess edge with the exception of edges satisfying (2). For an edge $wu_x$ from (2) to satisfy the edge-activation distance constraint it must hold that every vertex in the birth path of $u_x$ has an edge with $w$. This holds true for the edges added in (2), due to the edges added in (1).

The edges of the {\em \texttt{star}} schedule are used to quickly generate the vertices, while the edges of (1) are used to enable the activation of the edges of (2). By proving that the {\em \texttt{star}} schedule activate $O(n)$ edges, (1) activates $O(n\log n)$ edges, and by observing that the schedule contains {\em\texttt{star}} sub-schedules that have $5\times O(\log n)$ slots in total, the next theorem follows.

\begin{lemma}\label{lem:planar-cor}
    Given a target planar graph $G=(V,E)$, the {\em \texttt{planar}} algorithm returns a growth schedule for $G$.
\end{lemma}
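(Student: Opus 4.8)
The plan is to verify that the sequence of operations produced by the \texttt{planar} algorithm is an admissible growth schedule and that its last instance equals $G$. Three things need checking: (i) every slot generates at most one child per existing vertex; (ii) every edge activated when a child $u'$ of a vertex $u$ is created has its other endpoint at distance at most $d-1=1$ from $u$ in the current instance; and (iii) after deleting, in the last slot, every edge not in $E(G)$, the remaining graph is exactly $G$, and this deletion leaves a connected graph (so no instance is disconnected — which uses that $G$ is connected, as holds for every growable target). Item (i) is immediate, since sub-schedule $i$ runs a \texttt{star} subprocess, which in each round generates exactly one child for each vertex currently in the star of $V_i$, and rules (1)--(2) add only edges. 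Item (iii) follows from properness of the $5$-colouring: every edge of $G$ joins vertices of distinct classes $V_j,V_i$ with $j<i$, so when the $V_i$-endpoint is born, rule (2) activates that edge (edges of $G$ incident to $u_0\in V_1$ are activated by rule (2) when their other endpoint is born); since no edge of $G$ is ever deleted before the last slot, all of $E(G)$ is present at the end, and removing every remaining non-$G$ edge yields $G$.

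The heart of the argument is item (ii), namely showing that the ``helper'' activations of rule (1) are themselves legal and cascade so that rule (2) is always applicable. For $u_x\in V_i$ let $R(u_x)$ be the set of vertices $v\in V_1\cup\dots\cup V_{i-1}$ such that $vu_y\in E(G)$ for some $u_y\in V_i$ which is $u_x$ itself or a descendant of $u_x$. Then rules (1) and (2) together are precisely the instruction ``upon generating $u_x$, activate $u_xv$ for every $v\in R(u_x)$'' (rule (2) covers $v$ adjacent in $G$ to $u_x$ itself, rule (1) covers $v$ adjacent in $G$ to a proper descendant of $u_x$; since the algorithm is centralized, the whole birth tree of each \texttt{star} subprocess, hence all descendants, is fixed in advance). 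I would prove, by induction on the order in which vertices are generated, the invariant: \emph{when $u_x\in V_i$ is generated with parent $p$, every vertex of $R(u_x)$ and the vertex $u_0$ are neighbours of $p$ in the current instance}; granting it, the activations performed for $u_x$, i.e.\ $u_0u_x$ from the \texttt{star} subprocess and $u_xv$ for $v\in R(u_x)$, are all at distance $\le 1$ from $p$, hence legal.

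For the inductive step, if $p=u_0$ then $R(u_x)\subseteq V_1\cup\dots\cup V_{i-1}$, and every vertex generated in an earlier sub-schedule is adjacent to $u_0$ — a structural property of the \texttt{star} subprocess (a new vertex either is born from $u_0$ or gets the edge to $u_0$ activated at birth) that persists forever, because neither later \texttt{star} rounds nor rules (1)--(2) ever delete an edge incident to $u_0$. If instead $p\in V_i$, then $p$ was born earlier in sub-schedule $i$, and as $u_x$ is a child of $p$, every vertex which is $u_x$ or a descendant of $u_x$ is a descendant of $p$; hence $R(u_x)\subseteq R(p)$. By the induction hypothesis the edges $pv$, $v\in R(p)$, were activated when $p$ was born, and none is deleted before the last slot (a \texttt{star} subprocess deletes only parent--child edges lying inside $\{u_0\}\cup V_i$, and edges $pv$ with $v\in V_1\cup\dots\cup V_{i-1}$ or $v=u_0$ are not of that form); so $R(u_x)\subseteq R(p)\subseteq N(p)$, and $u_0\in N(p)$ since $p$ is adjacent to $u_0$. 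The base cases are direct: for sub-schedule $1$ every $R(\cdot)$ is empty and the \texttt{star} subprocess keeps all vertices adjacent to $u_0$; and the first vertex of any sub-schedule has parent $u_0$. One must also record the within-slot ordering: in the \texttt{star} round creating $u_x$ from a leaf $p\ne u_0$, the edge $pu_x$ is present throughout the activation phase, so $u_0u_x$ and the edges $u_xv$ ($v\in R(u_x)\subseteq N(p)$) all satisfy the distance-$1$-from-$p$ condition, and $pu_x$ is deleted only afterwards.

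Combining (i)--(iii) with this invariant shows that every operation of the \texttt{planar} algorithm is admissible and that the schedule grows $G$, which is the claim. The main obstacle is exactly the invariant above: isolating the right quantity $R(\cdot)$ and checking that rule (1)'s pre-activations propagate correctly along every birth path so that rule (2) is applicable whenever invoked — once that is in place, everything else is bookkeeping.
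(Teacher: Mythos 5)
Your proof is correct and follows essentially the same approach as the paper's: the paper's (much terser) argument also reduces correctness to the observation that rule (1) pre-activates helper edges along each birth path so that rule (2) is always legal, and that the star mechanism keeps $u_0$ adjacent to every previously generated vertex. Your invariant built on the set $R(u_x)$ and the explicit induction on birth order is a clean formalization of the cascade the paper only sketches.
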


\begin{proof}
    Based on the description of the schedule, it is easy to see that we generate exactly $|V|$ vertices, since we break $V$ into our five sets $V_i$ and we generate each set in a different phase $i$. This is always possible no matter the graph $G$, since every set $V_i$ is an independent set.
    
    We will now prove that we also generate activate the edges of $G$. Note that this holds trivially since (2) activates exactly those edges. What remains is to argue that the edges of (2) do not violate the edge activation distance $d=2$ constraint. This constraint is satisfied by the edges activated by (1) since for every edge $wu_x\in G$, the schedule makes sure to activate every edge $uu_y$, where vertices $u_y$ are the vertices in the birth path of $u_x$. 
\end{proof}

\begin{lemma}\label{lem:planar-com}
    The {\em \texttt{planar}} algorithm has $O(\log n)$ slots and $O(n \log n)$ excess edges.
\end{lemma}

\begin{proof}
    Let $n_i$ be the size of the independent set $V_i$. Then, the sub-schedule that constructs $V_i$ requires the same number of slots as {\em \texttt{path}}, which is $\ceil{\log n_i}$ slots. Combining the five sub-schedules requires  $\sum_{i=1}^5{\log{n_i}} = \log{\prod_{i=1}^5{n_i}} < 5\log{n} = O(\log{n})$ slots.
   
    Let us consider the excess edges activated in every sub-schedule. The number of excess edges activated are the excess edges of the star schedule and the excess edges for the progeny of each vertex. The excess edges of the star schedule are $O(n)$. We also know that the progeny of each vertex $u$ includes at most $|P_u| = O(\log n)$ vertices since the length of the growth schedule is $O(\log n)$. Since we have a planar graph we know that there are at most $3n$ edges in graph $G$. For every edge $(u, v)$ in the target graph, we would need to add at most $O(\log n)$  additional excess edges. Therefore, no matter the structure of the $3n$ edges, the schedule would activate $3n O(\log n) = O(n\log n)$ excess edges.
\end{proof}

The next theorem now follows from Lemmas \ref{lem:planar-cor} and \ref{lem:planar-com}.

\begin{theorem}
For any planar graph $G$ on $n$ vertices, the {\em \texttt{planar}} algorithm computes in polynomial time a growth schedule for $G$ of $O(\log n)$ slots and $O(n\log n)$ excess edges.
\end{theorem}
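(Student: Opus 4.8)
The plan is to verify the three ingredients promised in the text just before the theorem statement and then assemble them. First I would establish correctness: the algorithm partitions $V$ into the five color classes $V_1,\dots,V_5$ of a proper $5$-coloring (obtainable in polynomial time by \cite{Williams85}), each $V_i$ being an independent set, so by Proposition~\ref{prop:independent set} it is legitimate to generate all of $V_i$ within a single {\em \texttt{star}}-type sub-schedule. I would then argue that after all five sub-schedules the activated edge set is a superset of $E(G)$, because rule (2) activates exactly the edges $wu_x\in E$ with $w$ in an earlier color class, and within a color class no edges of $G$ exist; hence deleting every activated edge not in $E$ at the final slot yields precisely $G$. The one subtlety here — and I expect this to be the main obstacle — is checking the locality constraint $d=2$ for the edges introduced by rule (2): when $u_x$ is generated, the edge $wu_x$ is only permitted if $w$ is at distance $\le 1$ from the parent of $u_x$, i.e.\ $w$ is already adjacent to every vertex on the birth path $B_{u_x}$. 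This is what rule (1) is for: whenever any descendant $u_y$ of $u_x$ (equivalently, any earlier-generated vertex of $V_i$ on the path that will become $u_x$'s ancestry) is adjacent in $G$ to some $w\in\bigcup_{j<i}V_j$, rule (1) has already activated the edge $wv$ for the corresponding vertex $v$; I would make the induction on the birth path explicit, showing that by the time $u_x$ is generated, its parent is adjacent to every $w$ that rule (2) wants to connect to $u_x$, so the activation is legal. This is essentially Lemma~\ref{lem:planar-cor}, which I may invoke.

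Next I would bound the number of slots. Each sub-schedule $i$ runs the {\em \texttt{star}} schedule on $u_0$ to generate the $|V_i|$ vertices of $V_i$ (with extra edge activations that do not affect the slot count), so by the {\em \texttt{star}} lemma it uses $\lceil\log(|V_i|+1)\rceil = O(\log n)$ slots; since the sub-schedules are run sequentially and there are exactly five of them, the total is $5\cdot O(\log n) = O(\log n)$ slots.

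Finally I would count excess edges, which is Lemma~\ref{lem:planar-com}. The edges activated split into three types: (a) the parent–child edges and the $u_0$-to-new-vertex relay edges of the five {\em \texttt{star}} sub-schedules, totalling $O(n)$ as in the {\em \texttt{star}} lemma; (b) the rule-(1) edges; (c) the rule-(2) edges, which coincide with $E(G)$ and are therefore \emph{not} excess. For type (b): since the whole schedule has $O(\log n)$ slots, every vertex's birth path, hence its progeny $P_u$, has size $O(\log n)$, and $G$ planar has $|E|\le 3n-6$ edges; each edge $wu_y\in E$ with $w$ in an earlier color class causes rule (1) to request at most one edge $wu_x$ for each ancestor $u_x$ of $u_y$ inside $V_i$, i.e.\ $O(\log n)$ edges per edge of $G$, for a total of $O(n\log n)$. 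Summing (a)+(b) gives $O(n\log n)$ excess edges. Combining the three bounds — polynomial running time (the $5$-coloring and the bookkeeping are clearly polynomial), $O(\log n)$ slots, $O(n\log n)$ excess edges — yields the theorem; formally it follows from Lemmas~\ref{lem:planar-cor} and~\ref{lem:planar-com} as indicated in the text.
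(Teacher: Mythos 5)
Your proposal is correct and follows essentially the same route as the paper's own proof (Lemmas~\ref{lem:planar-cor} and~\ref{lem:planar-com}): verify that the five independent color classes can each be generated by a {\em \texttt{star}}-type sub-schedule, check the $d=2$ locality of rule-(2) edges via the rule-(1) pre-activations along birth paths, bound the slots by $5\cdot O(\log n)$, and bound excess edges by the $O(n)$ star-relay edges plus $O(\log n)$ rule-(1) activations per each of the $\le 3n-6$ edges of the planar graph. One small nit: the parenthetical ``(equivalently, any earlier-generated vertex of $V_i$ on the path that will become $u_x$'s ancestry)'' inverts descendants and ancestors — descendants of $u_x$ are generated \emph{later} — but the sentence that follows it states the correct inductive claim, so the argument is sound.
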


\begin{definition}
    A $k$-degenerate graph $G$ is an undirected graph in which every subgraph has a vertex of degree at most $k$.
\end{definition}

\begin{corollary}
    The {\em \texttt{planar}} algorithm can be extended to compute, for any graph $G$ on $n$ vertices and in polynomial time, a growth schedule  of $O((k_1+1)\log n)$ slots, $O(k_2n\log n)$ and excess edges, where (i) $k_1=k_2$ is the degeneracy of graph $G$, or (ii) $k_1=\Delta$ is the maximum degree of graph $G$ and $k_2=|E|/n$.
\end{corollary}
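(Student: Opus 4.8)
The plan is to observe that the \texttt{planar} algorithm and its analysis use only two properties of planar graphs: (a) a proper coloring into few classes, computable in polynomial time, and (b) a linear bound on $|E|$; both have degeneracy/bounded-degree analogues, and substituting them carries the whole argument through. First I would produce the colorings. For case~(i), I compute a degeneracy ordering of $G$ by repeatedly deleting a minimum-degree vertex (each deleted vertex has degree at most $k$ at the moment of deletion, by the definition of degeneracy), and then color greedily in the reverse of this deletion order: when a vertex is colored, at most $k$ of its neighbors have already been colored, so $k+1$ colors suffice. For case~(ii), I color greedily along an arbitrary order; since every vertex has at most $\Delta$ neighbors, $\Delta+1$ colors suffice. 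Both procedures are polynomial-time and output independent color classes $V_1,\dots,V_{k_1+1}$, which is exactly what the \texttt{planar} algorithm uses in place of $V_1,\dots,V_5$.

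Next I would run the \texttt{planar} algorithm verbatim on these $k_1+1$ classes. Its correctness proof never uses that the number of classes is five — only that each $V_i$ is independent, so that it can be grown by one modified \texttt{star} sub-schedule rooted at $u_0$, that rule~(2) reactivates precisely the $G$-edges between $V_i$ and the lower classes, and that rule~(1) supplies, along the $V_i$-portion of every birth path, the relay edges needed for the rule-(2) activations to respect the edge-activation distance $d=2$. Thus the output is a valid growth schedule for $G$. For the slot bound: each sub-schedule is a \texttt{star} process on at most $n$ vertices, hence uses $\lceil\log|V_i|\rceil = O(\log n)$ slots, exactly as the basic \texttt{path} and \texttt{star} subprocesses of Section~\ref{subsec:basic}; concatenating the $k_1+1$ sub-schedules together with one final edge-deletion slot gives $O((k_1+1)\log n)$ slots.

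The excess-edge count is the delicate step, and the one I expect to require the most care. The \texttt{star} sub-schedules together generate only $n-1$ vertices and therefore contribute $O(n)$ excess edges in total. For the rule-(1) edges, the key observation is that a fixed target edge $u_yv$ with $u_y\in V_i$ and $v$ in a lower class causes $vu_x$ to be activated only for those ancestors $u_x$ of $u_y$ that themselves lie in $V_i$; since the generation forest restricted to $V_i\cup\{u_0\}$ has depth at most the number of slots of sub-schedule $i$, i.e.\ $O(\log n)$, each target edge is charged only $O(\log n)$ rule-(1) edges — crucially not $O((k_1+1)\log n)$. Hence the rule-(1) edges number at most $|E|\cdot O(\log n)$, which equals $O(k_2 n\log n)$: in case~(i) because a $k$-degenerate graph has $|E|\le kn = k_2 n$, and in case~(ii) because $k_2 = |E|/n$ by definition. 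Combining this with the $O(n)$ \texttt{star} contribution and the slot bound of the previous paragraph establishes the corollary, and polynomiality is immediate since the coloring, the \texttt{star} subroutines, and the edge bookkeeping are all polynomial-time.
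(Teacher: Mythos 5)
Your proof is correct and follows essentially the same route as the paper's: compute a $(k_1{+}1)$-coloring (a degeneracy ordering in case~(i), greedy in case~(ii)), feed the color classes into the \texttt{planar} machinery in place of the five planar classes, and charge $O(\log n)$ rule-(1) edges per target edge to obtain $O(k_2 n\log n)$ excess edges. Your treatment of the excess-edge count is in fact slightly more careful than the paper's terse appeal to the proof of Lemma~\ref{lem:planar-com}: that proof bounds the relevant depth by the length of the whole (5-color, hence $O(\log n)$) schedule, which if read naively here would give $O((k_1{+}1)\log n)$ per edge, whereas your observation that only the depth within a single sub-schedule matters — hence $O(\log n)$ independently of $k_1$ — is precisely the refinement that makes the claimed bound go through.
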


\begin{proof}
    For case (i), if graph $G$ is {\em $k_1$-degenerate}, then an ordering with coloring number $k_1+1$ can be obtained by repeatedly finding a vertex $v$ with at most $x$ neighbors, removing $v$ from the graph, ordering the remaining vertices, and adding $v$ to the end of the ordering. By Lemma \ref{lem:planar-com}, the algorithm using a $k_1+1$ coloring would produce a growth schedule of $O((k_1+1)\log n)$ slots. Since graph $G$ is $k_2-degenerate$, $G$ has at most $k_2\times n$ edges and by the proof of Lemma \ref{lem:planar-com}, the algorithm would require $O(k_2n \log n)$ excess edges.
    For case (ii), we compute a $\Delta+1$ coloring using a greedy algorithm and then use the planar graph algorithm with the computed coloring as an input. By the proof of Lemma \ref{lem:planar-com}, the algorithm would produce a growth schedule of $O((\Delta+1)\log n)$ slots. 
\end{proof}

\subsection{Lower Bounds on the Excess Edges}
\label{subsec:lower-bounds}

In this section, we provide some lower bounds on the number of excess edges required to grow a graph if we fix the number of slots to $\log n$. For simplicity, we assume that $n=2^\delta$ for some integer $\delta$, but this assumption can be dropped.

We define a particular graph $G_{min}$ of size $n$, through a growth schedule $\sigma_{min}$ for it. The schedule $\sigma_{min}$ contains $\log n$ slots. In every slot $t$, the schedule generates one vertex $u^{\prime}$ for every vertex $u$ in $(G_{min})_{t-1}$ and activates $uu^{\prime}$. This completes the description of $\sigma_{min}$. Let $G$ be any graph on $n$ vertices, grown by a $\log n$-slot schedule $\sigma$. Observe that any edge activated by $\sigma_{min}$ is also activated by $\sigma$. Thus, any edges of $G_{min}$ ``not used'' by $G$ are excess edges that must be deleted by $\sigma$, for $G$ to be grown by it. The latter is captured by the following minimum edge-difference over all permutations of $V(G)$ mapped on $V(G_{min})$.

Consider the set $B$ of all possible bijections between the vertex sets of $V(G)$ and $V(G_{min})$, $b : V(G) \longmapsto V(G_{min})$. We define the edge-difference $ED_b$ of every such bijection $b\in B$ as $ED_b = |\{uv\in E(G_{min})\;|\;b(u)b(v)\notin E(G)\}|$. The minimum edge-difference over all bijections $b\in B$ is $\min\limits_{b} ED_b$. We argue that a growth schedule of $\log n$ slots for graph $G$ uses at least $\min\limits_{p} ED_p$ excess edges since the schedule has to activate every edge of $G_{min}$ and then delete at least the minimum edge-difference to get $G$.
This property leads to the following theorem, which can then be used to obtain lower bounds for specific graph families.

\begin{theorem}
    Any growth schedule $\sigma $of $\log n$ slots that grows a graph $G$ of $n$ vertices, uses at least $\min\limits_{b} ED_b$ excess edges.
\end{theorem}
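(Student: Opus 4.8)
The plan is to show that any $\log n$-slot schedule $\sigma$ for $G$ must, at some point, have activated every edge of a copy of $G_{min}$, and hence must delete at least $\min_b ED_b$ of them.

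First I would recall the key structural fact about $\log n$-slot schedules: a schedule that grows an $n$-vertex graph in exactly $\log n = \delta$ slots has no ``room to spare'' in terms of vertex generation. Since $|V_0| = 1$ and $|V_t| \le 2|V_{t-1}|$, reaching $|V_\delta| = n = 2^\delta$ forces $|V_t| = 2^t$ for every $t$, i.e.\ in every slot \emph{every} currently existing vertex must generate a new child. Now I would compare $\sigma$ with $\sigma_{min}$ slot by slot, building a bijection $b$ between the vertices generated by the two schedules that respects the parent-child relation: in slot $t$, if $u$ in $\sigma$ corresponds (under $b$ restricted to $V_{t-1}$) to vertex $u^{\sigma_{min}}$ in $\sigma_{min}$, then we map the child of $u$ in $\sigma$ to the child of $u^{\sigma_{min}}$ in $\sigma_{min}$. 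Starting from $b(u_0) = u_0^{\sigma_{min}}$, this inductively defines a bijection $b : V(G) \to V(G_{min})$.

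The crucial step is then: every edge of $G_{min}$ is the image under $b^{-1}$ of an edge that $\sigma$ is \emph{forced} to activate. Each edge of $G_{min}$ is, by construction of $\sigma_{min}$, a parent-child edge $u^{\sigma_{min}} (u^{\sigma_{min}})'$ created in some slot $t$. The corresponding pair $b^{-1}(u^{\sigma_{min}})$, $b^{-1}((u^{\sigma_{min}})')$ in $\sigma$ is also a parent-child pair created in slot $t$, and the model \emph{requires} the parent-child edge $u u'$ to be activated whenever $u'$ is generated by $u$ (the edge $uu'$ is always added in a vertex-generation operation). Hence $\sigma$ activates the image under $b^{-1}$ of every edge of $E(G_{min})$; call this edge set $E'$, so $|E'| = |E(G_{min})|$ and $E' \subseteq \bigcup_t E_t^+$. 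Now for any edge $e = uv \in E'$ with $b(u)b(v) \in E(G_{min})$ but $uv \notin E(G)$, the edge $e$ was activated but is absent in the final graph $G = G_k$, so it was deleted at some slot and is therefore an excess edge. The number of such edges is exactly $ED_b \ge \min_{b'} ED_{b'}$ for this particular $b$, hence at least $\min_{b'} ED_{b'}$ excess edges were used. (One should also note that distinct edges of $E(G_{min})$ map to distinct edges of $E'$, so there is no double-counting, and that these deleted edges are genuinely among the $\ell$ excess edges counted by $\sigma$.)

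The main obstacle I anticipate is making the bijection argument fully rigorous: one must verify that the slot-by-slot correspondence is well-defined and bijective, using the fact that in a $\log n$-slot schedule the parent-child ``birth forest'' of $\sigma$ is isomorphic, as a rooted forest/tree structure, to that of $\sigma_{min}$ — both are the complete binary tree of depth $\delta$ grown one level per slot. A secondary subtlety is confirming that the edge set $E'$ obtained as the $b^{-1}$-image of $E(G_{min})$ really consists of edges $\sigma$ must activate regardless of its other choices; this follows because parent-child edges are non-optional in the model's vertex-generation step, whereas the distance-$(d{-}1)$ edges are the optional part. Once these points are nailed down, the counting is immediate and the theorem follows.
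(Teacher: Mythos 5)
Your proof is correct and follows essentially the same route as the paper: observe that a $\log n$-slot schedule forces every vertex to generate a child in every slot, so the (mandatory) parent--child edges of $\sigma$ form a copy of $G_{min}$ under some induced bijection $b$, and the edges of that copy absent from $E(G)$ must be deleted, giving at least $ED_b \ge \min_{b'} ED_{b'}$ excess edges; in fact your write-up is more explicit than the paper's terse argument. One small imprecision: the birth tree of a $\log n$-slot schedule is the binomial tree of order $\delta$, not a ``complete binary tree'' on the vertices (the complete binary tree is the \emph{history} of the doubling, whose leaves correspond to the final vertices), but this cosmetic slip does not affect the argument since all that is used is that every vertex spawns exactly one child per slot.
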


\begin{proof}
    Since every schedule $\sigma$ of $\log n$ slots activates every edge $uv$ of $G_{min_i}$, $\sigma$ must delete every edge $uv\notin G$. To find the minimum number of such edges, if we consider the set $B$ of all possible bijections between the vertex sets of $V(G)$ and $V(G_{min})$, $b : V(G) \longmapsto V(G_{min})$ and we compute the minimum edge-difference over all bijections $b\in B$ as $\min\limits_{b} ED_b$, then schedule $\sigma$ has to activate every edge of $G_{min}$ and delete at least $\min\limits_{b} ED_b$ edges.
\end{proof}

\begin{corollary}\label{lower-bound-lem-line}
    Any growth schedule of $\log n$ slots that grows a path or star graph of $n$ vertices, uses $\Omega(n)$ excess edges.
\end{corollary}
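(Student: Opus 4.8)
The plan is to apply the preceding theorem, which guarantees that any $\log n$-slot growth schedule for a graph $G$ on $n$ vertices uses at least $\min_b ED_b$ excess edges, where $ED_b = |\{uv \in E(G_{min}) : b(u)b(v) \notin E(G)\}|$ ranges over all bijections $b : V(G) \to V(G_{min})$. So it suffices to show that, when $G$ is a path or a star on $n$ vertices, $\min_b ED_b = \Omega(n)$. First I would pin down the edge count of $G_{min}$: since $\sigma_{min}$ generates one new vertex for each existing vertex in each of $\log n$ slots, and activates exactly the parent-child edge each time, $G_{min}$ has exactly $n - 1$ edges; in fact $G_{min}$ is a tree (a ``binomial-tree''-like recursive doubling structure). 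I should double-check this is the right structure, but for the lower bound the only facts I truly need about $G_{min}$ are that it has $n-1$ edges and that it has many vertices of degree $\geq 2$ — indeed most of its vertices have degree at least $2$.

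The core counting argument goes as follows. Fix any bijection $b$. For the path $P_n$: $P_n$ has $n - 1$ edges, but more importantly it has exactly two vertices of degree $1$ and $n-2$ of degree $2$; equivalently, $\sum_{v} \binom{\deg_{P_n}(v)}{2}$ (the number of paths of length $2$) is only $n - 2$. Meanwhile $G_{min}$, being a recursive-doubling tree, has $\Omega(n)$ vertices of degree $\geq 2$ and in fact $\sum_{v}\binom{\deg_{G_{min}}(v)}{2} = \Omega(n)$ — I would compute this sum exactly from the recursive construction (each vertex born in slot $t < \log n$ acquires further children/ancestors, so a constant fraction end up with degree $\geq 2$; more concretely the number of degree-$\geq 2$ vertices is $n - 2^{?}$, to be nailed down). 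Since a bijection maps each degree-$2$-path ``cherry'' of $G_{min}$ to a path of two edges in the complete graph on $V(G)$, and $G$ has very few cherries, a counting/pigeonhole argument forces many edges of $G_{min}$ to map outside $E(G)$. The cleanest packaging: the number of edges of $G_{min}$ that can be mapped into $E(P_n)$ is at most $|E(P_n)| = n-1$, which alone is not enough; instead I would argue via degrees — if $b(u)b(v) \in E(P_n)$ for an edge $uv$ of $G_{min}$, then in $G_{min}$ the images-preimages must form a subgraph of $G_{min}$ isomorphic to a subgraph of $P_n$, so this ``kept'' subgraph is a union of paths, hence is acyclic and has maximum degree $2$; but $G_{min}$ has $\Omega(n)$ vertices of degree $\geq 3$ (or the cherry-count is $\Omega(n)$), and each such vertex must lose at least one incident edge. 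Summing the losses, at least a constant fraction of $G_{min}$'s $n - 1$ edges are deleted, giving $\Omega(n)$ excess edges. The star $K_{1,n-1}$ case is analogous and actually easier: $G_{min}$ has maximum degree $O(\log n)$ (each vertex born in slot $t$ gains one child per subsequent slot, so degree $\leq \log n$), while the kept subgraph, being a subgraph of a star, must have all but one vertex of degree $\leq 1$; since $G_{min}$ has $\Omega(n)$ vertices of degree $\geq 2$, again $\Omega(n)$ edges must be deleted.

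The main obstacle I anticipate is making the degree-sequence comparison of $G_{min}$ precise and robust against the choice of bijection. The bijection can relabel vertices arbitrarily, so the argument must be purely about unlabelled structure: I need a clean invariant such as ``the maximum number of edges of $G_{min}$ that induce a subgraph of maximum degree $\leq 2$ (resp. a subgraph that is a star) is at most $n - 1 - c n$ for a constant $c > 0$.'' Establishing this reduces to: in $G_{min}$, any subforest of maximum degree $\leq 2$ (a linear forest) has at most $(1 - c)(n-1)$ edges. This should follow because a linear forest on $n$ vertices has at most $n-1$ edges but, crucially, to keep a vertex of degree $\delta$ in $G_{min}$ as a degree-$2$ vertex we must delete $\delta - 2$ of its edges; summing $\sum_v \max(0, \deg_{G_{min}}(v) - 2) = \sum_v \deg_{G_{min}}(v) - 2n + (\text{number of deg} \leq 1 \text{ vertices}) = 2(n-1) - 2n + (\text{small}) $, which is $O(1)$ — so this naive bound is too weak and I must instead use acyclicity or the cherry-count $\sum_v \binom{\deg_{G_{min}}(v)}{2}$. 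The genuinely careful step is therefore computing $\sum_v \binom{\deg_{G_{min}}(v)}{2} = \Omega(n)$ for the path case (showing a constant fraction of vertices have degree $\geq 3$, or more precisely that the multiset of degrees is far enough from $(1,1,2,2,\dots,2)$), and $|\{v : \deg_{G_{min}}(v) \geq 2\}| = \Omega(n)$ for the star case — both of which I expect to follow directly from unrolling the recursive doubling definition of $\sigma_{min}$.
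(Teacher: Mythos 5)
Your approach matches the paper's: invoke the preceding theorem and bound $\min_b ED_b$ by comparing the degree structure of $G_{min}$ against the target. Your final arguments work --- for the path, $\Omega(n)$ vertices of $G_{min}$ have degree $\geq 3$ and each must shed at least one incident edge for the kept subgraph to be a linear forest; for the star, $\Omega(n)$ vertices of $G_{min}$ have degree $\geq 2$ and all but one must shed an edge --- and since each deleted edge helps at most two vertices, $\Omega(n)$ deletions are forced. The paper itself only writes ``a similar argument works'' for the path case, so the detail you supply is genuinely needed.

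However, your dismissal of the ``naive bound'' $\sum_v\max(0,\deg_{G_{min}}(v)-2)$ as $O(1)$ is an arithmetic slip, stemming from underestimating the number of leaves. In $G_{min}$, precisely the $n/2$ vertices born in the last slot are the leaves (every vertex born earlier acquires a child in each subsequent slot, hence has degree $\geq 2$), so $\sum_v\max(0,\deg_{G_{min}}(v)-2) = -2 + |\{v:\deg_{G_{min}}(v)=1\}| = n/2-2 = \Omega(n)$, not $O(1)$. The naive degree-excess count therefore already gives $\Omega(n)$ deletions directly (each deleted edge decreases that sum by at most $2$), and the detour through cherry counts is unnecessary. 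Finally, the paper's star argument is shorter and gives the exact constant: since $G_{min}$ has maximum degree $\log n$ and all $n-1$ star edges meet the center, under any bijection at most $\log n$ of $G_{min}$'s $n-1$ edges can land in $E(G)$, so $ED_b \geq n-1-\log n$.
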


\begin{proof}
    Note that for a star graph $G=(V,E)$, the maximum degree of a vertex in $G_{min}$ is $\log n$ and the star graph has a center vertex with degree $n-1$. This implies that there are $n-1-\log n$ edges of $G_{min}$ which are not in $E$. Therefore $\min\limits_{b} ED_b= (n-1-\log n)$. A similar argument works for the the schedule of a path graph.
\end{proof}

We now define a particular graph $G_{{full}}=(V,E)$ by providing a growth schedule for it. The schedule contains $\log n$ slots. In every slot $t$, the schedule generates one vertex $u^{\prime}$ for every vertex $u$ in $G_{t-1}$ and activates $uu^{\prime}$. Upon generating vertex $u^{\prime}$, it activates an edge $u^{\prime}v$ with every vertex $v$ that is within $d=2$ from $u^{\prime}$. Assume that we name the vertices $u_1,u_2,\ldots,u_n$, where vertex $u_1$ was the initiator and vertex $u_j$ was generated in slot $\ceil{\log(u_j)}$ and connected with vertex $u_{j-\ceil{\log(u_j)}}$.

\begin{lemma}
If $n$ is the number of vertices of $G_{{full}}=(V,E)$ then the number of edges of $G_{{full}}$ is $n\log n\leq |E|\leq 2n\log n$.
\end{lemma}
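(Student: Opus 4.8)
The plan is to count, for each vertex $u_j$ with $j \geq 2$, how many edges are activated when $u_j$ is generated, since every edge of $G_{full}$ is created at the birth of exactly one of its endpoints (the one born later). By construction, when $u_j$ is generated in slot $\lceil \log j \rceil$ as the child of $u_{j - \lceil \log j \rceil}$, we activate the parent edge plus an edge to every vertex currently at distance $2$ from $u_j$, i.e. to every current neighbor of the parent. So the number of new edges at step $j$ equals $\deg(u_{\text{parent}})$ in the graph grown so far — in particular it is at least $1$ (the parent edge itself) and at most $\deg_{G_{full}}(u_{\text{parent}})$, which is at most the maximum degree of $G_{full}$.

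For the lower bound $|E| \geq n \log n$: each of the $n-1 < n$ non-initiator vertices contributes at least one edge at its birth, but this only gives $|E| \geq n-1$, which is far too weak. Instead I would argue as follows. Partition the vertices by the slot in which they are born; slot $t$ (for $1 \leq t \leq \log n$) has exactly $2^{t-1}$ vertices born in it, and each such vertex $u'$, born as a child of some $u$, gets an edge to $u$ and to every neighbor of $u$ present at that time. Since $G_{full}$ at the start of slot $t$ already contains a spanning connected subgraph on $2^{t-1}$ vertices (it is grown from a single vertex by always keeping the graph connected), every vertex $u$ present has degree $\geq 1$, so each newly born vertex in slot $t$ contributes $\geq 1$ edge — still only linear. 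The key observation I actually need is that the parent of a vertex born in slot $t$ has already been endowed, in each of the earlier slots, with a fresh child-edge or a distance-$2$ edge; more precisely, I would show by induction that at the start of slot $t$ every vertex has degree at least $t-1$ in $G_{t-1}$. Indeed, a vertex born in slot $s$ has degree $\geq 1$ at birth; in each subsequent slot $s+1, \ldots, t-1$ either it generates a child (adding $1$ to its degree) or a child is generated for one of its neighbors at distance $1$, which by the rule activates an edge between that child and our vertex (also adding $1$) — and since every vertex generates a child in every slot after its birth (the schedule of $G_{full}$ generates one child per existing vertex per slot), the degree grows by at least $1$ per slot. Hence in slot $t$ the $2^{t-1}$ newly born vertices each contribute at least $(t-1) + 1 = t$ edges (parent plus the $t-1$ distance-$2$ edges to the parent's other neighbors — I need to be a little careful that these are genuinely distinct and genuinely at distance $2$, using $d=2$ and that the parent's degree is exactly counted). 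Summing, $|E| \geq \sum_{t=1}^{\log n} 2^{t-1} \cdot t$. A standard computation gives $\sum_{t=1}^{m} t 2^{t-1} = (m-1)2^m + 1$, which with $m = \log n$ yields $|E| \geq (\log n - 1)\, n + 1 \geq \tfrac{1}{2} n \log n$ for $n$ large; to land exactly at $n \log n$ one tightens the per-slot bound (the parent of a slot-$t$ vertex in fact has degree roughly $t-1$ plus contributions from its own earlier children, so the count per born vertex is closer to $t$, and a more careful accounting — or simply absorbing constants into the asymptotic notation as the surrounding lemmas do — delivers $n \log n \leq |E|$).

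For the upper bound $|E| \leq 2 n \log n$: each vertex born in slot $t$ contributes exactly $\deg(u_{\text{parent}})$ edges at birth, and by the degree bound above, at the start of slot $t$ no vertex has degree exceeding, say, $2(t-1)$ — a vertex born in slot $s \leq t$ can have its degree increased by at most $2$ per subsequent slot (it generates one child, and at most... actually one must check the maximum: in a single slot a vertex $u$ gains one edge from its own child, and one edge for each distance-$1$ neighbor that generates a child whose $d=2$ edge reaches $u$; a priori this is $\deg(u)$ new edges, which would make degrees blow up). So the clean way is: let $D_t$ be the maximum degree in $G_t$; then $D_t \leq 2 D_{t-1} + 1$ is far too lossy. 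I would instead bound $|E_t|$ directly: $|E_t| \leq |E_{t-1}| + \sum_{u \in V_{t-1}} \deg_{G_{t-1}}(u) = |E_{t-1}| + 2|E_{t-1}| = 3|E_{t-1}|$, giving $|E| \leq 3^{\log n}$, which is polynomial but not $2n\log n$. This shows the naive bound is wrong and that the real $G_{full}$ must add far fewer edges than "every neighbor of the parent" in later slots — which means I need to re-read the construction: the phrase "$v$ that is within $d=2$ from $u'$" together with $d=2$ and the independence/distance propositions (Propositions \ref{prop:independent set}–\ref{prop:birth path}) severely restricts which such $v$ exist, and in fact at the moment $u'$ is born its only distance-$\leq 2$ vertices are the parent and the parent's neighbors, but most of the parent's "neighbors" were themselves just-born siblings at distance $3$, so the count per birth is $O(1)$ amortized, not $O(\deg)$. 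The main obstacle, therefore, is pinning down exactly how many genuine distance-$2$ vertices a newly born vertex sees in each slot; once that is shown to be $\Theta(\text{slot index})$ on average — at most $\sim t$ and at least $\sim t/2$, say — both bounds follow from $\sum_t 2^{t-1} \cdot \Theta(t) = \Theta(n \log n)$, and I would present the argument as: (i) establish the per-slot degree-growth invariant carefully using the $d=2$ distance rules, (ii) sum the geometric-arithmetic series $\sum t\, 2^{t-1}$, (iii) read off $n \log n \leq |E| \leq 2 n \log n$. I expect step (i) — correctly characterizing the neighborhood structure of a freshly generated vertex under the $d=2$ rule — to be where essentially all the work lies.
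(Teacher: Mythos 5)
Your misgivings here are justified, and in fact more so than you suspected. The $|E_t| \leq 3|E_{t-1}|$ count you derived is not ``naive'' --- it is exact for $G_{full}$. The paper's own proof works with the sum of degrees $f(x)$ (which is $2|E|$) and derives the identical recurrence $f(2x)=3f(x)+2x$: each old vertex of degree $d_i$ ends the slot with degree $2d_i+1$ and its newborn child has degree $d_i+1$, so the new degree-sum is $\sum(2d_i+1)+\sum(d_i+1)=3f(x)+2x$. With $f(2)=2$ this solves exactly to $f(2^k)=2\cdot 3^k - 2^{k+1}$, i.e.\ $|E(G_{full})| = n^{\log 3}-n = \Theta(n^{1.585})$, which exceeds $2n\log n$ for every power of two $n\geq 128$, so the stated upper bound is false. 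The paper's proof does establish the lower bound correctly (it compares against the auxiliary sequence $g$ with $g(2)=2$ and $g(2x)=2g(x)+2x$, which equals $x\log x$ and is dominated by $f$), but the concluding claim ``it follows that $f(n)\leq 2n\log n$'' is unsupported: the only upper comparison offered is $f(2x)\leq 4f(x)+4x$, whose solution is $\Theta(x^2)$, not $\Theta(x\log x)$, and the exact closed form of $f$ already contradicts the claim.

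Your proposed way out --- that you must have misread the construction because most of the parent's ``neighbors'' at birth time are same-slot siblings at distance $3$, so the per-birth edge count is $O(1)$ amortized --- does not hold. By the edge-activation rule, the targets $v$ of a newborn $u'$ are taken from $V_{t-1}$ at distance at most $d-1=1$ from the parent $u$ in $G_{t-1}$; every such $v$ was already present before slot $t$ began, so none of them is a sibling born in the same slot, and the per-birth edge count is genuinely $\deg_{G_{t-1}}(u)+1$. Hence your recurrence, and the $n^{\log 3}$ growth it forces, stand. Your lower-bound sketch (minimum degree at the start of slot $t$ is at least $t-1$, so $|E|\geq \sum_{t\geq 1}2^{t-1}\cdot t = \Theta(n\log n)$) is a sound proof of the lower-bound half, but the upper-bound half of the lemma cannot be recovered for $G_{full}$ as defined.
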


\begin{proof}
Let $f(x)$ be the sum of degrees when $x$ vertices have been generated. Clearly $f(2)=2$. Now consider slot $t$ and lets assume it has $x$ vertices at its end. At end of next slot we have $2x$ vertices.
Let the degrees of the vertices at end of slot $t$ be $d_1 , d_2 , ..., d_k$. Consider now that:
\begin{itemize}
    \item Child $i'$ of vertex $i$ (generated in slot $t+1$) has $1$ edge with its parent and $d_i$ edges (since an edge between it and all vertices at distance $1$ from $i$ will be activated in slot $t$. So $d_i'= d_i + 1$.
    \item Vertex $i$ has 1 edge (with its child) and $d_i$ edges (one from each new child of its neighbours in slot $t$), that is $d_i(new)=2d_i + 1$.
\end{itemize}
Therefore $f(2x) = 3 f(x) + 2x$. Notice that $2f(x) + 2x\leq f(2x) \leq 4f(x) + 4x$. Let $g(x)$ be such that $g(2)=2$ and $g(2x)=2g(x)+2x$. We claim $g(x)=x\log x$. Indeed $g(2)=2\log 2 =2$ and by induction $g(2x) = 2g(x)+2x= 2x\log x +2x = 2x\log(2x)$.
It follows that $n\log n \leq f(n) \leq 2n\log n$.
\end{proof}

We will now describe the following bipartite graph $G_{bipart}=(V,E)$ using $G_{{full}}=(V',E')$ to describe the edges of $G_{bipart}$. Both parts of the graph have $n/2$ vertices and the left part, called $A$, contains vertices $a_1,a_2,\ldots,a_{n/2}$, and the right part, called $B$, contains vertices $b_1,b_2,\ldots,b_{n/2}$, and $E'=\{a_ib_j\mid (u_i,u_j\in E)\lor (i=j)\}$. This means that if graph $G_{\text{full}}$ has $m$ edges, $G_{bipart}$ has $\Theta(m)$ edges as well.

\begin{theorem}
    Consider graph $G_{bipart}=(V',E')$ of size $n$. Any growth schedule $\sigma$ for graph $G_{bipart}$ of $\log n$ slots uses $\Omega(n \log n)$ excess edges.
\end{theorem}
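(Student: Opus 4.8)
The plan is to combine the general lower bound machinery (the theorem on $\min_b ED_b$) with the edge count of $G_{full}$, transported to the bipartite setting $G_{bipart}$. First I would recall that any $\log n$-slot schedule $\sigma$ for $G_{bipart}$ must activate every edge of $G_{min}$ (the ``doubling tree'' on $n$ vertices) and hence must delete, as excess edges, at least $\min_b ED_b$ of them, where $b$ ranges over bijections $V(G_{bipart})\to V(G_{min})$. So it suffices to show $\min_b ED_b = \Omega(n\log n)$, i.e.\ that for every bijection $b$, at least a constant fraction of the $\Theta(n)$ edges of $G_{min}$ are mapped to non-edges of $G_{bipart}$. Equivalently, I need an upper bound: no relabelling of $G_{bipart}$ can realize more than $O(n)$ of the $\Theta(n)$ edges of $G_{min}$ — but this is automatic since $G_{min}$ has exactly $n-1$ edges, so that reading gives nothing. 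The real content must go the other way: I should instead lower bound the number of excess edges directly via the edge counts, arguing that $\sigma$ activates $\Omega(n\log n)$ edges in total and only $\Theta(n)$ of them survive in $G_{bipart}$.

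Concretely, here is the structure I would follow. Step 1: observe that since $G_{bipart}$ has $\Theta(m)$ edges where $m=|E(G_{full})|$ and, by the preceding lemma, $m=\Theta(n\log n)$, the target graph $G_{bipart}$ itself has $\Theta(n\log n)$ edges; so the bound cannot come merely from counting edges of the final graph. Step 2: the key combinatorial claim is that any $\log n$-slot schedule that grows a graph on $n$ vertices and activates an edge $u'v$ at distance $2$ must, by the edge-activation constraint, have previously activated the edge between $v$ and every vertex on the birth path of $u'$. In $G_{bipart}$, whose bipartition is $(A,B)$ of size $n/2$ each and whose edge set mirrors $G_{full}$ plus a perfect matching $a_ib_i$, I would show that realizing the $\Theta(n\log n)$ edges of $G_{bipart}$ forces the schedule to activate an additional $\Omega(n\log n)$ ``relay'' edges along birth paths, exactly as in the $G_{full}$ construction, and that these relay edges are not edges of $G_{bipart}$ (they would be intra-part edges or wrong-parity edges), hence are excess. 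Step 3: to make ``$\Omega(n\log n)$ relay edges are forced'' rigorous, I would use a charging argument: each vertex $u_j$ of $G_{full}$ is generated in slot $\lceil\log j\rceil$ and has a birth path of length $\Theta(\log j)$; for the matching edge $a_jb_j$ and the mirrored edges $a_ib_j$ to be activatable at the moment of generation, we need edges from $b_j$ (or $a_j$) to all $\Theta(\log j)$ ancestors, and summing $\sum_{j=1}^{n/2}\log j = \Theta(n\log n)$ gives the bound. Each such relay edge joins two vertices in the same part (or is otherwise absent from $E(G_{bipart})$), so it is deleted.

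The main obstacle I anticipate is Step 3: making precise that the relay edges genuinely cannot already be present in $G_{bipart}$ and that they are counted without double-counting. I would need to track parities carefully — a newborn child of a vertex in $A$ lands in $B$ after one step but the relay edges to its grandparent and beyond alternate parts, so I must argue that the specific relay edges forced by the $d=2$ constraint land on same-part pairs (hence non-edges of the bipartite graph) or otherwise on pairs absent from $E(G_{bipart})$ by the definition $E'=\{a_ib_j : (u_iu_j\in E(G_{full}))\lor(i=j)\}$. A cleaner route, which I would pursue if the parity bookkeeping gets messy, is to invoke the $G_{full}$ lower bound structure abstractly: since $G_{bipart}$ is built to ``contain a copy of $G_{full}$'s adjacency pattern'', any fast schedule for $G_{bipart}$ induces a fast schedule-like object for $G_{full}$ on the $a$-side, and the $\Theta(n\log n)$ edge count of $G_{full}$ combined with the fact that a $\log n$-slot doubling process on $n$ vertices can create at most $O(n)$ ``free'' (non-excess) parent-child edges forces $\Omega(n\log n)$ excess edges. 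I would present whichever of these two arguments comes out shortest, but the parity-tracking version is the one most directly analogous to the path and star corollaries already in the paper.
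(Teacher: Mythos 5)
You correctly discard the $\min_b ED_b$ approach (it yields at most $O(n)$) and you correctly identify that the bound must come from ``relay'' edges forced by the $d=2$ constraint. But you flag the central obstacle yourself — parity bookkeeping and double-counting along shared birth-path prefixes — and never close it, so this is a genuine gap, not a presentational detail. Two concrete problems with your global charging scheme: (i) your Step~2 claim is imprecise, since activating $u'v$ at $u'$'s birth forces $v\in N[\mathrm{parent}(u')]$, and tracing back further can alternate between $u'$'s and $v$'s birth paths, so the forced relay edges lie between pairs of ancestors of $u'$ and $v$, not all incident to $v$ as you assert; (ii) target edges incident to the same vertex $b_j$ may share relay edges along a common birth-path prefix, so $\sum_j\Theta(\log j)$ can wildly overcount the number of distinct excess edges, and you give no bound on the overcount. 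You also implicitly assume the specific birth-tree of $\sigma_{\min}$ (the one that defines $G_{\mathrm{full}}$), whereas the schedule $\sigma$ for $G_{\mathrm{bipart}}$ is arbitrary.

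The paper avoids all of this by looking only at the \emph{last} slot. In any $\log n$-slot schedule for a graph on $n=2^\delta$ vertices, $n/2$ vertices are generated in the last slot and must form an independent set. Since $u_1$ is universal in $G_{\mathrm{full}}$, the vertex $a_1$ is adjacent to all of $B$ and $b_1$ to all of $A$ in $G_{\mathrm{bipart}}$; combined with $a_1b_1\in E'$ this forces the last slot to be exactly $A$ or exactly $B$. Say it is $B$, with each $b_j$ the child of a distinct $a_{i(j)}$. For an edge $a_\ell b_j\in E'$ with $\ell\neq i(j)$ to be activated when $b_j$ is born, the edge $a_\ell a_{i(j)}$ must already be active — and it is an intra-$A$ edge, hence not in $E'$, hence excess. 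All relay edges charged to $b_j$ share the endpoint $a_{i(j)}$, so distinct children's charging sets overlap in at most one edge each, giving $\Omega\bigl(\sum_j(\deg(b_j)-1)\bigr)=\Omega(n\log n)$ after accounting for the (at most factor-two) overlap. This last-slot focus is precisely what is missing from your plan: it makes the charging essentially disjoint and eliminates the parity issues you anticipated.
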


\begin{proof}
    Assume that schedule $\sigma$ of $\log n$ slots, grows graph $G_{bipart}$. Since $\sigma$ has $\log n$ slots, for every vertex $u\in V'_{j-1}$ a vertex must be generated in every slot $j$ in order for the graph to have size $n$. This implies that in the last slot, $n/2$ vertices have to be generated and we remind that these vertices must be an independent set in $G_{bipart}$. For $i=\{n/2\}$, $a_i,b_i\in E'$ and both vertices $a_i,b_i$ cannot be generated together in the last slot. This implies that in the last slot, for every $i=1,2,\ldots,n/2$, we must have exactly one vertex from each pair of $a_i,b_i$. Note though that vertices $a_1,b_1$ have an edge with every vertex in $B,A$ respectively. If vertex $a_1$ or $b_1$ is generated in the last slot, only vertices from $A$ or $B$, respectively, can be generated in that same slot. Thus, we can decide that the last slot must either contain every vertex in $A$ or every vertex in $B$. 
    
    W.l.o.g., assume that in the last slot, we generate every vertex in $B$. This means that for every vertex $a_i\in A$ one vertex $b_j\in B$ must be generated. Consider an arbitrary vertex $a_i$ for which an arbitrary vertex $b_j$ is generated. In order for this to happen in the last slot, for every $a_l,b_j\in (E'\setminus a_i,b_j)$, $a_la_i$ must be active and every edge $a_la_i$ is an excess edge since set $A$ is an independent set in graph $G_{bipart}$. This means that for each vertex $b_j$ generation, any growth schedule must activate at least $deg(b_j)-1$ excess edges. By construction, graph $G_{bipart}$ has $O(n\log n)$ edges and thus, the sum of the degrees of vertices in $B$ is $O(n\log n)$. Therefore, any growth schedule has to activate $\Omega(n\log n)-n=\Omega(n\log n)$ excess edges.
\end{proof}

\section{Conclusion and Open Problems}\label{sec:Conclusion}

In this work, we considered a new model for highly dynamic networks, called growing graphs. The model, with no limitation to the edge-activation distance $d$, allows any target graph $G$ to be grown, starting from an initial singleton graph, but large values of $d$ are an impractical assumption with simple solutions and therefore we focused on cases where $d = 2$. We defined performance measures to quantify the speed (slots) and efficiency (excess edges) of the growth process, and we noticed that there is a natural trade off between the two. We proposed algorithms for general graph classes that try to balance speed and efficiency. If someone wants super efficient growth schedules (zero excess edges), it is impossible to even find a $n^{\frac{1}{3}- \varepsilon}$-approximation of the length of such a schedule, unless P~=~NP. For the special case of schedules of $\log n$ slots and $\ell=0$ excess edges, we provide a polynomial-time algorithm that can find such a schedule.

We believe that the present study, apart from opening new avenues of algorithmic research in graph-generation processes, can inspire work on more applied models of dynamic networks and network deployment, including ones in which the growth process is decentralized and exclusively controlled by the individual network processors and models whose the dynamics are constrained by geometry.

There is a number of interesting technical questions left open by the findings of this paper. It would be interesting to see whether there exists an algorithm that can decide the minimum number of edges required for any schedule to grow a graph $G$ or whether the problem is NP-hard. Note that this problem is equivalent to the cop-win completion problem; that is, $\ell$ is in this case equal to the smallest number of edges that need to be added to $G$ to make it a cop-win graph. We mostly focused on the two extremes of the $(k,\ell)$-spectrum, namely one in which $k$ is close to $\log n$ and the other is which $\ell$ close to zero. The in-between landscape remains to be explored. Finally, we gave some efficient algorithms, mostly for specific graph families, but there seems to be room for more positive results. It would also be interesting to study a combination of the growth dynamics of the present work and the edge-modification dynamics of \cite{michail2022distributed}, thus, allowing the activation of edges between vertices generated in past slots. 

\bibliographystyle{plainurl}
\bibliography{references}
\end{document}